\def\eps{\varepsilon}
\def\union{\cup}
\def\OPT{\mathrm{OPT}}
\newenvironment{proof}{\noindent   {\bf Proof.}}{\hspace*{\fill}$\Box$\par\vspace{2mm}}
\newtheorem{lemma}{Lemma}
\newtheorem{theorem}{Theorem}
\newtheorem{corollary}{Corollary}
\newtheorem{proposition}{Proposition}
\newtheorem{definition}{Definition}
\newcommand{\universe}{\mathsf{U}}
\newcommand{\bidder}{\beta}
\newcommand{\bi}{\mathbf{b}_{-i}}
\renewcommand{\b}{\mathbf{b}}
\newcommand{\Set}{\sigma}
\newcommand{\Ex}[1]{\mbox{\rm\bf E}\left[#1\right]}
\title{Combinatorial Auctions without Money}
\author{Dimitris Fotakis
\thanks{National Technical University of Athens, Greece, {\tt fotakis@cs.ntua.gr}.}
\and Piotr Krysta
\thanks{University of Liverpool, UK, {\tt pkrysta@liverpool.ac.uk}. This author is supported by EPSRC grant EP/K01000X/1.}
\and Carmine Ventre
\thanks{Teesside University, UK, {\tt c.ventre@tees.ac.uk}.}
}
\begin{document}

\date{}

\maketitle

\begin{abstract}
Algorithmic Mechanism Design attempts to marry computation and incentives, mainly by leveraging monetary transfers between designer and selfish agents involved. This is principally because in absence of money, very little can be done to enforce truthfulness. However, in certain applications, money is unavailable, morally unacceptable or might simply be at odds with the objective of the mechanism. For example, in Combinatorial Auctions (CAs), the paradigmatic problem of the area, we aim at solutions of maximum social welfare but still charge the society to ensure truthfulness. Additionally, truthfulness of CAs is poorly understood already in the case in which bidders happen to be interested in only two different sets of goods.

We focus on the design of incentive-compatible CAs without money in the general setting of $k$-minded bidders. We trade monetary transfers with the observation that the mechanism can detect certain lies of the bidders: i.e., we study truthful CAs with verification and without money. We prove a characterization of truthful mechanisms, which makes an interesting parallel with the well-understood case of CAs with money for single-minded bidders. We then give a host of upper bounds on the approximation ratio obtained by either deterministic or randomized truthful mechanisms when the sets and valuations are private knowledge of the bidders. (Most of these mechanisms run in polynomial time and return solutions with (nearly) best possible approximation guarantees.) We complement these positive results with a number of lower bounds (some of which are essentially tight) that hold in the easier case of public sets. We thus provide an almost complete picture of truthfully approximating CAs in this general setting with multi-dimensional bidders.
\end{abstract}

\section{Introduction}
Algorithmic Mechanism Design has as main scope the realignment of the objective of the designer with the selfish interests of the agents involved in the computation. Since the Internet, as the principal computing platform nowadays, is perhaps the main motivation to study problems in which these objectives are different, one would expect truthful mechanisms to have concrete and widespread practical applications. However, one of the principal obstacles to this is the assumption that the mechanisms use monetary transfers. On one hand, money may provoke (unreasonably) large payments \cite{ESS04}; on the other hand, while money might be reasonable in some applications, such as sponsored search auctions, little justification can be found for either the presence of a digital currency or the use of money at all. There are contexts in which money is morally unacceptable (such as, to support certain political decisions) or even illegal (as for example, in organ donations). Additionally, there are applications in which the objective of the computation collides with the presence of money.

Consider Combinatorial Auctions (CAs, for short), the paradigmatic problem in Algorithmic Mechanism Design. In a combinatorial auction we have a set $\universe$ of $m$ goods and $n$ bidders. Each bidder $i$ has a \emph{private} valuation function $v_i$ that maps subsets of goods to nonnegative real numbers ($v_i(\emptyset)$ is normalized to be $0$). Agents' valuations are monotone, i.e., for $S \supseteq T$ we have $v_i(S) \geq v_i(T)$. The goal is to find a partition $S_1,\ldots, S_n$ of $\universe$ such that $\sum_{i=1}^n v_i(S_i)$ -- the \emph{social welfare} -- is maximized. For this problem, we are in a paradoxical situation: whilst, on one hand, we pursuit the noble goal of maximizing the happiness of the society (i.e., the bidders), on the other, we consider it acceptable to charge the society itself (and then ``reduce'' its total happiness) to ensure truthfulness. CAs without money would avoid this paradox, automatically guarantee budget-balanceness (property which cannot, in general, be achieved together with social welfare maximization), and deal with budgeted bidders (a case which is generally hard to handle in presence of money).

In this paper, we focus on $k$-minded bidders, i.e., bidders are interested in obtaining one out of a collection of $k$ subsets of $\universe$. In this general setting, we want to study the feasibility of designing truthful CAs without money, returning (ideally, in polynomial time) reasonable approximations of the optimal social welfare. This is, however, an impossible task in general: it is indeed pretty easy to show that there is no better than $n$-approximate mechanisms without money, even in the case of single-item auctions and truthful-in-expectation mechanisms \cite{DG10}. We therefore focus on the model of CAs with verification, introduced in \cite{KV10}.
In this model, which is motivated by a number of real-life applications and has also been considered by economists \cite{Cel06}, bidders do not overbid their valuations on the set that they are awarded. The hope is that money can be traded with the verification assumption so to be able to design ``good'' (possibly, polynomial-time) mechanisms, which are truthful without money in a well-motivated -- still challenging -- model.


\subsection{Our contribution}
The model of CAs with verification is perhaps best illustrated by means of the following motivating scenario, discussed first in \cite{KV10}. Consider a government (auctioneer) auctioning business licenses for a set $\universe$ of cities under its administration. A business company (bidder) wants to get a license for some subset of cities (subset of $\universe$) to sell her product stock to the market. Consider the bidder's profit for a subset of cities $S$ to be equal to a unitary \emph{publicly known} product price (e.g., for some products, such as drugs, the government could fix a social price) times the number of product items available in the stocks that the company possesses in the cities comprising $S$.\footnote{Note that bidders will sell products already in stock (i.e., no production costs are involved as they have been sustained before the auction is run). This is conceivable when a government runs an auction for urgent needs (e.g., salt provision for icy roads or vaccines for pandemic diseases).} In this scenario, the bidder is strategic on her stock availability. As noted in literature in Economics \cite{Cel06}, a simple inspection on the stock consistency implies that bidders cannot overbid their profits: the concealment of existing product items in stock is costless but disclosure of unavailable ones is prohibitively costly. The assumption is 
 verification \emph{a posteriori}\footnote{A stronger model of verification would require bidders to be unable to overbid at all and not just on the awarded set. However, there appears to be weaker motivations for this model: the investment required on inspections would be considerable and rather  unrealistic.}: the inspection is carried on for the solutions actually implemented and then each bidder cannot overstate her valuation for the set she gets allocated, if any. It is important to notice that bidders can misreport sets and valuations for unassigned sets in an \emph{unrestricted} way. A formal definition of the model of CAs with verification and without money can be found in Section \ref{sec:model}.

In this model, we firstly give a complete characterization of algorithms that are incentive-compatible in both the cases in which the collections of $k$ sets, each bidder is interested in, are public (also referred to, as known bidders) and private (also known as, unknown bidders); valuations are always assumed to be private. We prove that truthfulness is characterized in this context in terms of \emph{$k$-monotone algorithms}: in the case of known bidders, if a bidder is awarded a set $S$ and augments her declaration for $S$ then a $k$-monotone algorithm must, in this new instance, grant her a set in her collection which is not worse than $S$ (i.e., a set with a valuation not smaller than her valuation for $S$). (This generalizes neatly to the case of unknown bidders.) There are two important facts we wish to emphasize about our characterizations. First, their significance stems from the fact that the corresponding problem of characterizing truthfulness for CAs with money and $k$-minded bidders is poorly understood: this is a long-standing open problem already for $k=2$, see, e.g., \cite[Chapter 12]{book}. Second, it is pretty easy to see that these notions generalize the properties of monotonicity shown to characterize truthfulness with money for single-minded bidders in \cite{MuaNis02,LOS} for known and unknown bidders, respectively. More generally, these properties of monotonicity are also proved to be sufficient to get truthful mechanisms for so-called generalized single-minded bidders \cite{BKV05}. This is an interesting development as, to the best of our knowledge, it is the first case in which a truthful mechanism with money can be ``translated'' into a truthful mechanism without money. 
The price to pay is ``only'' to perform verification to prevent certain lies of the bidders, while algorithms (and then their approximation guarantees) remain unchanged. Thus, 
in light of our results, previously known algorithms presented in, e.g., \cite{
LOS,BKV05,SODA10} 
assume a double relevance: they are truthful not only when money can be used, but also in absence of money when verification can be implemented. This equivalence gives also a strong motivation for our model. Naturally, the picture for the multi-dimensional case of $k>1$ is more blurry since, as we mention above, truthfulness with money is not well understood yet in these cases.

Armed with the characterization of truthfulness, we provide a number of upper and lower bounds on the approximation guarantee to the optimal social welfare of truthful CAs without money and with verification. The upper bounds hold for the harder case of unknown bidders. We give an upper bound of $O(b \sqrt[b]{m})$ in the case in which each good in $\universe$ has a supply $b$. This algorithm is deterministic, runs in polynomial time and adapts an idea of multiplicative update of good prices by \cite{KV12}. Following similar ideas, we also obtain randomized universally truthful mechanisms with approximation ratios of  $O(d^{1/b} \cdot \log(bm))$ and
 $O(m^{1/(b+1)} \cdot \log(bm))$, where $d$ is the maximum size of sets in the bidders' collections.
Our most significant deterministic polynomial-time upper bound is obtained, in the case of $b=1$, by
a simple greedy mechanism that exploits the characteristics of the model without money. 
This algorithm returns a $\min\{m, d+1\}$-approximate solution. These upper bounds are complemented by two simple randomized universally truthful CAs without money: the first achieves a $k$-approximation in exponential time; the second runs instead in polynomial-time and has a $O(\sqrt{m})$-approximation guarantee. We note here that all our polynomial-time upper bounds are computationally (nearly) best possible even when the algorithm has full knowledge of the bidders' data. We also would like to note that all, but the $k$-approximate, upper bounds given can be obtained in the setting in which bidders' declare so-called demand oracles, see, e.g., \cite[Chapter~11]{book}. 
%
We complete this study by proving a host of lower bounds on the approximation guarantee of truthful CAs without money for known bidders, without any computational assumption. (Note that the class of incentive-compatible algorithms
for known bidders is larger than the class for unknown bidders.) We prove the following lower bounds: $2$ for deterministic mechanisms; $5/4$ for universally truthful mechanisms; and, finally, $1.09$ for truthful-in-expectation mechanisms. This implies that the optimal mechanisms are not truthful in our model. Additionally, stronger lower bounds are proved for deterministic truthful mechanisms that use priority algorithms \cite{BL10}. These algorithms process (and take decisions) one \emph{elementary item} at the time, from a list of ordered items. The ordering can also change adaptively after each item is considered. (Note that our greedy mechanism falls in the category of non-adaptive priority algorithms since it process bids as items, which are ordered at the beginning.) We give a lower bound of $d$ for priority algorithms that process bids as elementary items (thus, essentially matching the upper bound of the greedy algorithm) and a lower bound of $m/2$ in the case in which the algorithm processes bidders as items.

Our bounds give a rather surprising picture of the relative power of verification versus money, thus suggesting that the two models are somehow incomparable. For example, we have a $O(\sqrt{m})$-approximate universally truthful mechanism, which matches the guarantee of the universally truthful mechanism with money given by \cite{Dob07}. (However, it is worth mentioning that the latter mechanism does not guarantee the approximation ratio since there is an error probability of $O(\log m/\sqrt{m})$ which cannot be reduced by, e.g., repeating the auction or otherwise truthfulness would be lost.) On the other hand, because of our lower bounds, we know that it is not possible to implement the optimal outcome without money; while, if we have exponential computational time, we can truthfully implement the optimal solution using VCG payments. However, if we restrict to polynomial-time mechanisms, then we have a deterministic truthful $\min\{m,d+1\}$-approximation mechanism without money, based on the aforementioned greedy algorithm; with money, instead, it is not known how to obtain any polynomial-time deterministic truthful mechanism with an approximation ratio better than the $O(m/\sqrt{\log{m}})$-approximation given in \cite{HKMT04}.
%
%
Moreover, \cite[Theorem~2]{BL10} proved a lower bound of $\Omega(m)$ on the approximation ratio of any truthful greedy mechanism with money for instances with demanded sets of cardinality at most $2$. Our greedy mechanism achieves an approximation ratio of $3$ for such instances, which implies that this 
lower bound 
does not hold in our model without money.
%
Additionally, we show that the greedy mechanism cannot be made truthful with money, which suggests that the model without money couples better with greedy selection rules. A general lower bound in terms of $m$ for CAs without money would shed further light on this debate of power of verification versus power of money. In this regard, we offer an interesting conjecture in Section \ref{sec:discussion}.


\subsection{Related work}
CAs as an optimization problem (without strategic consideration) is known to be NP-hard to solve optimally or even to approximate: neither an approximation ratio of $m^{1/2-\epsilon}$, for any constant $\epsilon>0$, nor of $O(d/\log d)$ can be obtained in polynomial time \cite{Nis02,LOS,HazanSS06}. As a consequence, a large body of literature has focused on the design of polynomial-time truthful CAs that return as good an approximate solution as possible, under assumptions (i.e., restrictions) on bidders' valuation domains. For single-minded domains, a host of truthful CAs have been designed (see, e.g., \cite{LOS,
MuaNis02,%
BKV05}). A more complete picture of what is known for truthful CAs under different restrictions of bidders' domains can be found in Figure 11.2 of \cite{book}.

The authors of \cite{KV10}, instead of restricting the domains of the bidders, proposed to restrict the way bidders lie. We are adopting here their model, adapting it to the case without money. The definition of CAs with verification is inspired by the literature on mechanisms with verification (see, e.g.,
\cite{NisRon99,
esa08,PenVen09} and references therein). 
Mechanism design problems where players have restrictions on the way of lying are also considered in theoretical economics. We next discuss some of the work more relevant to this paper. Green and Laffont \cite{GreenLaffont86} define and motivate a model of partial verification wherein bidders can only report bids from a type-dependent set of allowed messages; they characterize bidding domains for which the Revelation Principle holds in presence of this notion of restricted bidding. This model has been further studied by Singh and Wittman \cite{SinWit01} and later extended in \cite{CaraEC12} to allow probabilistic verification of bids outside the set of allowed messages.  The economic model that is closest to ours is the one studied in \cite{Cel06}; therein verification is supposed to take place for every outcome and not just for the implemented solution and is therefore stronger and less realistic than ours. Another related line of work tries to establish when a subset of incentive-compatibility constraints is sufficient to obtain full incentive-compatibility. \cite{Moore1984} considers a single good, single buyer optimal auction design and studies conditions under which no-overbidding constraints would also imply the full incentive compatibility of the underlying auction. Other papers studying this kind of questions are \cite{SherVohra2010,Carroll2011}. In particular, the results in \cite{Carroll2011} (and to some extent in \cite{CaraEC12}) seem to suggest that one has to focus only on ``one-sided'' verification, for otherwise a mechanism is truthful if and only if it satisfies a subset of incentive-compatibility constraints.

%
Our work fits in the framework of approximate mechanism design without money, initiated by \cite{PT09}. The idea is that for optimization problems where the optimal solution cannot be truthfully implemented without money, one may resort to the notion of approximation, and seek for the best approximation ratio achievable by 
truthful algorithms. Approximate mechanisms without money have been obtained for various problems, among them, for locating one or two facilities in metric spaces (see e.g., \cite{PT09,LSWZ10}). Due to the apparent difficulty of truthfully locating three or more facilities with a reasonable approximation guarantee,
notions conceptually similar to our notion of verification have been proposed \cite{NST10,FT10}. \cite{Kouts11} considers truthful mechanisms without money, for scheduling selfish machines whose execution times can be (strongly) verified.
%
%
%
%
The authors of \cite{DG10} consider the design of mechanisms without money for, what they call, the Generalized Assignment problem: $n$ selfish jobs compete to be processed by $m$ unrelated machines; the only private data of each job is the set of machines by which it can be actually processed. This problem can be modeled via maximum weight bipartite matching and the latter can be cast as a special case of CAs with demanded sets of cardinality $1$; then \cite[Algorithm~1]{DG10} can be regarded as a special case of our greedy algorithm.
%
%

\section{Model and preliminaries}\label{sec:model}
In a combinatorial auction we have a set $\universe$ of $m$ goods and $n$ agents, a.k.a.~bidders. Each $k$-minded XOR-bidder $i$ has a \emph{private} valuation function $v_i$ and is interested in obtaining only one set in a \emph{private} collection ${\cal S}_i$ of subsets of $\universe$, $k$ being the size of ${\cal S}_i$. The valuation function maps subsets of goods to nonnegative real numbers ($v_i(\emptyset)$ is normalized to be $0$). Agents' valuations are monotone: for $S \supseteq T$ we have $v_i(S) \geq v_i(T)$.


The goal is to find a partition $S_1,\ldots, S_n$ of $\universe$ such that $\sum_{i=1}^n v_i(S_i)$ --the \emph{social welfare}-- is maximized.
%
%
As an example consider $\universe=\{1,2,3\}$ and the first bidder to be interested in ${\cal S}_1=\left\{\{1\},\{2\},\{1,2\}\right\}$. The valuation function of bidder $i$ for 
$S \not\in {\cal S}_i$ is
\begin{equation}\label{eq:valuation}
v_i(S) = \left\{ \begin{array}{ll} \max_{S' \in {\cal S}_i:
S \supseteq S'} \{v_i(S')\} & \mbox{if } \exists S' \in {\cal S}_i \wedge S \supseteq
S', \\ 0 & \mbox{otherwise.}\end{array}\right.
\end{equation}
Accordingly, we say that $v_i(S) \neq 0$ (for $S \not\in {\cal S}_i$) is \emph{defined} by an inclusion-maximal set $S' \in {\cal S}_i$ such that $S' \subseteq S$ and $v_i(S')=v_i(S)$. If $v_i(S) =0$ then we say that $\emptyset$ defines it. So in the example above $v_1(\{1,2,3\})$ is defined by $\{1,2\}$.

Throughout the paper we assume that bidders are interested in sets of cardinality at most $d \in \mathbb{N}$, i.e.,
$d = \max \{|S|\; : \; \exists\; i\; s.t.\; S \in {\cal S}_i \wedge v_i(S) > 0 \}$.

Assume that the sets $S \in {\cal S}_i$ and the values $v_i(S)$ are private knowledge of the bidders. Then,
we want to design an \emph{allocation algorithm} ({\em auction}) that for a given input of bids from the bidders, outputs a feasible assignment (i.e., at most one of the requested sets is allocated to each bidder, and allocated sets are pair-wise disjoint). The auction should guarantee that no bidder has an incentive to misreport her preferences and maximize the social welfare (i.e., the sum of the valuations of the winning bidders).

More formally, we let ${\cal T}_i$ be a set of $k$ 
non-empty subsets of $\universe$ and let $z_i$ be the corresponding valuation function of agent $i$, i.e., $z_i : {\cal T}_i \rightarrow \mathbb{R}^+$. We call $b_i=(z_i,{\cal T}_i)$ a \emph{declaration} (or {\em bid}) of bidder $i$. We let $t_i = (v_i, {\cal S}_i)$ be the \emph{true type} of agent $i$.
We also let $D_i$ denote the set of all the possible declarations of agent $i$ and call $D_i$ the \emph{declaration domain} of bidder $i$.
Fix the declarations $\bi$ of all the agents but $i$. For any declaration $b_i = (z_i, {\cal T}_i)$ in $D_i$, we let $A_i(b_i, \bi)$ be the set that an auction $A$ on input $\b=(b_i,\bi)$ allocates to bidder $i$. If no set is allocated to $i$ then we naturally set $A_{i}(b_i, \bi)=\emptyset$. Observe that, according to \eqref{eq:valuation}, $v_i(\emptyset)=0$.
We say that $A$ is a truthful auction without money if the following holds for any $i$, $b_i \in D_i$ and $\bi$:
\begin{equation}\label{eq:truthful}
v_i(A_{i}(t_i, \bi))  \geq  v_i(A_{i}(\b)).
\end{equation}
We also define notions of truthfulness in the case of randomization: we either have universally truthful CAs, in which case the mechanism is a probability distribution over deterministic truthful mechanisms, or truthful-in-expectation CAs, where in \eqref{eq:truthful} we use the expected values, over the random coin tosses of the algorithm, of the valuations. We also say that a mechanism $A$ is an $\alpha$-approximation for CAs with $k$-minded bidders if for all $\mathbf{t}=(v_i, {\cal S}_i)_{i=1}^n$, $
\sum_{i=1}^n v_i(A_i(\mathbf{t})) \geq \OPT/\alpha,
$
$\OPT$ being the value of a solution with maximum social welfare for the instance $\mathbf{t}$.

Recall that $A_{i}(t_i,\bi)$ may not belong to the set of demanded sets ${\cal S}_i$. In particular, there can be several sets in ${\cal S}_i$ (or none) that are subsets of $A_{i}(t_i,\bi)$.
However, as observed above (cf. \eqref{eq:valuation}), the valuation is defined by a set in ${\cal S}_i \cup \{\emptyset\}$ which is an inclusion-maximal subset of set $A_{i}(t_i,\bi)$ that maximizes the valuation of agent $i$. We denote such a set as $\Set(A_i(t_i,\bi)|t_i)$, i.e., $v_i(A_{i}(t_i,\bi)) = v_i(\Set(A_i(t_i,\bi)|t_i))$. In our running example above, it can be for some algorithm $A$ and some $\mathbf{b}_{-1}$, that $A_1(t_1, \mathbf{b}_{-1})=\{1,2,3\} \not\in {\cal S}_1$ whose valuation is defined as observed above by $\{1,2\}$; the set $\{1,2\}$ is denoted as $\Set(A_1(t_1,\mathbf{b}_{-1})|t_1)$. (Similarly, we define $\Set(A_i(b_i,\bi)|b_i))\in {\cal T}_i \cup \{\emptyset\}$ w.r.t. $A_{i}(b_i,\bi)$ and declaration $b_i$.) Following the same reasoning, we let $\Set(A_i(b_i,\bi)|t_i)$ denote the set in ${\cal S}_i \cup \{\emptyset\}$ such that $v_i(A_{i}(b_i,\bi)) = v_i(\Set(A_i(b_i,\bi)|t_i))$.

We focus on \emph{exact} algorithms\footnote{
An
algorithm is exact if, to each bidder, either only one of the
declared sets is awarded or none.} in the sense of \cite{LOS}.
This means that $A_{i}(b_i,\bi) \in {\cal T}_i \cup \{\emptyset\}$.
This implies, by monotonicity of the valuations, that $A_{i}(b_i,\bi) = \Set(A_i(b_i,\bi)|b_i)$ and then
the definition of $\Set(\cdot|\cdot)$ yields the following for any 
$t_i, b_i \in D_i$:
\begin{equation}\label{eq:sets:exactness}
\Set(A_i(b_i,\bi)|t_i) \subseteq A_{i}(b_i,\bi)=\Set(A_i(b_i,\bi)|b_i).
\end{equation}

In the verification model each bidder can only declare lower
valuations for the set she is awarded. More formally, bidder
$i$ whose type is $t_i=(v_i, {\cal S}_i)$ can declare a type
$b_i=(z_i, {\cal T}_i)$ if and only if whenever $\Set(A_i(b_i,\bi)|b_i) \neq \emptyset$:
\begin{equation}\label{eq:ver:vals}
z_i(\Set(A_i(b_i,\bi)|b_i)) \leq v_i(\Set(A_i(b_i,\bi)|t_i)).
\end{equation}
In particular, bidder $i$ evaluates the assigned set $\Set(A_i(b_i,\bi)|b_i) \in {\cal T}_i$ as $\Set(A_i(b_i,\bi)|t_i) \in {\cal S}_i \cup \{\emptyset\}$, i.e., $v_i(\Set(A_i(b_i,\bi)|t_i))=v_i(\Set(A_i(b_i,\bi)|b_i))$. Thus the set $\Set(A_i(b_i,\bi)|b_i)$ can be used to verify \emph{a posteriori} that bidder $i$ has overbid
declaring $z_i(\Set(A_i(b_i,\bi)|b_i))>v_i(\Set(A_i(b_i,\bi)|$ $b_i))=v_i(\Set(A_i(b_i,\bi)|t_i))$. To be more concrete, consider the motivating scenario for CAs with verification above. The set of cities $\Set_{i}(A(b_i,\bi)|b_i)$ for which the government assigns licenses to bidder $i$ when declaring $b_i$, can be used a posteriori to verify overbidding by simply counting the product items available in the stock of the cities for which licenses were granted to bidder $i$.

When \eqref{eq:ver:vals} is not satisfied then the bidder is
caught lying by the verification step. We assume that this behavior is very undesirable for the bidder (e.g., for simplicity we can assume that in such a case the bidder loses prestige and the possibility to participate in the future auctions). This way \eqref{eq:truthful} is satisfied directly when \eqref{eq:ver:vals} does not hold (as in such a case a lying bidder would have an infinitely bad utility because of the assumption above). Thus in our model, truthfulness with verification and without money of an auction is fully captured by \eqref{eq:truthful} holding only for any $i$, $\bi$ and $b_i=(z_i, {\cal T}_i) \in D_i$ such that \eqref{eq:ver:vals} is fulfilled. Since our main focus is on this class of truthful mechanisms with verification and no money, we sometimes avoid to mention that and simply refer to truthful mechanisms/algorithms.

\paragraph{A graph-theoretic approach}
The technique we will use to derive truthful auctions
for multi-minded XOR bidders is a straightforward variation of the so-called cycle monotonicity technique. Consider an algorithm $A$. We will set up a
weighted graph for each bidder $i$ depending on $A$, bidder
domain $D_i$ and the declaration $\bi$ of all the bidders but
$i$ in which the non-existence of negative-weight edges
guarantees the truthfulness of the algorithm. This is a well known technique. More formally, fix algorithm $A$, bidder $i$ and
declarations $\bi$. The \emph{declaration graph} associated to
algorithm $A$ has a vertex for each possible declaration in the
domain $D_i$. We add an arc between $a=(z,{\cal T})$ and
$b=(w,{\cal U})$ in $D_i$ whenever a bidder of type $a$ can
declare to be of type $b$ obeying \eqref{eq:ver:vals}.
Following the definition of the
verification setting, edge $(a,b)$ belongs to the graph if and only if $z(\Set(b|a)) \geq w(\Set(b|b))$.~\footnote{To ease our notation
we let $\Set(b|a)$ be a
shorthand for $\Set(A_i(b,\bi)|a)$ when the algorithm, the bidder $i$ and declarations $\bi$ are clear from the context as in this case.}%
\footnote{Strictly speaking for an edge $(a,b)$ in the graph, we should require that $z(\Set(b|a)) \geq w(\Set(b|b))$ only whenever $\Set(b|b) \neq \emptyset$ as this set would be needed to verify. However, because of the monotonicity and normalization of valuations, $z(\Set(b|a)) \geq w(\Set(b|b))$ holds also whenever $\Set(b|b) = \emptyset$, since $\sigma(b|a)=\emptyset$ and $z(\emptyset)=w(\emptyset)=0$.} The weight of the edge $(a,b)$ is defined as
$z(\Set(a|a))-z(\Set(b|a))$ and thus encodes the loss that a bidder whose
type is $(z,{\cal T})$ incurs by declaring $(w,{\cal U})$. The
following result (whose proof is straightforward) relates the weight of edges of the declaration graph to the truthfulness of the algorithm.

\begin{proposition}\label{prop:cycles}
$A$ is a truthful auction with verification without money for CAs with $k$-minded bidders if and only if each declaration graph associated to algorithm $A$ does not have negative-weight edges.
\end{proposition}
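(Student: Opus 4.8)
The plan is to observe that, once the notation is unwound, ``truthful with verification and without money'' and ``no declaration graph has a negative-weight arc'' are literally the same assertion, quantifier by quantifier: both are statements indexed by a bidder $i$ and a profile $\bi$ of the remaining declarations. So I would fix such an $i$ and $\bi$ once and for all and reason about the single declaration graph $G = G(i,\bi)$ they induce, whose vertex set is $D_i$; establishing the equivalence for every such $G$ then yields the proposition.

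First I would record two ``dictionary'' identities, each an immediate unfolding of the definitions in Section~\ref{sec:model}. (a) For nodes $a = (z,{\cal T})$ and $b = (w,{\cal U})$ of $G$, read $a$ as a true type and $b$ as a report: the verification constraint \eqref{eq:ver:vals} permitting the misreport $a \to b$ is exactly $z(\Set(b|a)) \ge w(\Set(b|b))$, which is by definition the condition for the arc $(a,b)$ to be present in $G$; the degenerate case $\Set(b|b) = \emptyset$ is harmless, since then both sides equal $0$ by normalization and, as noted in the footnote, the arc is present anyway. (b) Taking $a = (z,{\cal T})$ in the role of $t_i$, exactness \eqref{eq:sets:exactness} gives $\Set(a|a) = A_i(a,\bi)$, hence $z(\Set(a|a)) = v_i(A_i(t_i,\bi))$, whereas $z(\Set(b|a)) = v_i(\Set(A_i(b,\bi)|t_i)) = v_i(A_i(b,\bi))$ by the defining property of $\Set(\cdot\,|\,t_i)$. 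Consequently the weight $z(\Set(a|a)) - z(\Set(b|a))$ of the arc $(a,b)$ equals precisely the slack $v_i(A_i(t_i,\bi)) - v_i(A_i(b_i,\bi))$ of inequality \eqref{eq:truthful} for the pair $t_i = a$, $b_i = b$.

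With (a) and (b) in hand, the two implications are one line each. For ``if'': assume no declaration graph has a negative-weight arc; take any $i,\bi$, true type $t_i = a$, and report $b_i = b$ obeying \eqref{eq:ver:vals}; by (a) the arc $(a,b)$ lies in $G$, its weight is $\ge 0$ by hypothesis, and by (b) this is exactly \eqref{eq:truthful}, so $A$ is truthful. For ``only if'': if some $G$ contained a negative-weight arc $(a,b)$, then by (a) a bidder of true type $a$ may report $b$ under verification, and by (b) the negativity of the weight reads $v_i(A_i(a,\bi)) < v_i(A_i(b,\bi))$, violating \eqref{eq:truthful} — and $a$ is a legitimate true type because $D_i$ is by definition the set of all possible types/declarations. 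The only place that requires a little care, and the one step I would flag as the ``obstacle'', is the bookkeeping that keeps a node used as a true type distinct from the same (or another) node used as a report, and in particular invoking exactness \eqref{eq:sets:exactness} at the right moment so that $\Set(a|a) = A_i(a,\bi)$ and so that the set $\Set(b|b)$ appearing in the arc condition is the very set the verification step inspects a posteriori. Note that, unlike full cycle monotonicity, no cycle argument enters: verification restricts incoming lies enough that truthfulness becomes an edge-local property, which is exactly why the statement only mentions negative-weight \emph{edges}.
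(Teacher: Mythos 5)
Your proof is correct and is precisely the ``straightforward'' argument the paper alludes to but omits: the arc-existence condition is a verbatim restatement of the verification constraint \eqref{eq:ver:vals}, and (via exactness) the arc weight is the slack in \eqref{eq:truthful}, so the two quantified statements coincide edge by edge. Your handling of the degenerate case $\Set(b|b)=\emptyset$ and your closing remark that verification makes truthfulness an edge-local rather than cycle-global property are both consistent with the paper's footnote and surrounding discussion.
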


In the case of mechanisms without verification, the graph above is complete. Such a graph can be used to check whether algorithms can be augmented with payments so to ensure truthfulness, both in the scenario with verification and without. Incentive-compatibility of algorithms is known to coincide with the case in which each graph has not negative-weight cycles \cite{Vohra07}. We will use this fact to show that certain algorithms cannot be made truthful with money.

\paragraph{Known vs Unknown $k$-minded bidders} 
In the discussion above, we consider the case in which the collection of $k$ sets, each bidder is interested in, is private knowledge. In this case, we refer to the problem of designing truthful auctions that maximize the social welfare as CAs with \emph{unknown $k$-minded bidders} (or, simply, unknown bidders). An easier scenario is the setting in which the sets are public knowledge and bidders are only strategic about their valuations. In this case, we instead talk about CAs with \emph{known $k$-minded bidders} (or, simply, known bidders). Our upper bounds hold for the more general case of unknown bidders, while the lower bounds apply to the larger class of mechanisms truthful for known bidders.

\section{Characterization of truthful mechanisms}
In this section we characterize the algorithms that are truthful in our setting, in both the scenarios of known and unknown bidders. Interestingly, the characterizing  property is algorithmic only and turns out to be a generalization of the properties used for the design of truthful CAs with money and no verification for single-minded bidders.

\subsection{Characterization for known bidders}
In this case, for each $k$-minded bidder $i$ we know ${\cal S}_i$. The following property generalizes monotonicity of \cite{MuaNis02} and characterizes truthful auctions without money and with verification.

\begin{definition}\label{def:kmono}
An algorithm $A$ is \emph{$k$-monotone} if the following holds for any $i$, any $\bi$, any $a \in D_i$: if $A_i(a,\bi)=S$ then for all $b \in D_i$ such that $b(S) \geq a(S)$ it holds $b(A_{i}(b,\bi)) \geq b(S)$.
\end{definition}

\begin{theorem}\label{thm:char:known}
An algorithm $A$ is truthful without money and with verification for known $k$-minded bidders if and only if $A$ is $k$-monotone.
\end{theorem}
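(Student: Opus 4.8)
The plan is to read off the characterization directly from Proposition~\ref{prop:cycles}: since that proposition already reduces truthfulness (with verification, without money) to the absence of negative‑weight edges in every declaration graph, it suffices to prove that, for each fixed bidder $i$ and each fixed profile $\bi$ of the other bidders, the declaration graph of $A$ on vertex set $D_i$ has no negative‑weight edge \emph{if and only if} the $k$‑monotonicity condition of Definition~\ref{def:kmono} holds for this $i$ and $\bi$.

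The first step is to simplify the declaration graph in the known setting. Here a declaration is $a=(z_a,{\cal S}_i)$ with ${\cal S}_i$ fixed, and by exactness $A_i(a,\bi)\in{\cal S}_i\cup\{\emptyset\}$. I would observe that for any declaration $a$ and any set $T\in{\cal S}_i\cup\{\emptyset\}$ one has $\Set(T\mid a)=T$: for $T=\emptyset$ this is immediate, and for $T\in{\cal S}_i$ monotonicity of $z_a$ forces $T$ itself to be the inclusion‑maximal subset of $T$ lying in ${\cal S}_i$ that maximizes $z_a$. Applying this with $T=A_i(a,\bi)$ and $T=A_i(b,\bi)$ collapses every $\Set(\cdot\mid\cdot)$ term appearing in the graph definition (using also \eqref{eq:sets:exactness}, i.e.\ $\Set(b\mid b)=A_i(b,\bi)$): the arc $(a,b)$ is present iff $z_a(A_i(b,\bi))\ge z_b(A_i(b,\bi))$, and its weight is $z_a(A_i(a,\bi))-z_a(A_i(b,\bi))$. (The degenerate cases where an allocated set is $\emptyset$ make both the relevant arc condition and the $k$‑monotonicity statement hold trivially, by nonnegativity of valuations.)

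With this reformulation both implications are short unfoldings of the definitions. For the ``$k$‑monotone $\Rightarrow$ no negative edge'' direction, take an arbitrary arc $(a,b)$ and set $T:=A_i(b,\bi)$; its presence means $z_a(T)\ge z_b(T)$. Apply Definition~\ref{def:kmono} with starting declaration $b$, allocated set $T=A_i(b,\bi)$, and deviating declaration $a$: its hypothesis $z_a(T)\ge z_b(T)$ is exactly the arc‑presence condition, so we conclude $z_a(A_i(a,\bi))\ge z_a(T)$, i.e.\ the weight of $(a,b)$ is nonnegative. Conversely, assume the graph has no negative‑weight edge and take any $a$ with $A_i(a,\bi)=S$ and any $b$ with $b(S)\ge a(S)$; this inequality is precisely the condition for the arc $(b,a)$ to be present, hence its weight $z_b(A_i(b,\bi))-z_b(S)$ is nonnegative, which is exactly $b(A_i(b,\bi))\ge b(S)$ — the conclusion required by $k$‑monotonicity.

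The only delicate part I expect is the bookkeeping around the $\Set(\cdot\mid\cdot)$ notation and the orientation conventions of the declaration graph: one must keep track of the fact that in the known case $\Set(b\mid a)=\Set(b\mid b)=A_i(b,\bi)$, and that the two directions of the proof use different arcs — the arc \emph{into} $a$ supplies the hypothesis of $k$‑monotonicity, while the arc \emph{out of} $a$ supplies its conclusion. Once these identifications are in place, no estimates are needed and the equivalence is immediate from Proposition~\ref{prop:cycles}.
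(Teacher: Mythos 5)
Your proof is correct and follows essentially the same route as the paper's: both reduce to Proposition~\ref{prop:cycles}, observe that in the known setting the edge $(b,a)$ exists iff $b(S)\ge a(S)$ for $S=A_i(a,\bi)$ and has weight $b(A_i(b,\bi))-b(S)$, and then read off the equivalence with Definition~\ref{def:kmono}. Your extra care in collapsing the $\Set(\cdot\mid\cdot)$ notation via exactness and monotonicity is a detail the paper leaves implicit, but the argument is the same.
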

\begin{proof}
Fix $i$, $\bi$ and consider the declaration graph associated to algorithm $A$. Take any edge of the graph $(b,a)$ and let $S$ denote $A_i(a,\bi)$. By definition, the edge exists if and only if $b(S) \geq a(S)$.

Now if the algorithm is $k$-monotone then we also have that $b(A_{i}(b,\bi)) \geq b(S)$ and then the weight $b(A_{i}(b,\bi)) - b(S)$ of edge $(b,a)$ is non-negative. Vice versa, assume that the weight of $(b,a)$ is non-negative: this means that whenever $b(S) \geq a(S)$ then it must be $b(A_{i}(b,\bi)) \geq b(S)$ and therefore $A$ is $k$-monotone. The theorem follows from Proposition \ref{prop:cycles}.
\end{proof}

Similarly to \cite{MuaNis02}, $k$-monotonicity implies the existence of thresholds (critical values/prices). Towards this end, it is important to consider the sets in ${\cal S}_i$ in decreasing order of (true) valuations. Accordingly, we denote ${\cal S}_i=\{S_i^{1}, \ldots, S_i^{k}\}$, with $v_i(S_i^{j}) > v_i(S_i^{l})$ if and only if $j < l$. 

\begin{lemma}\label{le:ths}
An algorithm $A$ is $k$-monotone if and only if for any $i$, any $\bi$, any $t_i$ there exist $k$ threshold values $\Theta_i^1(\bi), \ldots, \Theta_i^k(\bi)$ such that: if $b_i(S_i^j) > \Theta_i^j(\bi)$ and $b_i(S_i^\ell) < \Theta_i^\ell(\bi)$, for all $\ell < j$ then $\Set(A_i(b_i,\bi)|t_i)=S_i^j$. Moreover, if $b_i(S_i^\ell) < \Theta_i^\ell(\bi)$, for all $\ell \in [k]$ then $\Set(A_i(b_i,\bi)|t_i)=\emptyset$.
\end{lemma}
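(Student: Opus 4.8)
The plan is to prove the lemma in two directions, exploiting the $k$-monotonicity characterization (Theorem~\ref{thm:char:known}) and the threshold structure it induces. The key preliminary observation, which I would establish first, is a ``consistency'' fact: if $A$ is $k$-monotone and a bidder declares $b_i$ with $\Set(A_i(b_i,\bi)|t_i)=S_i^j$, then raising the bid on $S_i^j$ (keeping everything else fixed) cannot cause the bidder to be awarded a set of strictly smaller true value. This follows directly from Definition~\ref{def:kmono}, since $b_i'(S_i^j)\ge b_i(S_i^j)$ forces $b_i'(A_i(b_i',\bi))\ge b_i'(S_i^j)$; because we work with exact algorithms (cf.\ \eqref{eq:sets:exactness}) the awarded set lies in ${\cal T}_i\cup\{\emptyset\}$, and one has to translate this back into a statement about the defining set in ${\cal S}_i\cup\{\emptyset\}$ via the $\Set(\cdot|t_i)$ notation. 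This bookkeeping between declared sets and true-value-defining sets is the recurring technical nuisance throughout.

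For the forward direction (\emph{$k$-monotone $\Rightarrow$ thresholds exist}), I would define, for each $j\in[k]$ and each fixed $\bi$,
\[
\Theta_i^j(\bi)\;=\;\inf\bigl\{\,x\;:\;\text{declaring value }x\text{ on }S_i^j\text{ (and }0\text{ on }S_i^\ell,\ \ell\ne j\text{) yields }\Set(A_i(\cdot,\bi)|t_i)\supseteq S_i^j\,\bigr\},
\]
with the convention $\Theta_i^j(\bi)=+\infty$ if the set is empty. The monotonicity property guarantees this infimum behaves like a genuine threshold: above it the bidder gets (a set defining valuation) $S_i^j$, below it she does not get $S_i^j$ from that single-set declaration. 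The real work is to show the \emph{joint} statement: if $b_i(S_i^j)>\Theta_i^j(\bi)$ while $b_i(S_i^\ell)<\Theta_i^\ell(\bi)$ for all $\ell<j$, then $\Set(A_i(b_i,\bi)|t_i)=S_i^j$. One shows $\Set(A_i(b_i,\bi)|t_i)$ cannot be some $S_i^\ell$ with $\ell<j$ (a higher-value set): if it were, then by the consistency fact, raising $b_i$'s coordinate on $S_i^\ell$ up past $\Theta_i^\ell(\bi)$ could only keep awarding a set of true value $\ge v_i(S_i^\ell)$, contradicting the definition of $\Theta_i^\ell$ as the threshold below which $S_i^\ell$ is \emph{not} obtained (here one uses a hybrid/interpolation argument, moving from the single-set declaration to $b_i$ one coordinate at a time and invoking $k$-monotonicity at each step). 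Symmetrically, $\Set(A_i(b_i,\bi)|t_i)$ cannot be some $S_i^\ell$ with $\ell>j$ nor $\emptyset$, because $b_i(S_i^j)>\Theta_i^j(\bi)$ forces the bidder to be awarded something of true value at least $v_i(S_i^j)$; combined with the previous point (nothing of strictly higher value), the awarded defining set must be exactly $S_i^j$. The last sentence of the lemma ($b_i(S_i^\ell)<\Theta_i^\ell(\bi)$ for all $\ell$ $\Rightarrow$ $\Set=\emptyset$) is the same argument with no surviving candidate.

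For the reverse direction (\emph{thresholds exist $\Rightarrow$ $k$-monotone}), I take $a,b\in D_i$ with $A_i(a,\bi)=S$ and $b(S)\ge a(S)$, and must show $b(A_i(b,\bi))\ge b(S)$. Writing $S_i^j$ for the true-value-defining set $\Set(A_i(a,\bi)|t_i)$ (so $v_i(S)=v_i(S_i^j)$ by exactness), I consider the threshold profile for $b$. The hypothesis $b(S)\ge a(S)$ together with the fact that $a$ got $S_i^j$ (hence, by the threshold characterization applied to $a$, $a(S_i^j)>\Theta_i^j$ and $a(S_i^\ell)<\Theta_i^\ell$ for $\ell<j$ along the relevant chain) lets me locate where $b$ sits relative to the thresholds, and conclude $b$ is awarded a set of true value at least $v_i(S_i^j)=v_i(S)$, i.e.\ $b(A_i(b,\bi))\ge b(S)$ — the inequality $b(\cdot)$ of the awarded set being at least $b(S)$ follows since both are evaluated under the \emph{same} declaration $b$ and the awarded set defines a valuation (w.r.t.\ $b$) that is at least the value $b$ assigns to $S$. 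The main obstacle, as in the forward direction, is handling the possibility that $S\notin{\cal S}_i$ (or $S\notin{\cal T}_i$): one must consistently pass through the $\Set(\cdot\,|\cdot)$ operator and use \eqref{eq:sets:exactness} and monotonicity \eqref{eq:valuation} to ensure the value comparisons are between the right quantities. I expect this translation layer — not the monotonicity logic itself — to be where the proof needs the most care.
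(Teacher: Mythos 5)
Your proposal has the same two-direction skeleton as the paper's proof and is correct at the same (admittedly sketchy) level of rigor, but the harder direction is argued by a genuinely different route. For ``thresholds $\Rightarrow$ $k$-monotone'' you do essentially what the paper does: place $a$ in the threshold pattern for some $j$, use $b(S_i^j)\ge a(S_i^j)>\Theta_i^j(\bi)$ to force the award under $b$ to some $S_i^l$ with $l\le j$, and conclude. For ``$k$-monotone $\Rightarrow$ thresholds'', however, the paper runs an induction on $j$: the base case $j=1$ is the single-minded threshold of \cite{LOS}, and the inductive step defines $\Theta_i^j(\bi)$ only over declarations whose coordinates $1,\dots,j-1$ already lie below the previously constructed thresholds, so the threshold produced is by construction exactly the one quantified in the lemma. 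You instead define each $\Theta_i^j(\bi)$ explicitly as an infimum over single-set declarations and then must bridge back to arbitrary declarations with a coordinate-by-coordinate hybrid; this makes the thresholds concrete (and your one-step ``drop everything else to $0$'' argument for excluding $S_i^\ell$, $\ell<j$, is clean), but it adds the interpolation layer that the paper's induction avoids. Two cautions. First, your single-set declarations with value $0$ on the other sets sit uneasily with the no-ties assumption the authors invoke right after the lemma, so you should perturb them. Second, $k$-monotonicity only yields a lower bound on the \emph{declared} value $b_i(A_i(b_i,\bi))$, whereas the lemma pins down the awarded set in the ordering by \emph{true} valuations; your step ``nothing of strictly higher value survives, hence the award is exactly $S_i^j$'' does not by itself exclude an award of $S_i^\ell$ with $\ell>j$ but large declared value. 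The paper's own proof glosses over this identical point, so it is a shared looseness rather than a defect specific to your argument, but it is where a fully detailed write-up of either proof would need the most work.
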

%
The lemma above assumes that bidders have $k$ different valuations for each of their minds. This is a rather nonrestrictive way to model CAs for $k$-minded bidders. In the more general case in which bidders are allowed to have ties in their valuations, one can prove that monotonicity implies the existence of thresholds, while the other direction is not true in general but only under some assumption on $A_i(b_i, \bi)$.

\subsection{Characterization for unknown bidders}
The following property generalizes the property of monotonicity of algorithms defined by \cite{LOS} and characterizes truthful auctions without money and with verification.

\begin{definition}\label{def:ksetmono}
An algorithm $A$ is \emph{$k$-set monotone} if the following holds for any $i$, any $\bi$ and any $a=(z, {\cal T}) \in D_i$: if $A_i(a,\bi) = T$ then for all $b=(w, {\cal U})$ such that $\Set(T|b)=U$, $w(U) \geq z(T)$ we have $A_i(b,\bi)=S$ with $w(S) \geq w(U)$.
\end{definition}
To see how this notion generalizes \cite{LOS}, it is important to understand what is $U$. In detail, $\sigma(T|b)=U$, in the above definition, should be read as to indicate that bidder $i$ going from declaration $a$ to declaration $b$, substituted $T \in {\cal T}$ with $U \in {\cal U}$ and $U \subseteq T$. This is because $\sigma(T|b)$ denotes the set in the collection of sets demanded by a bidder of type $b$ which defines the valuation of $T$. Specifically, $U \in {\cal U}$ is such that $w(U)=w(T)$. (Note that if $T$ belonged to $\cal U$ then $U$ would be $T$ itself.)

\begin{theorem}\label{thm:char:unknown}
An algorithm $A$ is truthful without money and with verification for  $k$-minded bidders if and only if $A$ is $k$-set monotone.
\end{theorem}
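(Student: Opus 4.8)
The plan is to mirror the proof of Theorem~\ref{thm:char:known}, working with the declaration graph of Proposition~\ref{prop:cycles}, but now the domain $D_i$ ranges over pairs $(w,\mathcal{U})$ where the collection $\mathcal{U}$ itself varies. First I would fix a bidder $i$, a profile $\bi$, and examine an arbitrary arc $(b,a)$ of the declaration graph, writing $a=(z,\mathcal{T})$, $b=(w,\mathcal{U})$, and $T:=A_i(a,\bi)$. By the edge rule recalled just before Proposition~\ref{prop:cycles}, the arc $(b,a)$ is present precisely when $z(\Set(a|a)) \ge$ \dots wait---more carefully, the arc $(b,a)$ encodes that a bidder of type $b$ may declare $a$, so by \eqref{eq:ver:vals} it exists iff $w(\Set(a|b)) \ge z(\Set(a|a))$; since $A$ is exact, $\Set(a|a)=A_i(a,\bi)=T$, so the condition is $w(\Set(T|b)) \ge z(T)$, i.e.\ setting $U:=\Set(T|b)$ (the inclusion-maximal set in $\mathcal{U}\cup\{\emptyset\}$ with $w(U)=w(T)$ defining $T$'s valuation under $b$), the arc exists iff $w(U)\ge z(T)$. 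This is exactly the hypothesis in the antecedent of $k$-set monotonicity, so the translation between ``arc present'' and ``$(\Set(T|b)=U$ and $w(U)\ge z(T))$'' is the first thing to nail down precisely.

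Next I would compute the weight of the arc $(b,a)$. By the definition in the excerpt the weight is $w(\Set(b|b)) - w(\Set(a|b)) = w(A_i(b,\bi)) - w(U)$, using exactness again for the first term ($\Set(b|b)=A_i(b,\bi)=:S$) and the identification $\Set(a|b)=\Set(T|b)=U$ for the second. Hence the arc has nonnegative weight iff $w(S)\ge w(U)$, which is exactly the conclusion of $k$-set monotonicity. So: $A$ has no negative-weight arc in any declaration graph $\iff$ for every $i,\bi$ and every present arc $(b,a)$ we have $w(A_i(b,\bi)) \ge w(U)$ $\iff$ $A$ is $k$-set monotone. Proposition~\ref{prop:cycles} then closes the argument in both directions.

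The one genuine subtlety---and the step I expect to be the main obstacle---is making sure that ``$\Set(a|b)=U$'' in the weight formula really coincides with the set $U=\Set(T|b)$ named in Definition~\ref{def:ksetmono}, and that the edge-existence footnote's treatment of the $U=\emptyset$ case (where $w(\emptyset)=z(\emptyset)=0$ makes the inequality automatic) is handled uniformly. Concretely, $\Set(a|b)$ is shorthand for $\Set(A_i(a,\bi)\,|\,b)=\Set(T|b)$, so the two are literally the same object; I would spell this out, noting that when $\Set(T|b)=\emptyset$ the arc is present trivially and its weight is $w(A_i(b,\bi)) - 0 \ge 0$, which is also what $k$-set monotonicity demands in that degenerate case. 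Once this bookkeeping is in place, the equivalence ``nonnegative arc weight $\Leftrightarrow$ $k$-set monotone condition'' is immediate and, as in Theorem~\ref{thm:char:known}, the theorem follows from Proposition~\ref{prop:cycles}. The remaining discussion in the paragraph after Definition~\ref{def:ksetmono}---that $U\subseteq T$, i.e.\ the bidder ``substitutes $T$ by a subset $U$''---is a consequence of \eqref{eq:sets:exactness} and does not need to be reproved here; I would only cite it to reassure the reader that the antecedent of the definition matches the verification constraint.
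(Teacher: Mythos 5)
Your proposal is correct and follows essentially the same route as the paper's proof: fix $i$ and $\bi$, identify the existence of the arc $(b,a)$ with the condition $w(U)\ge z(T)$ for $U=\Set(T|b)$, identify the arc weight with $w(A_i(b,\bi))-w(U)$, and invoke Proposition~\ref{prop:cycles}. The extra bookkeeping you do (exactness giving $\Set(a|a)=T$, and the $U=\emptyset$ case) is exactly what the paper relegates to its setup and footnotes.
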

\begin{proof}
Fix $i$, $\bi$ and consider the declaration graph associated to algorithm $A$. Take any edge of the graph $(b=(w, {\cal U}),a=(z, {\cal T}))$ and let $T$ denote $A_i(a,\bi)$. By definition, the edge exists if and only if $w(U) \geq z(T)$, with $U=\sigma(T|b)$.

Now if the algorithm is $k$-set monotone then we also have that $w(A_{i}(b,\bi)) \geq w(U)$ and then the weight $w(A_{i}(b,\bi)) - w(U)$ of edge $(b,a)$ is non-negative. Vice versa, assume that the weight of $(b,a)$ is non-negative: this means that whenever $w(U) \geq z(T)$ then it must be $w(A_{i}(b,\bi)) \geq w(U)$ and therefore $A$ is $k$-set monotone. The theorem follows from Proposition \ref{prop:cycles}.
\end{proof}

Observe, that our characterization of Theorem \ref{thm:char:unknown}
for unknown single-minded bidders implies the existence of a threshold for any set. Namely, let $A$ be a given $1$-set monotone algorithm, and let $i$ be a fixed bidder with declaration $(z, {\cal T}) \in D_i$. Then for the set $T \in {\cal T}$ (here, $|{\cal T}|$ = 1), algorithm $A$ is monotone with respect to $z(T)$ and thus there exists a critical threshold. It is not hard to see that thresholds exist also for unknown $k$-minded bidders, with $k>1$.

The result in Theorem \ref{thm:char:unknown} also relates to the characterization of truthful CAs with money and no verification 
(see, e.g., Proposition~9.27 in \cite{book}). While the two characterizations look pretty similar, there is an important difference: in the setting with money and no verification, each bidder optimizes her valuation minus the critical price over all her demanded sets; in the setting without money and with verification, each bidder optimizes only her valuation over all her demanded sets among those that are bounded from below by the threshold.

\subsection{Implications of our characterizations}
We discuss here two conceptually relevant consequences of our results above. In a nutshell, a reasonably large class of truthful mechanisms with money can be turned into truthful mechanisms without money, by using the verification paradigm.

\subsubsection{Single-minded versus multi-minded bidders}

Observe that our characterization of truthful mechanisms without money for CAs with $1$-minded bidders with known and unknown bidders is exactly the same as the characterization of truthful mechanisms with money in this setting, see, e.g., pages 274--275 in \cite{book}. This means that the two classes of truthful mechanisms in fact coincide. More formally, we have:


\begin{proposition}\label{prop:1-minded-equivalence}
Any (deterministic) truthful  $\alpha$-approximation mechanism with money for single-minded CAs can be turned into a (deterministic) truthful  $\alpha$-approximation mechanism without money with verification for the same problem, and vice versa. 
This holds 
for single-minded CAs with either known or unknown bidders.
\end{proposition}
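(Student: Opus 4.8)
The plan is to show that truthful mechanisms with money for single-minded CAs and truthful mechanisms without money with verification are characterized by the \emph{same} algorithmic property, namely monotonicity (in the sense of \cite{MuaNis02,LOS}), and hence any $\alpha$-approximation algorithm realizing one realizes the other. The key observation is that both characterizations, in the single-minded case, reduce to the statement: ``if bidder $i$ wins set $T$ with bid $z(T)$, then $i$ still wins $T$ (or a set she values at least as much — which for single-minded means $T$ itself, since there is only one demanded set) whenever she raises her bid on $T$.'' For $k=1$, Definition~\ref{def:kmono} (known bidders) and Definition~\ref{def:ksetmono} (unknown bidders) specialize exactly to this: in the unknown case, $U=\sigma(T|b)=T$ because the collection has a single set and $w(U)=w(T)$ forces $U=T$, so $k$-set monotonicity becomes ``$A_i(a,\bi)=T$ and $w(T)\ge z(T)$ imply $A_i(b,\bi)=T$,'' which is precisely the monotonicity of \cite{LOS}.

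First I would recall the standard characterizations for CAs \emph{with} money: a single-minded mechanism is truthful with money iff the underlying algorithm is monotone (a winner who raises her bid keeps winning) and the payments are the critical/threshold values — see pages~274--275 and Proposition~9.27 in \cite{book}. Second, I would invoke Theorem~\ref{thm:char:known} and Theorem~\ref{thm:char:unknown} of this paper, specialized to $k=1$, together with the threshold characterization noted after Theorem~\ref{thm:char:unknown} (for unknown bidders) and Lemma~\ref{le:ths} (for known bidders), to conclude that a single-minded mechanism is truthful without money with verification iff the underlying algorithm is monotone in exactly the same sense. Third, I would put the two together: given a truthful-with-money $\alpha$-approximation mechanism $(A,p)$, the algorithm $A$ is monotone, hence by our characterization $A$ (now run \emph{without} payments, but with verification) is truthful without money, and it is still the same algorithm so it is still an $\alpha$-approximation; conversely, given a truthful-without-money-with-verification $\alpha$-approximation mechanism, its algorithm $A$ is monotone, so equipping it with the threshold payments yields a truthful-with-money mechanism computing the same allocation, hence the same approximation ratio. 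The argument is identical for known and for unknown bidders, using the known/unknown versions of the two characterizations respectively.

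The main subtlety — really the only place one must be careful — is the reading of $\sigma(T|b)$ in the unknown single-minded case: one must check that ``going from $a$ to $b$'' with $|\mathcal{T}|=|\mathcal{U}|=1$ genuinely forces the single set in $\mathcal{U}$ to be $T$ itself when $w(U)=w(T)$ (i.e.\ that a single-minded bidder has no freedom to ``rename'' her set in a way that would make our notion strictly stronger or weaker than \cite{LOS}'s monotonicity). Once this is pinned down, the equivalence of the two monotonicity notions is immediate, and the rest is bookkeeping: the allocation algorithm is untouched in both directions, so the approximation guarantee transfers verbatim, and determinism is preserved since threshold payments are deterministic functions of the (deterministic) allocation. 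Hence no genuine obstacle remains; the proposition follows by chaining the four characterization results with the observation that $A$ itself — and thus $\sum_i v_i(A_i(\mathbf t))$ — is what both frameworks leave invariant.
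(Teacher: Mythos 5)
Your proposal is correct and follows essentially the same route as the paper, which states this proposition as an immediate consequence of Theorems~\ref{thm:char:known} and \ref{thm:char:unknown} specialized to $k=1$ together with the classical characterization of truthfulness with money for single-minded bidders (monotonicity plus critical payments, cf.\ pages 274--275 of \cite{book}): the two frameworks are characterized by the same algorithmic property, so the allocation algorithm, and hence the approximation ratio, transfers verbatim in both directions. One small correction to the subtlety you flag: in the unknown single-minded case, $w(U)=w(T)$ does \emph{not} force $U=T$. By \eqref{eq:valuation}, a single-minded bidder with collection $\{U\}$ and $U\subseteq T$ automatically has $w(T)=w(U)$, so $\Set(T|b)=U$ may be any subset of $T$; the correct specialization of Definition~\ref{def:ksetmono} is ``a winner of $T$ at value $z(T)$ remains a winner when declaring $(U,w)$ with $U\subseteq T$ and $w(U)\ge z(T)$,'' which is precisely the monotonicity of \cite{LOS} (allowing the set to shrink as well as the bid to rise). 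Since that is exactly the notion you need to match, the equivalence goes through unchanged; your argument just should not collapse $U$ to $T$.
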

%

\subsubsection{Beyond CAs}

It is known that a slight
generalization of monotonicity 
of \cite{
LOS} 
is a sufficient property to obtain truthful mechanisms with money also for problems involving \emph{generalized single-minded bidders} \cite{BKV05}. Intuitively, generalized single-minded bidders have $k$ private numbers associated to their type: their valuation for a solution is equal to the first of these values or minus infinity, depending on whether the solution asks the agent to ``over-perform'' on one of the other $k-1$ parameters, see \cite{BKV05} 
for details. By Theorem \ref{thm:char:unknown}, all the truthful mechanisms with money designed for this quite general type of bidders can be turned into truthful mechanisms without money, when the verification paradigm is justifiable. As a direct corollary of our characterization, we then have a host of truthful mechanisms without money and with verification for the (multi-objective optimization) problems studied in \cite{BKV05,SODA10}.

\section{Upper bounds for unknown bidders}
In this section we present our upper bounds for CAs with unknown $k$-minded bidders. 

\subsection{CAs with arbitrary supply of goods}
In this section, we consider the more general case in which elements in $\universe$ are available in $b$ copies each. Note that the characterizations above hold also in this multi-unit case. We present three polynomial-time algorithms, which are truthful for CAs with unknown bidders: the first is deterministic, the remaining are randomized and give rise to universally truthful CAs.


\begin{algorithm}[b]\label{f:MPU-algo}
\DontPrintSemicolon
 \caption{Multiplicative price update algorithm}

For each good $e \in \universe$ do $p_e^1 := p_0$. \;

For each bidder $i = 1, 2, \ldots, n$ do \;
 \hspace{20pt} Set $S_i := \mbox{argmax } \{v_i(S): S \in {\cal S}_i \mbox{ such that } v_i(S) \geq \sum_{e \in S} p^i_e\}$.\label{l:MPU:setchoice}\;
 \hspace{20pt} Update for each good $e \in S_i$: $p_e^{i+1} := p_{e}^i \cdot r$.\;
Return $S = (S_1, S_2, \ldots, S_n)$. \;
\end{algorithm}

\subsubsection{Deterministic truthful CAs}\label{s:MPU}
We adapt here the overselling multiplicative price update algorithm and its analysis from \cite{KV12} to our setting without money. The algorithm considers bidders in an arbitrary given order. We assume that the algorithm is given a parameter $\mu \ge 1$ such that $\mu/2 \leq v_{\max} < \mu$.
We will assume that such $\mu$ is known to the mechanism, and afterwards we will modify our mechanism and show how to truthfully guess $v_{\max}$.

Algorithm~\ref{f:MPU-algo} processes the bidders in an arbitrary given order, $i = 1,2,\ldots,n$. The algorithm starts with some relatively small, uniform price $p_0 = \frac{\mu}{4bm}$ of each item. When considering bidder $i$, the algorithm uses the current prices as defining thresholds and allocates to bidder $i$ a set $S_i$ in her demand set $ {\cal S}_i$ that has the maximum valuation $v_i(S_i)$ among all her sets with valuations above the thresholds. Then the prices of the elements in the set $S_i$ are increased by a factor $r$ and the next bidder is considered. 

 Let $\ell_{e}^i$ be the number of copies of good $e \in \universe$ allocated to all bidders preceding bidder $i$ and $\ell_{e}^* = \ell_{e}^{n+1}$ denote the total allocation of good $e$ to  all bidders. Let, moreover, $p_{e}^* = p_0 \cdot r^{\ell_{e}^*}$ be good $e$'s price at the end of the algorithm.

We claim now that if $p_0$ and $r$ are chosen so that $p_0 r^b = \mu$, then the allocation $S = (S_1, \ldots, S_n)$ output by Algorithm \ref{f:MPU-algo} is feasible, that is, it assigns at most $b$ copies of each good to the bidders. The argument is as follows. Consider any good $e \in \universe$. Notice that when the $b$-th copy of good $e$ is sold to any bidder then its price is updated to $p_0 r^b = \mu > v_{\max}$. Thus, good $e$ alone has a price which is above the maximum valuation of any bidder, and so no further copy will be sold.

Next we prove two 
lower bounds on the social welfare $v(S) = \sum_{i=1}^n v_i(S_i)$ of the sets $S_1,\ldots,S_n$ chosen by Algorithm~\ref{f:MPU-algo}. Let $\OPT$ denote the optimal social welfare, and recall that $p_{e}^*$ denotes the final price of good $e \in \universe$.

\begin{lemma}\label{lemma-MPU-approx1}
It holds $v(S) \geq \frac{1}{r-1} \left(\sum_{e \in U} p_{e}^* - m p_0\right)$
and $v(S) \geq \OPT - b \sum_{e \in U} p_{e}^*$.
\end{lemma}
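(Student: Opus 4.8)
\textbf{Proof plan for Lemma~\ref{lemma-MPU-approx1}.}
The plan is to track, for each good $e$, how its price evolves as copies are sold, and to charge the valuation of each winning bidder against the prices of the goods in the set she receives. For the first inequality, I would observe that when bidder $i$ receives $S_i$ in line~\ref{l:MPU:setchoice}, the selection rule guarantees $v_i(S_i) \geq \sum_{e \in S_i} p_e^i$. Summing over all winning bidders gives $v(S) = \sum_i v_i(S_i) \geq \sum_i \sum_{e \in S_i} p_e^i$. Now fix a good $e$: each time a copy of $e$ is sold, say as the $j$-th copy, the price contributed to the right-hand side is $p_0 r^{j-1}$, and immediately afterwards the price becomes $p_0 r^{j}$. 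Hence the total contribution of good $e$ to $\sum_i \sum_{e \in S_i} p_e^i$ is $\sum_{j=1}^{\ell_e^*} p_0 r^{j-1} = p_0 \frac{r^{\ell_e^*} - 1}{r-1} = \frac{1}{r-1}\left(p_e^* - p_0\right)$. Summing this telescoped identity over all $e \in \universe$ yields $v(S) \geq \frac{1}{r-1}\left(\sum_{e \in U} p_e^* - m p_0\right)$, which is the first claim.

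For the second inequality, I would compare the algorithm's allocation with a fixed optimal allocation $O = (O_1, \ldots, O_n)$ of social welfare $\OPT$. The key case analysis concerns each bidder $i$ and her optimal set $O_i$: either bidder $i$ was awarded a set of value at least $v_i(O_i)$ by the algorithm, or she was not, in which case the selection rule in line~\ref{l:MPU:setchoice} must have failed for $O_i$ at the time $i$ was processed, i.e.\ $v_i(O_i) < \sum_{e \in O_i} p_e^i \leq \sum_{e \in O_i} p_e^*$, using that prices are nondecreasing over time. Summing over all $i$, the total value lost relative to $\OPT$ is at most $\sum_i \sum_{e \in O_i} p_e^*$; since $O$ is a feasible allocation each good appears in at most $b$ of the sets $O_i$, so $\sum_i \sum_{e \in O_i} p_e^* \leq b \sum_{e \in U} p_e^*$. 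Combining, $\OPT \leq v(S) + b\sum_{e \in U} p_e^*$, which rearranges to the second claim. (A small point to handle cleanly: a bidder who is awarded a set of value $\geq v_i(O_i)$ contributes nonnegatively, so dropping such bidders only decreases the bound; and one should note that $v_i(O_i)$ here refers to the valuation the algorithm sees, which is fine since all of this is in terms of the declared bids.)

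The main obstacle I anticipate is getting the bookkeeping in the second inequality exactly right, in particular making sure the "not awarded enough" case is stated in a way that genuinely follows from the argmax rule: one must argue that if $v_i(S_i) < v_i(O_i)$ then $O_i$ was \emph{not} an eligible candidate when $i$ was processed (otherwise the argmax would have picked it or something at least as good), hence $v_i(O_i) < \sum_{e \in O_i} p_e^i$. There is also the minor subtlety that $O_i$ need not lie in ${\cal S}_i$ literally in the way the argmax ranges, but by the definition of valuations via \eqref{eq:valuation} one may assume without loss of generality that each $O_i$ is an inclusion-minimal demanded set, so this causes no trouble. The first inequality, by contrast, is a routine geometric-series telescoping once the per-good contribution is identified.
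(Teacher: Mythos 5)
Your proposal is correct and follows essentially the same route as the paper's proof: the first bound via $v_i(S_i)\geq\sum_{e\in S_i}p_e^i$ and the per-good geometric sum $p_0\sum_{k=0}^{\ell_e^*-1}r^k=\frac{1}{r-1}(p_e^*-p_0)$, and the second via the per-bidder inequality $v_i(S_i)\geq v_i(T_i)-\sum_{e\in T_i}p_e^i$ (valid in both cases of your case analysis), monotonicity of prices, and feasibility of the optimum to extract the factor $b$. The side issues you flag (eligibility of $O_i$ under the argmax, inclusion-minimality of demanded sets) are handled the same way in the paper and pose no difficulty.
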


Combining the above bounds 
yields the following result for the algorithm.

\begin{theorem}\label{theorem:improved-Bartal}
Algorithm \ref{f:MPU-algo} with $p_0 = \frac{\mu}{4 bm}$ and  $r = (4bm)^{1/b}$ produces a feasible allocation $S$ such that
$v(S) \ge  \frac{\OPT}{2(b(r-1) + 1)} \ge \frac{\OPT}{O(b \cdot (m)^{1/b})}$.
\end{theorem}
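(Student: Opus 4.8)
The plan is to combine the two lower bounds from Lemma~\ref{lemma-MPU-approx1} with a suitable weighting so that the final prices $\sum_{e \in \universe} p_e^*$ cancel out. Concretely, I would multiply the first bound $v(S) \ge \frac{1}{r-1}\left(\sum_{e \in \universe} p_e^* - m p_0\right)$ by $b(r-1)$ and add it to the second bound $v(S) \ge \OPT - b\sum_{e \in \universe} p_e^*$. The $b\sum_{e} p_e^*$ terms cancel, leaving $\bigl(b(r-1)+1\bigr) v(S) \ge \OPT - b m p_0$. So the first step is just this linear combination; it is routine algebra once one sees that the coefficient $b(r-1)$ is exactly what makes the $p_e^*$ terms disappear.

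Next I would plug in the chosen parameters $p_0 = \frac{\mu}{4bm}$ and $r = (4bm)^{1/b}$, which are exactly the values making $p_0 r^b = \mu$ (the feasibility condition already verified in the text). With $p_0 = \frac{\mu}{4bm}$ we get $b m p_0 = \mu/4 \le v_{\max}/2 \le \OPT/2$, since $v_{\max} \le \OPT$ and $\mu/2 \le v_{\max}$. Substituting into $\bigl(b(r-1)+1\bigr) v(S) \ge \OPT - b m p_0$ yields $\bigl(b(r-1)+1\bigr) v(S) \ge \OPT - \OPT/2 = \OPT/2$, i.e. $v(S) \ge \frac{\OPT}{2(b(r-1)+1)}$, which is the first inequality claimed.

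For the second inequality I would simply estimate $b(r-1)+1 \le b r = b(4bm)^{1/b} = O\bigl(b \cdot m^{1/b}\bigr)$, absorbing the constant $4^{1/b} \le 4$ and the factor $b^{1/b} \le 2$ into the $O(\cdot)$; this gives $v(S) \ge \frac{\OPT}{O(b \cdot m^{1/b})}$. Truthfulness and polynomial running time are inherited from the general framework: the algorithm is an instance of threshold-based allocation (prices act as thresholds in line~\ref{l:MPU:setchoice}), hence $k$-set monotone and truthful by Theorem~\ref{thm:char:unknown}, and each bidder's step is a single demand-type computation. The only genuine subtlety — and the step I would be most careful about — is the handling of the unknown $v_{\max}$: the argument above assumes the mechanism is handed a $\mu$ with $\mu/2 \le v_{\max} < \mu$, and one must still argue (as the text promises to do afterwards) that guessing $\mu$ can be done without destroying truthfulness. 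Within the scope of this theorem, though, I would treat $\mu$ as given and note that the guessing reduction is deferred, so the proof proper is the short chain of inequalities above.
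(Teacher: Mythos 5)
Your proof is correct and follows essentially the same route as the paper: taking the linear combination of the two bounds in Lemma~\ref{lemma-MPU-approx1} with coefficient $b(r-1)$ so that the $\sum_e p_e^*$ terms cancel, then bounding $b m p_0 = \mu/4 \le \OPT/2$ via $\mu \le 2 v_{\max} \le 2\,\OPT$. The paper presents this as a chain of inequalities rather than an explicit cancellation, but the argument is identical; your remarks on truthfulness and guessing $\mu$ are correctly identified as outside the scope of this theorem.
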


\begin{proof}
Feasibility follows from the fact that $p_0 r^b = \mu$. The first bound of Lemma~\ref{lemma-MPU-approx1} gives
$b (r-1) v(S) \geq b \sum_{e \in U} p_{e}^* - b m p_0$, which by 
the second bound is
$b (r-1) v(S) \geq b \sum_{e \in U} p_{e}^* - b m p_0 \geq \OPT - v(S) - b m p_0 \geq \OPT/2 - v(S)$, where the last inequality follows by $v_{\max} \leq \OPT$.
This finally gives us $v(S) \ge \frac{\OPT}{2(b(r-1) + 1)}$. 
\end{proof}

\begin{theorem}\label{thm:UB:supply}
Algorithm \ref{f:MPU-algo} is a truthful mechanism without money and with verification for CAs with unknown $k$-minded bidders.
\end{theorem}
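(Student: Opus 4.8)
The plan is to invoke the characterization of Theorem~\ref{thm:char:unknown}: it suffices to prove that Algorithm~\ref{f:MPU-algo} is $k$-set monotone in the sense of Definition~\ref{def:ksetmono}. The structural fact that makes this work is that Algorithm~\ref{f:MPU-algo} processes bidders in an order fixed in advance, independently of the declarations; hence, once $\bi$ is fixed, the prices $p^i_e$ that bidder~$i$ faces on Line~\ref{l:MPU:setchoice} are completely determined by $b_1,\dots,b_{i-1}$ and do \emph{not} depend on bidder~$i$'s own declaration. Write $p_e := p^i_e$ for these thresholds. On input $a=(z,\mathcal{T})$, bidder~$i$ is then allocated the demanded set $T\in\mathcal{T}$ of largest declared valuation $z(T)$ among those satisfying $z(T)\ge\sum_{e\in T}p_e$ (and $\emptyset$ if no such set exists), by Line~\ref{l:MPU:setchoice}.

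First I would dispose of the degenerate cases. If $A_i(a,\bi)=T=\emptyset$, or if the deviation $b=(w,\mathcal{U})$ under consideration has $\sigma(T|b)=\emptyset$, then $w(\sigma(T|b))=0$ and the required inequality $w(A_i(b,\bi))\ge w(\sigma(T|b))$ holds trivially since valuations are nonnegative. So assume $T\neq\emptyset$ and, for a deviation $b=(w,\mathcal{U})$, that $U:=\sigma(T|b)\neq\emptyset$ and $w(U)\ge z(T)$ — this last inequality being exactly the hypothesis under which $k$-set monotonicity demands something (it is the edge-existence condition of the declaration graph). Since $T$ was selected on Line~\ref{l:MPU:setchoice} we have $z(T)\ge\sum_{e\in T}p_e$, and by definition of $\sigma(\cdot|\cdot)$ the set $U$ lies in $\mathcal{U}$ and satisfies $U\subseteq T$.

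The heart of the argument is then a one-line threshold computation: since prices are nonnegative and $U\subseteq T$, $\sum_{e\in U}p_e\le\sum_{e\in T}p_e\le z(T)\le w(U)$. Hence, when bidder~$i$ declares $b$, the set $U$ is one of the candidates in the selection of Line~\ref{l:MPU:setchoice} (it is a demanded set in $\mathcal{U}$ whose declared valuation $w(U)$ clears its threshold $\sum_{e\in U}p_e$), and therefore the allocated set $S=A_i(b,\bi)$ has $w(S)=\max\{w(S'):S'\in\mathcal{U},\ w(S')\ge\sum_{e\in S'}p_e\}\ge w(U)$. This is precisely $k$-set monotonicity, and Theorem~\ref{thm:char:unknown} then yields truthfulness without money and with verification.

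I expect the main obstacle to be conceptual rather than technical: one must be careful that the thresholds seen by bidder~$i$ are genuinely independent of $b_i$ (which relies on the a-priori-fixed processing order and on prices being updated only from already-processed bidders), and one must correctly transport the verification hypothesis $w(U)\ge z(T)$ together with the containment $U\subseteq T$ through the definition of $\sigma(\cdot|\cdot)$. Once these are pinned down, the monotonicity check is immediate from the greedy/threshold structure of the algorithm, and no approximation or feasibility considerations (handled separately in Theorem~\ref{theorem:improved-Bartal}) enter the truthfulness proof.
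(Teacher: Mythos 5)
Your proof is correct and follows essentially the same route as the paper's: invoke the $k$-set monotonicity characterization of Theorem~\ref{thm:char:unknown}, observe that the fixed processing order makes the prices $p^i_e$ independent of bidder $i$'s own declaration, and conclude via the chain $\sum_{e\in U}p^i_e\le\sum_{e\in T}p^i_e\le z(T)\le w(U)$ that $U$ is a feasible candidate on Line~\ref{l:MPU:setchoice}. The only difference is your explicit treatment of the degenerate cases, which the paper leaves implicit.
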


\begin{proof}
Fix $i$ and $\bi$. As in Definition \ref{def:ksetmono}, take two declarations of bidder $i$, $a=(z, {\cal T})$ and $b=(w, {\cal U})$ with $w(U) \geq z(T)$, where $T=A_i(a, \bi)$ and $U=\sigma(T|b)$. (In this proof, $A$ denotes Algorithm \ref{f:MPU-algo}.) Recall that $U \subseteq T$ and $U \in {\cal U}$.

Note that the ordering is independent of the bids and then when $i$ is considered the prices $p^i_e$ for the elements $e$ of $\universe$ are the same in both $A(a,\bi)$ and $A(b, \bi)$. Since $T=A_i(a, \bi)$, we note that $z(T) \geq \sum_{e \in T} p_e^i$. This yields,
$ 
w(U) \geq z(T) \geq \sum_{e \in T} p_e^i \geq \sum_{e \in U} p^i_e.
$ 
This implies that when $A(b, \bi)$ executes line \ref{l:MPU:setchoice}, the set $U$ is taken into consideration and we can therefore conclude that $w(A_i(b, \bi)) \geq w(U)$. This shows that $A$ is $k$-set monotone and then, by Theorem \ref{thm:char:unknown}, our claim.
\end{proof}

 \begin{algorithm}[tb]\label{f:MPU-algo-modified}
\DontPrintSemicolon
 \caption{Modified multiplicative price update algorithm}
For each bidder $i \in \{1,2, \ldots, n\}$, let $v_{\max}^i$ be the valuation of $i$'s most valuable set. \;

Let $j \in \{1,2, \ldots, n\}$ be the bidder with highest value $v_{\max}^j$ (smallest index in case of ties).\;

Let  $p_0 = \frac{\mu}{4bm}$, where $\mu = (1+\epsilon) v^j_{\max}$, for a fixed $0 < \epsilon \ll 1$. \;

For each good $e \in \universe$ do $p_e^j := p_0$. \;

Let for any $i = j, 1, 2, 3, 4, \ldots , j-1, j+1, \ldots, n$, $\mbox{next}(i)$ be the next number in this order, e.g.,
$\mbox{next}(j) = 1, \mbox{next}(1) = 2, \ldots, \mbox{next}(j-1) = j+1, \ldots, \mbox{next}(n-1) = n,  \mbox{next}(n) = n+1$. \;

For each bidder $i = j, 1, 2, \ldots, j-1, j+1, \ldots, n$ do \;
 \hspace{20pt} Set $S_i := \mbox{argmax } \{v_i(S): S \in {\cal S}_i \mbox{ such that } v_i(S) \geq \sum_{e \in S} p^i_e\}$.\label{l:MPU-mod:setchoice}\;
 \hspace{20pt} Update for each good $e \in S_i$: $p_e^{\mbox{next}(i)} := p_{e}^i \cdot r$.\;
Return $S = (S_1, S_2, \ldots, S_{j-1}, S_j, S_{j+1}, \ldots, S_n)$. \;
\end{algorithm}

We now modify Algorithm \ref{f:MPU-algo} in order to remove the assumption on the knowledge of $\mu$. The modified algorithm is presented as Algorithm \ref{f:MPU-algo-modified}. We have the following result.

\begin{theorem}\label{thm:UB:supply:mod}
Algorithm \ref{f:MPU-algo-modified} is a truthful mechanism without money and with verification for CAs with unknown $k$-minded bidders. Its approximation ratio is $O(b \cdot (m)^{1/b})$.
\end{theorem}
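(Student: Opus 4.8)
The plan is to prove the two assertions separately. For the approximation ratio, I would observe that on any fixed input Algorithm~\ref{f:MPU-algo-modified} is exactly Algorithm~\ref{f:MPU-algo} run with the bidder order $j,1,2,\dots,j-1,j+1,\dots,n$ and with the choice $\mu=(1+\epsilon)v_{\max}$, where $v_{\max}=v^j_{\max}$ is the largest valuation of any bidder. For $0<\epsilon\le 1$ this $\mu$ satisfies $\mu/2\le v_{\max}<\mu$, so it is an admissible input for Algorithm~\ref{f:MPU-algo}; since the processing order is irrelevant both for feasibility and for Lemma~\ref{lemma-MPU-approx1}, feasibility ($p_0 r^b=\mu$) and Theorem~\ref{theorem:improved-Bartal} carry over verbatim with $p_0=\frac{\mu}{4bm}$ and $r=(4bm)^{1/b}$, giving $v(S)\ge\OPT/(2(b(r-1)+1))=\OPT/O(b\,m^{1/b})$. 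The degenerate case $v_{\max}=0$, in which all valuations vanish, is trivial, and I assume $v_{\max}>0$ from now on.

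The substance of the theorem is truthfulness, which by Theorem~\ref{thm:char:unknown} reduces to proving that the algorithm is $k$-set monotone. Fix $i$ and $\bi$. Following Definition~\ref{def:ksetmono}, let $a=(z,{\cal T})$ with $T:=A_i(a,\bi)$, and let $b=(w,{\cal U})$ with $U:=\sigma(T|b)\subseteq T$, $U\in{\cal U}$, $w(U)=w(T)$, and the verification inequality $w(U)\ge z(T)$; the goal is $w(A_i(b,\bi))\ge w(U)$, which is immediate when $T=\emptyset$, so assume $T\ne\emptyset$. Write $j^{(a)},j^{(b)}$ for the bidder processed first under the two declarations, $M_{-i}=\max_{i'\ne i}v^{i'}_{\max}$ (which does not depend on $i$'s report), and $m^{(a)}_i,m^{(b)}_i$ for $i$'s largest declared valuation under $a$ and $b$. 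If $i=j^{(b)}$, then under $b$ bidder $i$ is processed first against the uniform price $p_0=(1+\epsilon)m^{(b)}_i/(4bm)$, and every set $S$ satisfies $\sum_{e\in S}p_0\le m p_0=(1+\epsilon)m^{(b)}_i/(4b)<m^{(b)}_i$ when $b\ge 1$ and $\epsilon$ is small; hence $i$'s most valuable declared set clears its threshold in line~\ref{l:MPU-mod:setchoice}, so $w(A_i(b,\bi))=m^{(b)}_i\ge w(U)$ and we are done.

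Now suppose $i\ne j^{(b)}$. I first claim $i\ne j^{(a)}$ too. Indeed, if $i=j^{(a)}$, the argument of the previous paragraph applied to $a$ shows that $i$ is awarded a set with $z(T)=m^{(a)}_i=M^{(a)}\ge M_{-i}$; the verification inequality then forces $m^{(b)}_i\ge w(U)\ge z(T)\ge M_{-i}$, so under $b$ bidder $i$ again attains the global maximum valuation, and the tie-breaking rule (smallest index) yields $j^{(b)}=i$, a contradiction. Hence $i$ is processed first under neither declaration, so $j^{(a)}=j^{(b)}$ and $\mu^{(a)}=\mu^{(b)}=(1+\epsilon)M_{-i}$ are determined by $\bi$ alone; consequently the processing order and the price vector $p^i_e$ faced by bidder $i$ are identical in $A(a,\bi)$ and $A(b,\bi)$. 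Since $T=A_i(a,\bi)\ne\emptyset$ was selected in line~\ref{l:MPU-mod:setchoice} we have $z(T)\ge\sum_{e\in T}p^i_e$, and therefore $w(U)\ge z(T)\ge\sum_{e\in T}p^i_e\ge\sum_{e\in U}p^i_e$ because $U\subseteq T$ and the prices are nonnegative; thus $U$ clears its threshold when $i$ is processed under $b$, the set selected in line~\ref{l:MPU-mod:setchoice} then has valuation at least $w(U)$, and $w(A_i(b,\bi))\ge w(U)$. This establishes $k$-set monotonicity, and Theorem~\ref{thm:char:unknown} finishes the proof.

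The main obstacle, and the only genuine difference from the proof of Theorem~\ref{thm:UB:supply}, is that the identity of the first bidder, the parameter $\mu$, the base price $p_0$, and hence every threshold now depend on the reported valuations, so a deviating bidder could in principle reshape the entire price schedule. The structural fact that saves the argument is that verification rules out exactly the harmful transition --- from being the maximum-valuation (first) bidder under $a$ to not being so under $b$ --- because a bidder awarded her most valuable set as the first bidder cannot later understate her top valuation without being caught.
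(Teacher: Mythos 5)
Your proposal is correct and follows essentially the same route as the paper: the approximation ratio is inherited from the order-independent analysis of Algorithm~\ref{f:MPU-algo}, and truthfulness is reduced to $k$-set monotonicity by a case split on whether bidder $i$ is the max bidder, with the key observations that the max bidder always clears the uniform initial price for her best set and that otherwise the identity of the max bidder (hence $\mu$, the order, and the prices faced by $i$) is unchanged. Your handling of the first two cases via the contrapositive ($i\ne j^{(b)}\Rightarrow i\ne j^{(a)}$) is just a reorganization of the paper's Cases 1 and 2.
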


\begin{proof}
Approximation ratio and feasibility of the produced solution follow from the choice of $p_0$ and from setting $\mu=(1+\epsilon) v^j_{\max}$, for  $0<\epsilon \ll 1$. Indeed, we can use the previous analysis of Algorithm \ref{f:MPU-algo} that did not make any assumption on the order in which bidders are processed and only required $\mu/2\leq v_{\max}<\mu$. 

We will argue now about truthfulness of the modified algorithm. Let us call bidder $j$ in Algorithm \ref{f:MPU-algo-modified}, the \emph{max bidder}. We first observe that bidder $j$ is allocated the set in her (reported) demand with highest (reported) valuation. This is because her declaration of $v^j_{\max}$ for her best set, say $Q$, is larger than $\sum_{e \in Q} p^j_e = |Q| \cdot  p_0 = |Q| \cdot \frac{(1+\epsilon) v^j_{\max}}{4bm}$ since $|Q| \leq m$. Now, fix $i$ and $\bi$. As in Definition \ref{def:ksetmono}, take two declarations of bidder $i$, $a=(z, {\cal T})$ and $b=(w, {\cal U})$ with $w(U) \geq z(T)$, where $T=A_i(a, \bi)$ and $U=\sigma(T|b)$. (In this proof, $A$ denotes Algorithm \ref{f:MPU-algo-modified}.) Recall that $U \subseteq T$ and $U \in {\cal U}$. Let $j_a$ (resp., $j_b$) be the max bidder for the bid vector $(a, \bi)$ (resp., $(b, \bi)$). 
We 
distinguish three cases.

\noindent \underline{Case 1: $i = j_a$.} In this case, $z(T)$ is larger than all the valuations in $\bi$. Since $w(U) \geq z(T)$ and since $\bi$ is unchanged then $w(U)$ is also larger than all the valuations in $\bi$, which yields $i =j_b$. But then, as observed above, since $i$ is the max bidder in $(b, \bi)$  she will get her best set in ${\cal U}$ and therefore $w(A_i(b, \bi)) \geq w(U)$.

\noindent \underline{Case 2: $i = j_b$.} Since $i$ is the max bidder in $(b, \bi)$ then we can argue, as above, that she will get her best set and so we have $w(A_i(b, \bi)) \geq w(U)$.

\noindent \underline{Case 3: $i \neq j_a, j_b$.} Since the other bids are unchanged, in this case, we have $j_a=j_b$. This implies that the ordering in which bidders are considered is the same in both $A(a, \bi)$ and $A(b, \bi)$ which in turns implies that the prices $p^i_e$ considered by the algorithm in line \ref{l:MPU-mod:setchoice} are the same in both instances. We can then use the same arguments used in the proof of Theorem \ref{thm:UB:supply} to conclude that $w(A_i(b, \bi)) \geq w(U)$.

In all the three cases we have shown that the algorithm is $k$-set monotone and then the claim follows from Theorem \ref{thm:char:unknown}.
\end{proof}

\begin{algorithm}[b]\label{f:MPU-algo-rand}
\DontPrintSemicolon
 \caption{Multiplicative price update algorithm with oblivious randomized rounding.}

For each good $e \in \universe$ do $p_e^1 := p_0$, $b_e^1  := b$. \;

For each bidder $i = 1, 2, \ldots, n$ do \;
 \hspace{20pt} Let $\universe _i = \{e \in \universe \,|\, b_e^i > 0\}$. \;
 \hspace{20pt} Set $S_i := \mbox{argmax } \{v_i(S): S \in {\cal S}_i \mbox{ such that } S \subseteq \universe _i \mbox{ and } v_i(S) \geq \sum_{e \in S} p^i_e\}$.\label{l:MPU-algo-rand:setchoice}\;
 \hspace{20pt} Update for each good $e \in S_i$: $p_e^{i+1} := p_{e}^i \cdot r$.\;
 \hspace{20pt} With probability $q$ set $R_i := S_i$ else $R_i := \emptyset$.\;
 \hspace{20pt} Update for each good $e \in R_i$: $b_e^{i+1} := b_e^{i} - 1$.\;
Return $R = (R_1, R_2, \ldots, R_n)$. \;
\end{algorithm}

\subsubsection{Randomized truthful CAs}
We 
show here how to use Algorithm \ref{f:MPU-algo-modified} to obtain randomized universally truthful mechanisms 
with expected approximation ratios of $O(d^{1/b} \log(bm))$ and $O(m^{1/(b+1)} \log(bm))$, respectively.

Observe first that if we execute Algorithm \ref{f:MPU-algo} with a smaller update factor $r = 2^{1/b}$, 
then the output solution 
allocates at most $s b$ copies of each good to the bidders, where
$s = \log(4 b m)$ \cite[Lemma 1]{KV12}. This simply follows from the fact that if $sb$ copies of good $e \in \universe$ were sold, then its price is
$p_0 2^{\log (4bm)} = \mu > v_{\max}$.
But this infeasible solution is an $O(1)$-approximation to the optimal feasible solution: plugging $r=2^{1/b}$ in the approximation ratio of $2(b(r-1)+1)$ in Theorem \ref{theorem:improved-Bartal} indeed implies an $O(1)$-approximation (see also \cite[Theorem 1]{KV12}).
This idea leads to the following randomized algorithm in \cite{KV12}: use $r = 2^{1/b}$, explicitly maintain feasibility of the produced solution, and define $q = 1/(2 {\rm e} d^{1/b} \log(4 b m))$ (where ${\rm e} \approx 2.718$) as the probability of allocating the best set to a bidder. (See Algorithm \ref{f:MPU-algo-rand} for a precise description.) 
 We now introduce the same randomization idea into our Algorithm \ref{f:MPU-algo-modified}. The resulting algorithm is Algorithm \ref{f:MPU-algo-modified-rand},
where we assume $r = 2^{1/b}$.

 \begin{algorithm}[tb]\label{f:MPU-algo-modified-rand}
\DontPrintSemicolon
 \caption{Modified multiplicative price update algorithm with randomized rounding.}
For each bidder $i \in \{1,2, \ldots, n\}$, let $v_{\max}^i$ be the valuation of $i$'s most valuable set. \;

Let $j \in \{1,2, \ldots, n\}$ be the bidder with highest value $v_{\max}^j$ (smallest index in case of ties).\;

Let  $p_0 = \frac{\mu}{4bm}$, where $\mu = (1+\epsilon) v^j_{\max}$, for a fixed $0 < \epsilon \ll 1$. \;

For each good $e \in \universe$ do $p_e^j := p_0$,  $b_e^1  := b$. \;

Let for any $i = j, 1, 2, 3, 4, \ldots , j-1, j+1, \ldots, n$, $\mbox{next}(i)$ be the next number in this order, e.g.,
$\mbox{next}(j) = 1, \mbox{next}(1) = 2, \ldots, \mbox{next}(j-1) = j+1, \ldots, \mbox{next}(n-1) = n,  \mbox{next}(n) = n+1$. \;

For each bidder $i = j, 1, 2, \ldots, j-1, j+1, \ldots, n$ do \;
 \hspace{20pt} Let $\universe _i = \{e \in \universe \,|\, b_e^i > 0\}$. \;
 \hspace{20pt} Set   $S_i := \mbox{argmax } \{v_i(S): S \in {\cal S}_i \mbox{ such that } S \subseteq \universe _i \mbox{ and } v_i(S) \geq \sum_{e \in S} p^i_e\}$. \label{l:MPU-algo-mod-rand:setchoice}\;
 \hspace{20pt} Update for each good $e \in S_i$: $p_e^{\mbox{next}(i)} := p_{e}^i \cdot r$.\;
 \hspace{20pt} If $i=j$ then set $R_i := S_i$ else (with probability $q$ set $R_i := S_i$ else $R_i := \emptyset$).\;
 \hspace{20pt} Update for each good $e \in R_i$: $b_e^{\mbox{next}(i)} := b_e^{i} - 1$.\;
Return $R = (R_1, R_2,  \ldots, R_{j-1}, R_j, R_{j+1}, \ldots, R_n)$. \;
\end{algorithm}

\begin{theorem}\label{thm:UB:supply:mod:rand}
Algorithm \ref{f:MPU-algo-modified-rand} is a universally truthful mechanism without money and with verification for CAs with unknown $k$-minded bidders. Its expected approximation ratio is $O(d^{1/b} \cdot \log(bm))$.
\end{theorem}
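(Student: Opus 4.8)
I would prove the two halves of the statement separately: universal truthfulness, which carries the new content, and the expected approximation ratio, which is inherited from the analysis of the analogous algorithm (Algorithm~\ref{f:MPU-algo-rand}) in \cite{KV12}.

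\emph{Universal truthfulness.} A universally truthful mechanism is a distribution over deterministic truthful mechanisms, so it suffices to fix the vector $\omega\in\{H,T\}^n$ of the $n$ independent coin flips (bidder~$i$'s flip is $H$ with probability $q$ and is consulted only when $i$ is not the current max bidder) and to show that the resulting deterministic algorithm $A^\omega$ is truthful; by Theorem~\ref{thm:char:unknown} this means $A^\omega$ is $k$-set monotone. Fix $i$, $\bi$ and $a=(z,{\cal T})$ with $A^\omega_i(a,\bi)=T$, and let $b=(w,{\cal U})$ be any declaration with $U=\sigma(T|b)\subseteq T$, $U\in{\cal U}$ and $w(U)\ge z(T)$; we must show $w(A^\omega_i(b,\bi))\ge w(U)$. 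If $T=\emptyset$ then $U=\emptyset$, $w(U)=0$, and there is nothing to prove; so assume $T\neq\emptyset$, which means that either $i$ is the max bidder under $(a,\bi)$, or $i$ is not the max bidder, $\omega_i=H$, and $T=S_i$, the set chosen in line~\ref{l:MPU-algo-mod-rand:setchoice}. I would run the three-case analysis from the proof of Theorem~\ref{thm:UB:supply:mod}. In Case~1 ($i$ is the max bidder under $(a,\bi)$), bidder~$i$ is processed first with full supply and uniform price $p_0$, and since $|Q|\le m$ for every demanded set $Q$ the largest declared value of $i$ always clears its threshold; hence $z(T)$ equals the largest value $i$ declares under $a$, and then $w(U)\ge z(T)$ together with the fact that $\bi$ is unchanged forces $i$ to be the max bidder under $(b,\bi)$ as well, so by the same threshold computation $i$ receives her best set in ${\cal U}$, giving $w(A^\omega_i(b,\bi))\ge w(U)$. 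Case~2 ($i$ is the max bidder under $(b,\bi)$) is identical. In Case~3 ($i$ is the max bidder under neither) the max bidder lies in $\bi$ and is the same in both runs, so the processing order, the preceding declarations and the preceding coin flips coincide; hence the prices $p^i_e$, the residual supplies $b^i_e$ and the set $\universe_i$ that bidder~$i$ faces are identical under $(a,\bi)$ and $(b,\bi)$. The only point beyond Theorem~\ref{thm:UB:supply:mod} is that, since $T=S_i$ was selected in line~\ref{l:MPU-algo-mod-rand:setchoice}, we have $T\subseteq\universe_i$ and $z(T)\ge\sum_{e\in T}p^i_e$, so $U\subseteq T\subseteq\universe_i$ and $w(U)\ge z(T)\ge\sum_{e\in T}p^i_e\ge\sum_{e\in U}p^i_e$; thus $U$ is an admissible candidate for the $\mathrm{argmax}$ computed under $(b,\bi)$, and $w(A^\omega_i(b,\bi))=w(S_i)\ge w(U)$. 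So $A^\omega$ is $k$-set monotone for every $\omega$, and Theorem~\ref{thm:char:unknown} yields universal truthfulness.

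\emph{Approximation ratio.} I would argue exactly as for Algorithm~\ref{f:MPU-algo-rand} in \cite{KV12}, with update factor $r=2^{1/b}$. Then $2(b(r-1)+1)=O(1)$, so the underlying uncapped multiplicative-price-update solution $\widetilde S$ has social welfare $v(\widetilde S)=\Omega(\OPT)$ by Theorem~\ref{theorem:improved-Bartal}, whose bound $v(S)\ge\OPT/\bigl(2(b(r-1)+1)\bigr)$ --- like the two inequalities of Lemma~\ref{lemma-MPU-approx1} --- holds for every update factor $r$; moreover, by \cite[Lemma~1]{KV12}, the uncapped run allocates each good at most $s\,b$ times with $s=\log(4bm)$. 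The oblivious rounding with inclusion probability $q=1/(2\mathrm{e}\,d^{1/b}\log(4bm))$ keeps the produced solution $R$ feasible and, by the analysis of \cite{KV12}, retains expected welfare $\Ex{v(R)}=\Omega(q)\cdot v(\widetilde S)=\Omega(q\cdot\OPT)$: a good exhausted in $R$ by the time a bidder is processed has been placed in at least $b$ of the sets chosen before that bidder, so its price has been multiplied by $r^b=2$ and is large, while the optimum loads each good at most $b$ times, so the welfare lost to capping is controlled. Since $1/q=O(d^{1/b}\log(bm))$, the claimed ratio follows. Finally, the two differences between Algorithm~\ref{f:MPU-algo-modified-rand} and Algorithm~\ref{f:MPU-algo-rand} --- fixing the order with the max bidder first and allocating the max bidder her best set deterministically --- only increase the welfare and leave this analysis intact, exactly as observed in the proof of Theorem~\ref{thm:UB:supply:mod}.

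\emph{Main obstacle.} The step needing the most care is the truthfulness argument: checking that, once $\omega$ is fixed, the capping rule $S\subseteq\universe_i$ does not break $k$-set monotonicity. Case~3 goes through only because the residual universe $\universe_i$ --- shaped by the fixed coins and by the selections of the preceding bidders --- is untouched when bidder~$i$ alone deviates, and because the set $U$ she moves to is a subset of the set $T$ she previously received and is therefore still available to her. The approximation bound itself is not an obstacle, being inherited from \cite{KV12}; everything else reduces to Theorems~\ref{thm:char:unknown} and~\ref{theorem:improved-Bartal} and Lemma~\ref{lemma-MPU-approx1}.
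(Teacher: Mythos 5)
Your proposal is correct and follows essentially the same route as the paper: condition on the coin vector to reduce universal truthfulness to $k$-set monotonicity of each deterministic realization via the three-case (max-bidder) analysis of Theorem~\ref{thm:UB:supply:mod}, and inherit the $O(d^{1/b}\log(bm))$ ratio from the analysis of \cite{KV12}, noting that serving the max bidder first and deterministically only helps. Your explicit check in Case~3 that the residual universe $\universe_i$ is unchanged under $i$'s deviation and that $U\subseteq T\subseteq\universe_i$ remains an admissible candidate is a detail the paper leaves implicit, but it is the same argument.
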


\begin{proof}
Approximation guarantee and feasibility of the output solution $R$ follows from essentially the same arguments used in \cite{KV12}. (For completeness we give this proof in appendix.)
We will argue now about universal truthfulness of Algorithm~\ref{f:MPU-algo-modified-rand}. This algorithm can be viewed as a probability distribution over deterministic algorithms. Each such 
algorithm, call it $A$, is defined by a $0/1$-vector $a \in \{0,1\}^{n-1}$ and 
first selects and serves the max
bidder $j$ and then serves the remaining $n-1$ bidders $1, 2, \ldots, j-1, j+1, \ldots, n$. When serving bidder $i \neq j$, 
algorithm $A$ deterministically allocates set $S_i$ to bidder $i<j$ if and only if $a_i = 1$ and to bidder $i>j$ if and only if $a_{i-1} = 1$. 
Thus, algorithm $A$ 
is Algorithm \ref{f:MPU-algo-modified}, with 
$a = (1,1, \ldots, 1)$. 
So, to show that $A$ is (deterministically) truthful we use the same argument 
of the proof  of Theorem \ref{thm:UB:supply:mod} and the additional observation that 
bidders whose corresponding bit in the vector $a$ is $0$ have 
no incentive to lie, since they are not served anyway.
\end{proof}

Finally we can also obtain a universally truthful mechanism in case  demanded sets have unbounded sizes.

\begin{theorem}\label{thm:UB:supply:mod:rand:any-m}
There exist  a universally truthful mechanism without money and with verification for CAs with unknown $k$-minded bidders with an expected approximation ratio of $O(m^{1/(b+1)} \cdot \log(bm))$.
\end{theorem}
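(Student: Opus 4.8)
The statement asks for a universally truthful mechanism achieving $O(m^{1/(b+1)} \log(bm))$, which should follow from the same template as Theorem~\ref{thm:UB:supply:mod:rand} but with a different handling of the set-size parameter $d$. The key observation is that $d$ enters the bound of Theorem~\ref{thm:UB:supply:mod:rand} only through the allocation probability $q = 1/(2\mathrm{e}\, d^{1/b}\log(4bm))$, and $d$ can be as large as $m$. The plan is to split the bidders into two groups according to the size of the set they would be allocated, run a price-update mechanism on each group with a group-tailored choice of the rounding probability, and then select one of the two outcomes at random (or deterministically, whichever preserves truthfulness). Concretely, fix a threshold $\tau = m^{b/(b+1)}$; call a set \emph{small} if $|S| \le \tau$ and \emph{large} otherwise. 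For the small sets the effective $d$ is at most $\tau$, so plugging $d = \tau$ into Theorem~\ref{thm:UB:supply:mod:rand} gives an $O(\tau^{1/b}\log(bm)) = O(m^{1/(b+1)}\log(bm))$ bound. For the large sets, a bidder can be allocated at most $\lfloor bm/\tau \rfloor = O(m^{1/(b+1)})$ of them simultaneously (since each large set consumes more than $\tau$ unit-copies out of a total supply of $bm$), so even a trivial argument — greedily/price-update allocating large sets and losing an $O(m/\tau) = O(m^{1/(b+1)})$ factor — suffices on that part.

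The steps, in order, would be: (1) Define the partition of demanded sets into small and large via the threshold $\tau = m^{b/(b+1)}$, restricting each bidder's collection ${\cal S}_i$ to one side or the other — importantly this restriction must be done using only the \emph{sizes} of sets, which are visible to the mechanism regardless of reported valuations, so that it does not interfere with truthfulness. (2) On the small-set instance, run Algorithm~\ref{f:MPU-algo-modified-rand} with $r = 2^{1/b}$ and $q = 1/(2\mathrm{e}\,\tau^{1/b}\log(4bm))$; by Theorem~\ref{thm:UB:supply:mod:rand} (with $d$ replaced by $\tau$) this is universally truthful with expected ratio $O(\tau^{1/b}\log(bm)) = O(m^{1/(b+1)}\log(bm))$ against the optimum restricted to small sets. (3) On the large-set instance, run a suitable price-update / greedy mechanism; argue it is truthful by $k$-set monotonicity exactly as in the proof of Theorem~\ref{thm:UB:supply}, and bound its ratio against the optimum restricted to large sets by $O(m^{1/(b+1)}\log(bm))$ using the $O(bm/\tau)$ packing bound on the number of large sets that fit. (4) Combine: $\OPT \le \OPT_{\mathrm{small}} + \OPT_{\mathrm{large}}$, so running each sub-mechanism with probability $1/2$ (keeping only that sub-mechanism's allocation) loses a further factor of $2$ and yields the claimed overall ratio; truthfulness is preserved because each bidder participates in exactly one of the two instances (determined by set sizes, not valuations), and within that instance the sub-mechanism is (universally) truthful, while in the unused instance she is never served.

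The main obstacle, and the step to get exactly right, is verifying that the set-size-based partition is \emph{oblivious to reported valuations} so that the composed mechanism remains $k$-set monotone per Definition~\ref{def:ksetmono}. A subtlety is that when bidder $i$ changes her declaration from $a=(z,{\cal T})$ to $b=(w,{\cal U})$, the relevant set $U = \sigma(T|b)$ satisfies $U \subseteq T$, so if $T$ is small then $U$ is small — meaning a bidder allocated a small set under $a$ stays in the small-set instance under the relevant deviation $b$, and the analysis of Theorem~\ref{thm:UB:supply:mod:rand} applies verbatim; the same monotone-containment fact handles the large-set side via the argument of Theorem~\ref{thm:UB:supply}. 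One should double-check that the "max bidder" selection inside Algorithm~\ref{f:MPU-algo-modified-rand}, when run on a restricted instance, still behaves correctly, but this is immediate since that argument only used $|Q| \le m$. The remaining work — plugging $\tau = m^{b/(b+1)}$ into the two ratio bounds and checking they match — is routine.
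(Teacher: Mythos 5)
Your proposal matches the paper's own (sketched) proof: a fair coin flip between (i) Algorithm~\ref{f:MPU-algo-modified-rand} with $q$ tuned to the threshold $d=\lfloor m^{b/(b+1)}\rfloor$ to handle small sets, and (ii) a branch covering the large sets via the packing bound that at most $O(bm/\tau)$ sets of size greater than $\tau$ fit, so that securing roughly $v_{\max}$ suffices there. The only difference is cosmetic: the paper instantiates your ``suitable price-update/greedy mechanism on large sets'' concretely as an auction for $b$ copies of the single super-item $\universe$ (i.e., Algorithm~\ref{f:MPU-algo-modified-rand} with $m=1$, $d=1$), which is exactly the trivial truthful way of realizing the $\Omega(v_{\max})$ guarantee you invoke.
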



\subsection{CAs with single supply}
We now go back to the case in which the goods in $\universe$ are provided with single supply. We present three incentive-compatible CAs: the first is deterministic, the remaining two are randomized. Among these three mechanisms, only two run in polynomial time.

\subsubsection{Greedy algorithm}
We now present a simple greedy algorithm for CAs where the supply $b=1$, see Algorithm \ref{greedyalg}. 
(Note that for goods with arbitrary supply $b$, the greedy algorithm cannot do better than Algorithm \ref{f:MPU-algo-modified} because of the lower bound of $\sqrt{m}$ in \cite{Kry05}.)
Recall that each bidder $i = 1,2, \ldots, n$ declares $(v_i, {\cal S}_i)$, where $ {\cal S}_i$ is a collection of $k$ sets bidder $i$ demands and $v_i(S)$ is the valuation of set $S \in {\cal S}_i$. Observe  that sets $S_1, \ldots, S_l$ are all the sets demanded by all bidders (with non-zero bids), i.e., $\{S_1, \ldots, S_l\} = \mathcal{S}_1 \cup \ldots \cup \mathcal{S}_n$.


\begin{algorithm}[tb]\label{greedyalg}
\caption{The greedy algorithm.}

Let $l$ denote the number of different bids, $l = n k$.

Let $b_1,b_2, \dots,b_l$ be the non-zero bids and $S_1, \ldots, S_l$ be the corresponding sets, ordered such that
$b_1 \ge \ldots \ge b_l$. In case of ties between declarations of different bidders consider first the smaller index bidder.

For each $j = 1,\ldots,l$ let $\bidder(j) \in \{1,\ldots,n\}$ be the bidder bidding $b_j$ for the set $S_j$.

${\cal P} := \emptyset$, ${\cal B} := \emptyset$. \label{alg3-line-init}


For $i=1,\dots,l$ do \label{alg3-iter-i}

\hspace{10pt} If ${\bidder}(i) \not\in {\cal B} \wedge S_i \cap S = \emptyset$ for all $S$ in ${\cal P}$ then \label{alg3-feasibility}
%
(a) ${\cal P} := {\cal P} \cup \{S_i\}$, (b) ${\cal B} := {\cal B} \cup \bidder(i)$. \label{alg3-line-update}

Return ${\cal P}$.
\end{algorithm}

We will use the linear programming duality theory to prove the approximation guarantees of our algorithm.
 Let us denote the set family ${\cal S} = \cup_{i=1}^n {\cal S}_i$,
 where bidder $i$ demands sets ${\cal S}_i$. For a given set $S \in {\cal S}_i$
 we denote by $b_i(S)$ the bid of bidder $i$ for that set. 
 Let $[n]$ 
 be the set $\{1,\ldots,n\}$.
 The LP relaxation of our problem is:
{\allowdisplaybreaks
  \begin{eqnarray}
  \max        & \sum_{i = 1}^n\sum_{S \in {\cal S}_i} b_i(S) x_i(S) & \label{e:obj} \\
  \mbox{s.t.} & \sum_{i = 1}^n \sum_{S: S \in {\cal S}_i,e \in S}  x_i(S) \leq 1 & \forall e \in \universe \label{e:capacity} \\
              & \sum_{S \in {\cal S}_i} x_i(S) \leq 1 & \forall i \in [n] 
              \label{e:demand} \\
              & x_i(S) \geq 0 & \hspace{-24pt} \forall i \in [n] 
              \forall S \in {\cal S}_i, \label{e:multi}
 \end{eqnarray}
 }%
 The corresponding dual linear program is then the following:
{\allowdisplaybreaks
\begin{eqnarray}
  \min        & \sum_{e \in \universe} y_e  + \sum_{i=1}^n z_i& \label{e:dual-obj} \\
  \mbox{s.t.} & z_i + \sum_{e \in S} y_e \geq b_i(S) & \hspace{-4pt} \forall i \in [n] 
  \,\,\, \forall S \in {\cal S}_i \label{e:dual-lbd} \\
              & z_i, y_e \geq 0 & \hspace{-4pt} \forall i \in [n] 
              \,\,\, \forall e \in \universe. \label{e:dual-multi}
\end{eqnarray}
}%
In this dual linear program dual variable
$z_i$ corresponds to the constraint (\ref{e:demand}).

\begin{theorem}\label{t:greedy-approx}
 Algorithm \ref{greedyalg} is a $\min\{m,d+1\}$-approximation algorithm for CAs with $k$-minded bidders.
\end{theorem}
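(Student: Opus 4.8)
The plan is to bound the social welfare of the greedy solution against the LP optimum via weak LP duality, constructing a feasible dual solution whose objective is at most $\min\{m,d+1\}$ times the greedy value. First I would observe that the greedy value $v(\mathcal{P}) = \sum_{S_j \in \mathcal{P}} b_j$, since exactness means each winning bidder contributes her declared value for the awarded set. Let $W \subseteq [l]$ be the indices of bids selected by the algorithm. For the $d+1$ bound, I would set the dual variables as follows: for each selected bid $j \in W$, distribute its value $b_j$ among the $|S_j| \le d$ goods in $S_j$ (setting $y_e$ appropriately) and onto the bidder term $z_{\bidder(j)}$, so that the total mass placed equals at most $(d+1) b_j$ per selected bid; concretely one natural choice is $y_e = b_{j(e)}$ where $j(e)$ is the selected bid covering $e$ (or $0$ if $e$ is uncovered), and $z_i = b_{j(i)}$ where $j(i)$ is the selected bid of bidder $i$ (or $0$ if $i$ wins nothing). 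The dual objective is then $\sum_e y_e + \sum_i z_i \le d \cdot v(\mathcal{P}) + v(\mathcal{P}) = (d+1) v(\mathcal{P})$, since each selected bid $j$ contributes $b_j$ to at most $|S_j| \le d$ of the $y_e$ terms and to exactly one $z_i$ term.

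The crux is verifying dual feasibility, i.e., that $z_i + \sum_{e \in S} y_e \ge b_i(S)$ for every bidder $i$ and every demanded set $S \in \mathcal{S}_i$. This is exactly where the greedy stopping rule is used. Fix such an $i$ and $S$, and let $j^\star$ be the index of the bid $(b_i(S), S)$ in the sorted order. When the algorithm reached iteration $j^\star$ it did \emph{not} add $S$, so the condition on line \ref{alg3-feasibility} failed: either $\bidder(j^\star) = i$ was already in $\mathcal{B}$ — meaning $i$ had won some earlier (hence at least as valuable) bid, so $z_i \ge b_i(S)$ — or some already-placed set $S' \in \mathcal{P}$ intersects $S$ in a good $e$; that $S'$ was placed earlier, hence $b_{j'} \ge b_{j^\star} = b_i(S)$ for the corresponding bid $j'$, and by construction $y_e \ge b_{j'} \ge b_i(S)$. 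In either case the constraint \eqref{e:dual-lbd} holds. By weak duality, $v(\mathcal{P}) \ge \OPT_{LP} / (d+1) \ge \OPT/(d+1)$.

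For the $m$ bound I would argue more directly: $\OPT \le \sum_{e \in \universe} \max_i \max_{S \ni e, S \in \mathcal{S}_i} b_i(S)$ is not quite what I want; instead, note that the single highest bid $b_1$ already satisfies $b_1 \ge \OPT/m$, because the optimal solution awards at most $m$ nonempty disjoint sets, each of declared value at most $b_1$ — wait, the optimal assignment can contain at most $m$ sets since they are pairwise disjoint and nonempty, so $\OPT \le m \cdot b_1 \le m \cdot v(\mathcal{P})$, as the greedy solution is nonempty (it always selects $S_1$) and $v(\mathcal{P}) \ge b_1$. Combining, $v(\mathcal{P}) \ge \OPT / \min\{m, d+1\}$.

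The main obstacle I anticipate is the careful bookkeeping in the dual-feasibility argument when a good $e \in S$ is covered by \emph{no} selected set and bidder $i$ wins \emph{nothing} — one must argue this cannot happen for the critical $S$: if neither clause of the failed line \ref{alg3-feasibility} condition held, the algorithm would have added $S$, a contradiction. Making this case analysis airtight (and handling ties in the sorted order, broken by smaller bidder index) is the delicate part; everything else is routine once the dual assignment is fixed.
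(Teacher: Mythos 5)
Your proposal is correct and follows the same overall strategy as the paper's proof: build a dual solution from the run of the greedy algorithm, use the greedy stopping rule (a witness good in an earlier selected set, or the bidder already being served) to verify the dual constraints \eqref{e:dual-lbd}, and conclude by weak duality. The bookkeeping differs in two ways, both legitimate. First, for the $d+1$ factor the paper splits each selected bid's value uniformly over its witness elements (setting $y_e = b_{\bidder(j)}(S_j)/|{\cal P}(S_j)|$, so that $\sum_e y_e \le v({\cal P})$) and then scales the $y$-part of the dual by $d' = \min\{d, m-1\}$ to restore feasibility; you instead assign the full bid value to every covered good, which makes feasibility immediate but pushes the factor $d$ into the bound on the dual objective. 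These are two presentations of the same charging argument. Second, the paper obtains the $\min\{m,\cdot\}$ part inside the same dual argument, via the choice $d' = \min\{d, m-1\}$ together with the observation that a rejected set of cardinality $m$ must contain an entire selected set; you handle the $m$ bound by a separate elementary count ($\OPT \le m\, b_1$ since an allocation consists of at most $m$ pairwise disjoint nonempty sets, and greedy always accepts the top bid, so $v({\cal P}) \ge b_1$). Your route for the $m$ bound is simpler and equally valid; the paper's buys a single unified dual certificate for both bounds. Your feasibility case analysis (including the observation that if neither clause of line \ref{alg3-feasibility} failed the set would have been accepted, and that ties are resolved consistently by the fixed order) is exactly the argument the paper makes via its ``witness'' terminology, so there is no gap.
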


\begin{proof}
   Suppose that Algorithm \ref{greedyalg}  has terminated and output solution ${\cal P}$.
   Let $SAT_{\cal P} = \cup_{S \in {\cal P}} S$. Notice that for each set $S \in {\cal S}$ that was not chosen to the
   final solution ${\cal P}$, there either is an element $e \in SAT_{\cal P} \cap S$
   which was the \emph{witness} of that event during the execution of the
   algorithm, or there exists a bidder $i$ and set $S' \in {\cal P}$ such that $S', S \in {\cal S}_i$.
         For each set $S \in {\cal S} \setminus {\cal P}$ we keep in
   $SAT_{\cal P}$ one witness for $S$. In case if there is more than one witness in $SAT_{\cal P} \cap S$,
  we keep in $SAT_{\cal P}$ the (arbitrary) witness for $S$ that belongs to the set among sets $\{T \in {\cal P}: SAT_{\cal P} \cap S \cap T \not = \emptyset\}$
  that was considered first by the greedy order. We discard the remaining elements from $SAT_{\cal P}$.

   Let us also denote ${\cal P}(S) = S \cap SAT_{\cal P}$ if $S \cap SAT_{\cal P} \not = \emptyset$ and
   ${\cal P}(S) = S$ if $S \cap SAT_{\cal P} = \emptyset$.

   Observe  first that if $m=1$, then any feasible solution just has a single set assigned to a single bidder and thus the algorithm outputs an
   optimal solution, as required. 
  
  We then assume that $m \geq 2$.  We now define a dual solution during the execution of Algorithm \ref{greedyalg}.
   We need to know the output solution ${\cal P}$ for the definition
   of this dual solution, which is needed only for analysis.
   In line \ref{alg3-line-init} of Algorithm \ref{greedyalg} we initialize these variables: $y_e := 0$ for
   all $e \in \universe$ and $z_i := 0$ for all $i \in [n]$. We add the following in line \ref{alg3-line-update}(a) of Algorithm \ref{greedyalg}:
   $
     y_e := \Delta^{S_i}_e,
     \mbox{ for all } e \in {\cal P}(S_i), \mbox{ where } \Delta^{S_i}_e = \frac{b_{\bidder(i)}(S_i)}{|{\cal P}(S_i)|}, \mbox{ for } e \in {\cal P}(S_i).
   $ Note, that for $e \in S_i \setminus SAT_{\cal P}$ the value
   of $y_e$ is not updated and remains zero.
      We also add the following instruction in line \ref{alg3-line-update}(a) of Algorithm \ref{greedyalg}:
   $
     z_{\bidder(i)} := b_{\bidder(i)}(S_i).
   $


    It is obvious that the dual solution provide a lower bound on
   the cost of the output solution:
   \begin{equation}\label{e:dual-lb12}
    \sum_{e \in \universe} y_e \leq \sum_{S_i \in {\cal P}} b_{\bidder(i)}(S_i).
   \end{equation}

     We will show now that the scaled solution $(d' \cdot y, z)$ is feasible
   for the dual linear program, where $d' = \min\{d, m-1\}$. We need to show that constraints
   (\ref{e:dual-lbd}) are fulfilled for all sets $S \in {\cal S}$. Thus, we have to prove that, for each set $S \in {\cal S} \cap {\cal S}_i$,
   \begin{equation}\label{e:approx-ratio}
      z_i + d' \sum_{e \in S} y_e \geq b_i(S).
   \end{equation}

    Suppose first that $S=S_r \in {\cal S} \setminus {\cal P}$, and let $\bidder(r) = i$. There are two possible
   reasons that set $S$ has not been included in the solution ${\cal P}$: (i) {\bf Case (a)}: there must be an element $e \in SAT_{\cal P}$ such that $e \in S$, or (ii) {\bf Case (b)}: there is another set $S' \in {\cal P}$ with $S, S' \in {\cal S}_i$.

   Let us first consider Case (a). In that case adding set $S$ to solution
   ${\cal P}$ would violate constraint (\ref{e:capacity}).
   Let $S''=S_j \in {\cal P}$ be the set in the solution that contains element $e$ and let $h = \bidder(j)$.

   Recall that $e \in S \cap S''$, thus
   $
     \sum_{e' \in S} y_{e'} \geq y_{e} = \Delta^{S''}_e = \frac{b_h(S'')}{|{\cal P}(S'')|} \geq \frac{b_h(S'')}{d}
     \geq \frac{b_i(S)}{d},
   $ where the last inequality follows from the greedy selection rule and definition of the witnesses.
   In the case if $|S| = m$, that is, $S=\universe$, we obtain that  $
     \sum_{e' \in S} y_{e'} \geq  \sum_{e'' \in S''} y_{e''}  =  \sum_{e'' \in S''} \Delta^{S''}_{e''} = b_h(S'') \geq b_i(S),
   $ where the last inequality is by  the greedy selection rule. Because $m \geq 2$, this proves (\ref{e:approx-ratio}) in Case (a).

  We consider now Case (b).
  Suppose that $S=S_r \in {\cal S} \setminus {\cal P}$ and there is another set
  $S'=S_j \in {\cal P}$ with $S, S' \in {\cal S}_i$. In this case we have $i=\bidder(j)=\bidder(r)$. Observe that when set $S'$ was chosen by Algorithm \ref{greedyalg} the dual variable $z_i$ was updated
  in line \ref{alg3-line-update}(a) as follows: $z_i = b_i(S')$. Now, because set $S'$ was considered by the
  algorithm before set $S$ we have $z_i = b_i(S') \geq b_i(S)$ by the greedy selection rule, which implies (\ref{e:approx-ratio}) in this case.

  Notice that claim (\ref{e:approx-ratio}) follows immediately from the
  definition of $z_{i}$ if set $S \in {\cal S}_{i}$ has been chosen by our algorithm, that
  is, $S \in {\cal P}$. This concludes the proof of (\ref{e:approx-ratio}).

     Finally, we put all the pieces together. We have shown that the
  dual solution $(d' \cdot y, z)$ is feasible for the dual
  linear program and so by weak duality
  $\sum_{i=1}^n z_i + d' \sum_{e \in \universe} y_e$ is an upper
  bound on the value of the optimal integral solution to our
  problem. We have also shown in (\ref{e:dual-lb12}), that
  $\sum_{e \in \universe} y_e \leq \sum_{S_i \in {\cal P}} b_{\bidder(i)}(S_i)$. Therefore, by letting $\OPT$ denote the optimal social welfare, we obtain that
{\allowdisplaybreaks
  \begin{align*}
    \OPT &\leq \sum_{i=1}^n z_i + d' \sum_{e \in \universe} y_e =
          \sum_{S_i \in {\cal P}} z_{\bidder(i)} + 
          d' \sum_{e \in \universe} y_e
    \\ & \leq \sum_{S_i \in {\cal P}} b_{\bidder(i)}(S_i) + d' \sum_{S_i \in {\cal P}} b_{\bidder(i)}(S_i) 
         =  (d'+1) \sum_{S_i \in {\cal P}} b_{\bidder(i)}(S_i).
\end{align*}}
\end{proof}

We now prove the truthfulness of Algorithm \ref{greedyalg}.

\begin{theorem}\label{thm:greedy:ic}
Algorithm \ref{greedyalg} is a truthful mechanism without money and with verification for CAs with unknown $k$-minded bidders.
\end{theorem}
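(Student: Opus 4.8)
The plan is to establish the $k$-set monotonicity property of Definition~\ref{def:ksetmono} and then appeal to Theorem~\ref{thm:char:unknown}. Fix a bidder $i$ and the declarations $\bi$ of the other bidders. Suppose that on input $(a,\bi)$, with $a=(z,\mathcal{T})$, the greedy algorithm awards $i$ the set $T:=A_i(a,\bi)$; the case $T=\emptyset$ being trivial, assume $T\neq\emptyset$, so $z(T)>0$ (the greedy only ever allocates sets carrying a non-zero bid). Let $b=(w,\mathcal{U})$ be any declaration with $U:=\sigma(T\mid b)\subseteq T$, $U\in\mathcal{U}$ and $w(U)\ge z(T)$; the goal is to show $A_i(b,\bi)=S$ with $w(S)\ge w(U)$.

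First I would extract the structure of the run on $(a,\bi)$ at the step when the bid $(z(T),T)$ is considered. Since the algorithm is exact, $i$ cannot already hold a set at that step (otherwise she could not be awarded $T$), so none of $i$'s bids considered so far has entered the current partition $\mathcal{P}'$; hence $\mathcal{P}'$ coincides with the partition the greedy would build from the bids of the other bidders \emph{alone} (bidder $i$'s bids, being never selected, neither occupy goods nor mark $i$ as served, so they do not influence the decisions taken on the other bidders' bids). Moreover $T$ is disjoint from every set of $\mathcal{P}'$, which is exactly why $T$ gets added.

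Next I would look at the run on $(b,\bi)$ at the step when the (non-zero, since $w(U)\ge z(T)>0$) bid $(w(U),U)$ is considered, and split into two cases. If $i$ has already been served before that step, she received a set via a bid of value at least $w(U)$ --- bids are scanned in non-increasing order of value --- and we are done. Otherwise $i$ is still unserved at that step; exactly as before, this forces the current partition $\mathcal{P}''$ to be what the greedy would build from the other bidders' bids alone up to that point. The crux is the inclusion $\mathcal{P}''\subseteq\mathcal{P}'$: the other bidders' bids are identical in the two runs and are scanned in the same relative order, and because $w(U)\ge z(T)$ (and ties are broken by increasing bidder index), every other-bidder bid scanned before the $(w(U),U)$-step on $(b,\bi)$ is also scanned before the $(z(T),T)$-step on $(a,\bi)$; since the greedy partition only grows as more bids are scanned, the selection from the shorter scan is contained in the selection from the longer one. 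Given $\mathcal{P}''\subseteq\mathcal{P}'$, the set $U\subseteq T$ is disjoint from every set of $\mathcal{P}''$, and $i$ is still unserved, so the feasibility test at the $(w(U),U)$-step succeeds, $U$ is added, and $A_i(b,\bi)=U$ with $w(A_i(b,\bi))=w(U)$, as required. With $k$-set monotonicity in hand, Theorem~\ref{thm:char:unknown} yields the claim.

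The step I expect to be the real obstacle is the inclusion $\mathcal{P}''\subseteq\mathcal{P}'$. The danger is that the two runs differ precisely in bidder $i$'s bids, so a priori $i$'s high-value bids on $(a,\bi)$ might block other bidders' sets that would otherwise be selected and intersect $U$ on $(b,\bi)$. The remedy is the observation, used twice above, that in both runs bidder $i$ is unserved up to the relevant step, so none of her bids has actually been placed in the partition and therefore her bids have no effect whatsoever on which of the other bidders' sets are selected up to that point; what then remains is the purely incremental behaviour of the greedy on a nested pair of prefixes of one common bid sequence, plus a short verification that the tie-breaking convention respects the nesting --- routine, but worth doing carefully.
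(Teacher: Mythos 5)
Your proposal is correct and follows essentially the same route as the paper's proof: establish $k$-set monotonicity by observing that, since $w(U)\ge z(T)$, the other bidders' bids processed before the $(w(U),U)$-step in the run on $(b,\bi)$ form a prefix of those processed before the $(z(T),T)$-step in the run on $(a,\bi)$, so the sets already granted form a nested pair and $U\subseteq T$ remains feasible unless $i$ was already served with a more valuable set. Your explicit justification that bidder $i$'s own (rejected) bids do not influence which of the other bidders' sets get granted — so that the inclusion $\mathcal{P}''\subseteq\mathcal{P}'$ really holds — is a welcome elaboration of a step the paper states only tersely.
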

\begin{proof}
Fix $i$ and $\bi$. As in Definition \ref{def:ksetmono}, take two declarations of bidder $i$, $a=(z, {\cal T})$ and $b=(w, {\cal U})$ with $w(U) \geq z(T)$, where $T=A_i(a, \bi)$ and $U=\sigma(T|b)$. (In this proof, $A$ denotes Algorithm \ref{greedyalg}.) Recall that $U \in {\cal U}$ and $U \subseteq T$.

Let $\mathsf{S}_a$ (respectively, $\mathsf{S}_b$) be the set comprised of the sets in declarations of $\bi$ processed by $A(a, \bi)$ (respectively, $A(b, \bi)$) when $z(T)$ (respectively, $w(U)$) is considered. Since $A$ grants $T$ to bidder $i$ in the instance $(a, \bi)$ then it must be the case that $T \cap S = \emptyset$ for all $S \in \mathsf{S}_a$ granted by $A$. Since $w(U) \geq z(T)$, then we have that $\mathsf{S}_b \subseteq \mathsf{S}_a$. Thus, since $U \subseteq T$ then we have that $U \cap S=\emptyset$ for all $S \in \mathsf{S}_b$ granted by the algorithm. Therefore, the only reason for which $U$ might not be granted to $i$ is that $A$ had already granted a set in ${\cal U}$ to $i$, which implies that $w(A_i(b, \bi)) \geq w(U)$. Then the algorithm is $k$-set monotone and the claim follows from Theorem \ref{thm:char:unknown}.
\end{proof}

\cite[Theorem~2]{BL10} proved a lower bound of $\Omega(m)$ on the approximation ratio of any truthful greedy priority mechanism with money 
for instances with demanded sets of cardinality at most $2$. Nevertheless, we have shown that 
Algorithm \ref{greedyalg} is truthful 
and achieves an approximation ratio of at most $3$ for such instances.
We here investigate the reasons behind this sharp contrast.
%
%

\begin{proposition}\label{prop:nopay4greedy}
There are no payments that augment Algorithm \ref{greedyalg} so to make a truthful mechanism for CAs with $k$-minded bidders, even in the case of known bidders.
\end{proposition}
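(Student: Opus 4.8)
The plan is to exhibit a small concrete instance with known, single-minded bidders on which Algorithm~\ref{greedyalg} is run, identify the declaration graph for one strategic bidder, and show that it contains a negative-weight \emph{cycle}; by the fact recalled after Proposition~\ref{prop:cycles} (incentive-compatibility with money $\Leftrightarrow$ no negative-weight cycle in any declaration graph, cf.\ \cite{Vohra07}), this rules out the existence of any payment scheme making the greedy rule truthful. Since we are arguing about the \emph{complete} declaration graph (the one relevant to truthfulness \emph{with} money, as remarked in the paper), here every pair of declarations is joined by an arc in both directions, with edge weight $w(a,b)=b_i(A_i(a,\bi))-b_i(A_i(b,\bi))$ in the single-minded notation (the bidder's set is fixed, only the reported value changes); a 2-cycle $a\to b\to a$ has total weight $b_i(A_i(a,\bi))-b_i(A_i(b,\bi))+b_i(A_i(b,\bi))-b_i(A_i(a,\bi))=0$, so a 2-cycle can never be negative, and one genuinely needs the interaction of \emph{at least three} declarations to create a negative cycle. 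This is exactly the phenomenon that does \emph{not} arise in the verification model, where only one direction of each arc is present.

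The instance I would use is the standard ``conflict graph is a path/triangle'' gadget that makes greedy non-monotone in the payment sense. Concretely, take $\universe=\{1,2\}$ and three single-minded bidders: bidder~1 wants $\{1\}$, bidder~2 wants $\{1,2\}$, bidder~3 wants $\{2\}$. Fix the declarations of bidders $2$ and $3$ and vary bidder~$1$'s reported value $z$ for $\{1\}$. For suitable fixed values of $b_2(\{1,2\})$ and $b_3(\{2\})$, as $z$ crosses the two break-points the greedy order changes so that: for small $z$, bidder~1 is not served (the bundle $\{1,2\}$ is taken first); for an intermediate range of $z$, bidder~1 is served (she now precedes bidder~2, bidder~2 is then blocked, bidder~3 is served); and, crucially, because the tie-breaking/ordering is by value only, there is a third declaration $z''$ that is served with a \emph{lower} realized welfare contribution than an intermediate one — the non-monotonicity needed. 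One lays out three declarations $z^{(1)}>z^{(2)}>z^{(3)}$ with allocations $A_i$ equal respectively to $\{1\}$, $\emptyset$, $\{1\}$ (or whatever the gadget yields) and checks that $w(z^{(1)},z^{(2)})+w(z^{(2)},z^{(3)})+w(z^{(3)},z^{(1)})<0$. Equivalently, and perhaps cleanest to write up: show that the allocation-as-a-function-of-$z$ is \emph{not monotone}, i.e.\ there exist $z<z'$ with bidder~$1$ served at $z$ but not at $z'$; since monotonicity of the served/not-served threshold is necessary for payment-augmentability in single-parameter domains \cite{MuaNis02}, this immediately gives the claim.

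The main obstacle is purely bookkeeping: picking the three fixed numbers so that the greedy \emph{ordering} (including the ``smaller-index bidder first'' tie rule in Algorithm~\ref{greedyalg}) does flip in the required direction, and then double-checking that the resulting allocation really is non-monotone rather than merely ``different.'' One must be careful that the chosen instance genuinely violates \emph{value-monotonicity} of the allocation for a single bidder — it is easy to write down an instance where the allocation changes but remains monotone, which would prove nothing. Once a valid flip is pinned down, the negative-cycle computation (or, equivalently, the contradiction with the threshold characterization of \cite{MuaNis02}) is a one-line arithmetic check. I would therefore organize the proof as: (i) recall that payment-augmentability is equivalent to no negative-weight cycle / to value-monotonicity of the allocation per bidder; (ii) present the three-bidder instance with explicit numbers and fix $\bi$; (iii) tabulate $A_1(z,\bi)$ for the three relevant values of $z$, exhibiting the monotonicity violation; (iv) conclude via \cite{MuaNis02,Vohra07} that no payments can make Algorithm~\ref{greedyalg} truthful, and note the instance uses only known single-minded bidders so the statement holds a fortiori for $k$-minded and for known bidders.
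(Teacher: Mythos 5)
Your high-level framework is right (appeal to the no-negative-cycle characterization of payment-implementability from \cite{Vohra07}, and the correct observation that $2$-cycles have weight zero so at least three declarations are needed), but the concrete instance you propose cannot work, and the reason is structural rather than a matter of bookkeeping. You restrict attention to \emph{single-minded} bidders and plan to exhibit a violation of value-monotonicity of the win/lose allocation for one bidder. For single-minded bidders, Algorithm~\ref{greedyalg} \emph{is} monotone: raising bidder $i$'s single bid only moves it earlier in the greedy order, the bids processed before hers form a prefix of those processed before hers at the lower value, the greedy decisions on that prefix are unchanged, so the goods and bidders already committed form a subset of what they were before; hence if her set was feasible at the lower value it remains feasible at the higher one. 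This is the classical argument of \cite{LOS,MuaNis02}, and the paper's own Proposition~\ref{prop:1-minded-equivalence} together with Theorem~\ref{thm:greedy:ic} already implies that for single-minded CAs the greedy rule \emph{can} be made truthful with payments. So no single-minded gadget — in particular not your $\{1\},\{1,2\},\{2\}$ path — can produce the required negative cycle.

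The paper's proof instead makes essential use of a genuinely $2$-minded bidder. It takes $\universe=\{a,b\}$, bidder $1$ interested in ${\cal S}_1=\{\{a\},\{b\}\}$, bidder $2$ single-minded for $\{a\}$ with value $1$, and three declarations of bidder $1$: $(1-\delta,0)$, $(1+\delta,1)$ and $(1-\delta,1-2\delta)$ for $(\{a\},\{b\})$. The greedy allocations to bidder $1$ are $\{b\}$, $\{a\}$, $\{b\}$ respectively, and the corresponding $3$-cycle in the (complete) declaration graph has weight $-(1-\delta)+\delta+0<0$. The non-monotonicity here lives in \emph{which set} the bidder receives as her two reported values vary jointly — a multi-dimensional effect that has no analogue in the single-parameter setting you chose. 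To repair your proof you would need to replace your gadget with one in which the strategic bidder is at least $2$-minded and the identity of the awarded set flips back and forth across three declarations.
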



\def\RandExp{\mathrm{RandExp}}
\def\RandPoly{\mathrm{RandPoly}}

\subsubsection{Randomized Exponential-Time Mechanism}
We describe the exponential-time mechanism, or $\RandExp$ in brief. Let $I$ be an instance of CAs with unknown $k$-minded bidders, and let $I_\ell$, $1 \leq \ell \leq k$, be the subinstance of $I$ that consists of the elementary bids $(i, S_i^\ell, v_i(S_i^\ell))$, $i \in N$, where $S_i^\ell$ denotes the $\ell$-th most valuable set demanded by bidder $i$. Then, $\RandExp$ computes the maximum social welfare $\OPT_\ell$ for each subinstance $I_\ell$ by breaking ties among optimal solutions in a bid-independent way, and outputs the allocation corresponding to $\OPT_\ell$ with probability $1/k$, for each $\ell \in [k]$. 

\begin{theorem}\label{t:k-exponential}
$\RandExp$ is a universally truthful mechanism without money and with verification for CAs with unknown $k$-minded bidders. It achieves an approximation ratio of $k$.
\end{theorem}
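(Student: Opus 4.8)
The plan is to establish two things: the approximation ratio of $k$, and universal truthfulness.

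\textbf{Approximation ratio.} The key observation is that each subinstance $I_\ell$ is a standard (single-minded) combinatorial auction instance, so computing $\OPT_\ell$ exactly is well-defined (in exponential time), and the allocation realizing $\OPT_\ell$ is feasible for the original instance $I$ as well, since it assigns to each bidder $i$ at most the single set $S_i^\ell$ and these sets are pairwise disjoint. I would then argue that $\sum_{\ell=1}^k \OPT_\ell \geq \OPT$, where $\OPT$ is the optimal social welfare of $I$. Indeed, take an optimal solution for $I$: it assigns to each winning bidder $i$ some demanded set, which is her $\ell_i$-th most valuable for some $\ell_i \in [k]$. Partition the winners according to the value of $\ell_i$; for each $\ell$, the winners with $\ell_i = \ell$ together with their assigned sets form a feasible solution to $I_\ell$, hence contribute at most $\OPT_\ell$. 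Summing over $\ell$ gives $\OPT \leq \sum_{\ell=1}^k \OPT_\ell$. Since $\RandExp$ outputs the allocation of value $\OPT_\ell$ with probability $1/k$, its expected social welfare is $\frac{1}{k}\sum_{\ell=1}^k \OPT_\ell \geq \OPT/k$, which is the claimed bound.

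\textbf{Universal truthfulness.} By definition, $\RandExp$ is a probability distribution (uniform over $\ell \in [k]$) over $k$ deterministic mechanisms, where the $\ell$-th mechanism $M_\ell$ takes the reported instance, forms the subinstance $I_\ell$ consisting of each bidder's $\ell$-th most valuable reported set, and outputs a welfare-maximizing allocation for $I_\ell$ (with a fixed bid-independent tie-breaking rule). It suffices to show each $M_\ell$ is truthful without money and with verification, i.e., $k$-set monotone (Theorem~\ref{thm:char:unknown}). Fix $i$, $\bi$, and declarations $a = (z, {\cal T})$, $b = (w, {\cal U})$ of bidder $i$ with $w(U) \geq z(T)$ where $T = (M_\ell)_i(a, \bi)$ and $U = \sigma(T|b) \in {\cal U}$, $U \subseteq T$. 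When $M_\ell$ runs on $(a,\bi)$ it uses only the $\ell$-th most valuable set of each bidder; let $T$ be $i$'s $\ell$-th set under $a$ (if $i$ is a winner in $M_\ell(a,\bi)$; note $(M_\ell)_i(a,\bi)$ is exactly a demanded set by exactness, so $T$ is the $\ell$-th ranked set of $a$ or $\emptyset$). The point is that exact welfare maximization on single-minded instances is monotone: a winning bid, when it does not decrease, still wins (with the fixed tie-break); and increasing $i$'s bid value for the set that is her $\ell$-th set can only make the corresponding single-minded bid more attractive, so $i$ still wins a set of value at least as large. The subtlety is that going from $a$ to $b$ may reorder $i$'s sets, so $U$ need not be $i$'s $\ell$-th most valuable set under $b$; however, since $w(U) = w(T) \geq z(T) \geq$ (critical value for winning the $\ell$-th slot in $M_\ell(a,\bi)$, which is unchanged as $\bi$ is fixed), if $U$ happens to be ranked $\ell$-th under $b$ then $i$ wins a set of value $\geq w(U)$; and if $U$ is ranked strictly higher (more valuable) under $b$, then $i$'s $\ell$-th set under $b$ has value $\geq w(U)$, and by the same monotonicity/threshold argument applied to the appropriate slot — here I would invoke that the maximum-welfare allocation is monotone with respect to each elementary bid's value and that $i$'s other slots only help — we still get $(M_\ell)_i(b,\bi) = S$ with $w(S) \geq w(U)$.

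\textbf{Main obstacle.} The delicate step is the last one: handling the possibility that the transformation from $a$ to $b$ permutes bidder $i$'s sets so that $U$ is not in the $\ell$-th position under $b$. The clean way around this is to note that $M_\ell$'s outcome for bidder $i$ depends on $b$ only through the single elementary bid $(i, U', w(U'))$ where $U'$ is $i$'s $\ell$-th most valuable set under $b$, and that $w(U') \geq w(U) \geq z(T)$ is at least the threshold that made $T$ win in $M_\ell(a,\bi)$ (the threshold is a function of $\bi$ only, by the standard critical-value property of welfare-maximizing single-minded auctions, cf.~the discussion after Theorem~\ref{thm:char:unknown}). Hence $i$ wins her $\ell$-th set $U'$ under $b$ with value $w(U') \geq w(U)$, establishing $k$-set monotonicity. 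I would also remark that the bid-independent tie-breaking is exactly what is needed to make each $M_\ell$ genuinely deterministic and to keep the threshold well-defined. Combining, each $M_\ell$ is truthful with verification and without money, so $\RandExp$ is universally truthful, completing the proof.
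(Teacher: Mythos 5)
Your approximation argument is correct and is the same as the paper's: the paper simply asserts $\OPT \le \sum_{\ell=1}^k \OPT_\ell$, and your partition-of-the-optimal-winners-by-rank argument is the right justification, together with the observation that each $\OPT_\ell$-allocation is feasible for $I$.

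The truthfulness part has a genuine gap, and it sits exactly at the subtlety you flag. You assert that if $U'$ is $i$'s $\ell$-th most valuable set under the true type $b$, then $w(U')\ge w(U)$. This is false precisely in the dangerous case: if $U=\sigma(T|b)$ is ranked \emph{above} position $\ell$ under $b$ (i.e., it is one of $i$'s more valuable true sets), then $w(U')< w(U)$, and $M_\ell$ can never hand $i$ anything worth more than $w(U')$, so $k$-set monotonicity cannot be concluded. A lying bidder controls which of her sets occupies position $\ell$. Concretely, take $\universe=\{x,y\}$ and $\ell=k=2$; bidder $1$'s true type is $w(\{x\})=10$, $w(\{y\})=5$, and bidder $2$'s second-ranked elementary bid is $(2,\{y\},100)$. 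Under truthful reporting $I_2=\{(1,\{y\},5),(2,\{y\},100)\}$ and bidder $1$ gets $\emptyset$. If she instead declares $z(\{y\})=20$ and $z(\{x\})=3$, her second-ranked bid becomes $(1,\{x\},3)$, the welfare maximizer on $I_2$ awards her $\{x\}$ (worth $10$ to her), and verification is not violated since $z(\{x\})=3\le w(\{x\})=10$ (the overbid on $\{y\}$ concerns an unallocated set and is never checked). So the deviation is admissible and profitable against $M_2$. Your fallback claim --- that $w(U')\ge z(T)\ge$ the threshold that made $T$ win, and that this threshold depends only on $\bi$ --- has a second flaw: in a welfare-maximizing single-minded auction the critical value depends on the demanded set as well as on $\bi$, so clearing $T$'s threshold says nothing about clearing $U'$'s threshold when $U'\ne T$.

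For comparison, the paper's own proof of truthfulness is a single sentence: each $I_\ell$ is a single-minded instance, so exact welfare maximization with bid-independent tie-breaking is truthful by the characterization. It does not engage with the reordering issue at all, implicitly treating the position-$\ell$ bid as if it were outside the bidder's control. You were right to isolate this as the delicate step --- it is the real content of the claim --- but the resolution you propose does not close it.
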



\subsubsection{Randomized Polynomial-Time Mechanism}
We describe the polynomial-time mechanism, or $\RandPoly$ in brief. Let $I$ be an instance of CAs with unknown $k$-minded bidders, let $v_{\max}$ be the maximum valuation of some bidder, and let $S_{\max}$ a set with valuation $v_{\max}$. Moreover, let $I_s$ be the subinstance of $I$ that consists of the elementary bids $(i, S, v_i(S))$, $i \in N$, where for each set $S$, $|S| \leq \sqrt{m}$. Then, $\RandPoly$ either only allocates $S_{\max}$ to the corresponding bidder breaking ties in a bid-independent way with probability $1/2$, or with probability $1/2$, outputs the allocation computed by the Algorithm \ref{greedyalg} on the subinstance $I_s$. Next, we show the following fact.

\begin{theorem}\label{t:simple_randomized}
$\RandPoly$ is a universally truthful mechanism without money and with verification for CAs with unknown $k$-minded bidders. It achieves an approximation ratio of $O(\sqrt{m})$.
\end{theorem}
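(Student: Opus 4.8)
The plan is to prove the two claims separately: universal truthfulness, and the $O(\sqrt m)$ approximation bound. For truthfulness, $\RandPoly$ is, by construction, a convex combination of two deterministic mechanisms (each taken with probability $1/2$), so it suffices to show that each of the two is truthful without money and with verification, i.e.\ $k$-set monotone in the sense of Definition~\ref{def:ksetmono}; universal truthfulness then follows from Theorem~\ref{thm:char:unknown} applied to each component. The first component allocates only $S_{\max}$ to the bidder holding the overall maximum valuation, breaking ties in a bid-independent way; I would argue $k$-set monotonicity exactly as in Case~1 of the proof of Theorem~\ref{thm:UB:supply:mod}: if bidder $i$ is the ``max bidder'' under declaration $a$ and raises her reported value for the awarded set (so $w(U)\ge z(T)$ with $U=\sigma(T|b)$, $U\subseteq T$), then since $\bi$ is unchanged she remains the max bidder under $b$, hence is still awarded a set of value $w(U)$ or more; if she is not the max bidder she is not served, and such bidders have no incentive to deviate. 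The second component runs Algorithm~\ref{greedyalg} on the fixed subinstance $I_s$ obtained by discarding every demanded set of size exceeding $\sqrt m$; the restriction to $I_s$ is a bid-independent filter (a set survives or not depending only on its cardinality, not on reported values), so $k$-set monotonicity of this component reduces to $k$-set monotonicity of Algorithm~\ref{greedyalg}, which is Theorem~\ref{thm:greedy:ic}. Hence both components are truthful, and so is $\RandPoly$.

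For the approximation guarantee, let $\OPT$ denote the optimal social welfare of $I$. Split the optimal allocation into the contribution $\OPT_{\text{big}}$ of bidders assigned (in the optimum) a set of size $>\sqrt m$ and the contribution $\OPT_{\text{small}}$ of the rest, so $\OPT=\OPT_{\text{big}}+\OPT_{\text{small}}$. Since the sets in the optimal allocation are pairwise disjoint and each ``big'' set uses more than $\sqrt m$ of the $m$ goods, there are fewer than $\sqrt m$ big winners, whence $\OPT_{\text{big}} < \sqrt m \cdot v_{\max}$. The first branch of $\RandPoly$ contributes exactly $v_{\max}$ with probability $1/2$, so it covers $\OPT_{\text{big}}$ up to a factor $O(\sqrt m)$. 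For $\OPT_{\text{small}}$, note that the ``small'' part of the optimal allocation is a feasible solution of the subinstance $I_s$, so the optimum of $I_s$ is at least $\OPT_{\text{small}}$; by Theorem~\ref{t:greedy-approx}, Algorithm~\ref{greedyalg} on $I_s$ returns a solution of value at least $\OPT_{\text{small}}/\min\{m,d+1\}$, and since every set in $I_s$ has size at most $\sqrt m$ we have $d\le\sqrt m$ there, giving value at least $\OPT_{\text{small}}/(\sqrt m+1)$. The second branch is taken with probability $1/2$. Combining, the expected social welfare of $\RandPoly$ is at least
\[
\frac12\, v_{\max} + \frac12\cdot\frac{\OPT_{\text{small}}}{\sqrt m+1}
\;\ge\; \frac{1}{2(\sqrt m+1)}\Bigl(\OPT_{\text{big}} + \OPT_{\text{small}}\Bigr)
\;=\; \frac{\OPT}{O(\sqrt m)},
\]
where the middle inequality uses $v_{\max} \ge \OPT_{\text{big}}/\sqrt m \ge \OPT_{\text{big}}/(\sqrt m+1)$. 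This yields the claimed $O(\sqrt m)$ bound.

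The main obstacle I anticipate is not the arithmetic but getting the truthfulness of the greedy branch exactly right under the verification model: the subinstance $I_s$ is defined by a threshold on set \emph{cardinality}, and one must be careful that a deviating bidder cannot smuggle a large, high-value set into consideration, and that the set $\sigma(T|b)=U\subseteq T$ guaranteed by the definition of $k$-set monotonicity indeed still lies in $I_s$ whenever $T$ did (which holds since $|U|\le|T|\le\sqrt m$). Once one checks that the cardinality filter commutes with the bidder's deviations in the required direction, the reduction to Theorem~\ref{thm:greedy:ic} is clean. A secondary point to handle carefully is the bid-independent tie-breaking in the first branch, which is needed so that the identity of the max bidder (and hence of the awarded set) is a deterministic, monotone function of the declarations, matching the argument already used for Algorithm~\ref{f:MPU-algo-modified}.
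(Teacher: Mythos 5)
Your proposal is correct and follows essentially the same route as the paper's proof: the same two-branch decomposition for universal truthfulness (max-set allocation plus the cardinality-filtered greedy, each reduced to the existing truthfulness arguments), and the same split of $\OPT$ into the large-set and small-set contributions with the bounds $\sqrt{m}\,v_{\max}$ and $(\sqrt{m}+1)\OPT_s$ respectively. Your extra remark that $U=\sigma(T|b)\subseteq T$ keeps the awarded set inside $I_s$ is a worthwhile detail the paper leaves implicit, but it does not change the argument.
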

%

\section{Lower bounds for known bidders}

%
We first adapt the proof of \cite[Theorem~3.3]{DG10} and show a lower bound of $2$ on the approximation ratio of any deterministic truthful mechanism. We highlight that this lower bound does not make any assumptions on the computational power of the mechanism, and holds even for exponential-time mechanisms.

\begin{theorem}\label{t:det_lb_2}
There are no deterministic truthful mechanisms with approximation ratio
better than $2$ for CAs with known $2$-minded bidders. This
holds even for simple instances with $n = 2$ bidders, and $m = 2$ goods.
\end{theorem}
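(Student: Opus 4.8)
The plan is to exhibit two $2$-minded bidders on two goods, analyze how any deterministic truthful mechanism must behave via the characterization of Theorem~\ref{thm:char:known} ($k$-monotonicity), and derive a contradiction with any approximation ratio strictly below $2$. Let $\universe=\{1,2\}$ and take both bidders to be interested in the two singletons: ${\cal S}_1={\cal S}_2=\{\{1\},\{2\}\}$. I will consider instances where each bidder has valuation close to $1$ on one of her two sets and close to $0$ (or much smaller) on the other, and play off the two symmetric ``hard'' instances against each other. Concretely, consider the instance $t$ where bidder $1$ values $\{1\}$ at $1$ and $\{2\}$ at $\epsilon$, and bidder $2$ values $\{1\}$ at $1$ and $\{2\}$ at $\epsilon$ as well (so both want good $1$). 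The optimum is $1+\epsilon$, achieved by giving good $1$ to one of them and good $2$ to the other; a better-than-$2$ mechanism must give each bidder a nonempty set, hence one bidder gets $\{1\}$ and the other gets $\{2\}$. By symmetry say bidder $1$ gets $\{1\}$ and bidder $2$ gets $\{2\}$.

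Next I would invoke $k$-monotonicity against bidder $2$. Fix $\bi$ to be bidder $1$'s declaration in $t$. Bidder $2$ is currently awarded $\{2\}$ with her declared value $\epsilon$ on it. Now let bidder $2$ raise her declared valuation on $\{2\}$ from $\epsilon$ to, say, $2$ while keeping $\{1\}$ at $1$; since $b(\{2\})=2\ge \epsilon=a(\{2\})$, Definition~\ref{def:kmono} forces the new allocation to bidder $2$ to be a set she values (under the new bid) at least $2$, i.e. she must still get $\{2\}$. But in this new instance $t'$ the only feasible allocation giving both bidders something assigns $\{2\}$ to bidder $2$ and $\{1\}$ to bidder $1$, with social welfare $1+2=3$, whereas assigning $\{2\}$ to bidder $2$ and nothing to bidder $1$ gives $2$; the alternative of giving $\{1\}$ to bidder $2$ and $\{2\}$ to bidder $1$ gives $1+\epsilon$. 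So far no contradiction — I need to also move bidder $1$. The cleaner route is to make the instance fully symmetric and use a counting/pigeonhole argument: consider the family of instances parameterized by which good each bidder ``really'' wants, use $k$-monotonicity to track the allocation as one bidder's high value is swapped between her two sets, and show that truthfulness forces the mechanism on some instance to allocate a set of value $\le \epsilon$ to a bidder who could have received value $\approx 1$, making the social welfare $\le 1+O(\epsilon)$ against an optimum $\approx 2$.

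The key steps, in order, are: (1) set up the two-bidder, two-good instance with both bidders desiring the same good at value $1$ and the other singleton at value $\epsilon$; (2) argue any mechanism with ratio $<2$ must, on this instance, hand out both goods, one per bidder, WLOG bidder $1\mapsto\{1\}$, bidder $2\mapsto\{2\}$; (3) apply Theorem~\ref{thm:char:known} to the bidder who got the ``cheap'' good: raising her declared value on the awarded set keeps her on that set, and symmetrically lowering bidder $1$'s value on $\{1\}$ below bidder $2$'s new value must still keep bidder $1$ out of $\{1\}$ unless bidder $1$ is bumped to $\{2\}$ — carefully chain these monotonicity constraints across two or three instances to pin the allocation; (4) land on an instance where the pinned allocation has social welfare $\le 1+O(\epsilon)$ while $\OPT\ge 2-O(\epsilon)$, and let $\epsilon\to 0$ to get ratio $\ge 2$. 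The main obstacle is step (3)–(4): I must choose the sequence of declaration changes so that the $k$-monotonicity constraints on \emph{both} bidders simultaneously force the mechanism into a provably bad allocation, rather than leaving it an escape by reallocating both goods; this is exactly where the structure of the verification model (only the awarded-set value is constrained, and only from below) is delicate, and getting the instances to interlock correctly — essentially mimicking the argument of \cite[Theorem~3.3]{DG10} adapted to CAs and to $k$-monotonicity — is the crux.
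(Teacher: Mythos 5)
Your proposal has a genuine gap, in two places. First, your step (2) does not go through for the instance you chose: with both bidders valuing $\{1\}$ at $1$ and $\{2\}$ at $\epsilon$, the optimum is only $1+\epsilon$, so a mechanism that hands good $1$ to one bidder and nothing to the other already achieves ratio $1+\epsilon$. Nothing forces a better-than-$2$ mechanism to allocate both goods on this instance, so the allocation is not pinned down and the subsequent monotonicity chain has no starting point. Second, and more importantly, the part you defer as ``the crux'' --- interlocking the monotonicity constraints so that some instance is provably served badly --- is exactly where the whole content of the proof lies, and you never execute it; the one application of $k$-monotonicity you do carry out (raising bidder $2$'s value on her awarded set) goes in the unprofitable direction, as you yourself observe.

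The missing idea is the choice of starting instance and the direction in which monotonicity is applied. The paper takes both bidders with ${\cal S}_1={\cal S}_2=\{\{a\},\{b\}\}$ and valuations $1+\delta$ on $\{a\}$ and $1$ on $\{b\}$. Here $\OPT=2+\delta$ while any allocation using only one good is worth at most $1+\delta$, and $(2+\delta)/(1+\delta)>2-\delta$; so a $(2-\delta)$-approximate mechanism \emph{must} allocate both goods, and some bidder --- say bidder $2$ --- receives her less-preferred set $\{b\}$. Now drop $v_2(\{b\})$ to $0$, keeping $v_2(\{a\})=1+\delta$. If the mechanism gave $\{a\}$ to bidder $2$ in this new instance, then applying Definition~\ref{def:kmono} \emph{from the new declaration back to the old one} (the value on $\{a\}$ has not decreased) would force bidder $2$ to receive a set worth at least $1+\delta$ in the original instance, contradicting that she got $\{b\}$ worth $1$ there. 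Hence in the new instance bidder $2$ cannot get $\{a\}$, the welfare is at most $1+\delta$ against an optimum of $2+\delta$, and the ratio exceeds $2-\delta$. Your sketch never isolates either of these two moves (the forced full allocation, and the reverse application of monotonicity), so as written it does not constitute a proof.
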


We note here, that our Algorithm \ref{greedyalg} is a truthful $2$-approximate mechanism on instances used in the proof of Theorem~\ref{t:det_lb_2}. This theorem indicates that our assumption that the bidders do not overbid on their winning sets is 
less powerful than the use of payments, when we do not take computational issues into consideration. Furthermore, it shows that, unlike the case of single-minded bidders, already with double-minded bidders the class of algorithms that can be implemented with money is a strict superset of the class of $2$-monotone algorithms.

%
Next, we apply Yao's principle 
and show that no randomized mechanism that is universally truthful can achieve an approximation ratio better than $5/4$.

\begin{theorem}\label{t:universally_truthful}
There are no randomized mechanisms that are universally truthful and have approximation ratio better than $5/4$ for CAs with known $2$-minded bidders. This holds even for simple instances with $n = 2$ bidders, and $m = 2$ goods.
\end{theorem}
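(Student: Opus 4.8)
The plan is to use Yao's principle: we exhibit a distribution over instances with $n=2$ bidders and $m=2$ goods such that every \emph{deterministic} truthful mechanism (i.e., every $2$-monotone algorithm, by Theorem~\ref{thm:char:known}) has expected approximation ratio at least $5/4$ against this distribution; since a universally truthful mechanism is a distribution over deterministic truthful ones, its expected ratio on the worst instance in the support is also at least $5/4$. So the first step is to design the hard distribution. I would build on the instance family used in the proof of Theorem~\ref{t:det_lb_2}: with $\universe=\{1,2\}$ and both bidders $2$-minded, the natural conflicting configuration is one where bidder $1$ and bidder $2$ both demand (among their two sets) the grand bundle $\{1,2\}$ and a singleton, so that serving the bundle to one bidder blocks the other. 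The $2$-approximation lower bound comes from the fact that a $2$-monotone mechanism, once it commits to giving $\{1,2\}$ to (say) bidder $1$ at some valuation level, is forced to keep serving bidder $1$ a no-worse set when bidder $1$ raises the bid, and this commitment can be exploited. For the randomized bound we need a \emph{mixture} of a constant number of such instances so that no single deterministic $2$-monotone rule does well on all of them simultaneously.

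Concretely, I would take a small finite set of instances $J_1,\dots,J_t$ (two or three should suffice, mirroring the $n=m=2$ case) that differ only in the reported valuations, chosen so that (i) on each $J_s$ the optimal social welfare is some fixed value, and (ii) any deterministic $2$-monotone algorithm, because of the threshold structure guaranteed by Lemma~\ref{le:ths}, is ``locked in'' by its behavior on the common low-valuation instance and hence is forced to be suboptimal on at least a $1/t$ fraction of the instances, losing a constant factor there. The key constraint-juggling is to pin down, via Lemma~\ref{le:ths}, the finitely many possible deterministic behaviors: each $2$-monotone mechanism on this family is described by a pair of thresholds $\Theta_i^1,\Theta_i^2$ for each bidder (as functions of the other bidder's fixed bid), so there is only a small finite number of relevant ``types'' of deterministic mechanism once we restrict valuations to the finite set of values appearing in $J_1,\dots,J_t$. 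Then I would do a short case analysis over these types, compute the social welfare each achieves on each $J_s$, and verify that the best achievable average is at most $\tfrac{4}{5}\OPT$, i.e., expected ratio $\ge 5/4$. Finally, by Yao, this lower bound for the best deterministic mechanism against the distribution transfers to a lower bound of $5/4$ for every universally truthful randomized mechanism on the worst instance in the support.

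The main obstacle I anticipate is choosing the valuation values in $J_1,\dots,J_t$ so that the arithmetic yields exactly the clean constant $5/4$ (rather than some messier fraction), while keeping the number of instances and the number of deterministic-mechanism cases small enough that the case analysis is genuinely short. This is a matter of solving a tiny linear program / balancing computation by hand: one wants the ``good'' and ``bad'' outcomes of the few admissible threshold-based mechanisms to be symmetric across the instances so that the adversary's distribution equalizes their performance, and then the ratio falls out of the gap between the locked-in suboptimal allocation and $\OPT$ on the instances where the mechanism is forced to err. Once the numbers are fixed, invoking Theorem~\ref{thm:char:known} (to replace ``deterministic truthful'' by ``$2$-monotone'') and Lemma~\ref{le:ths} (to enumerate the $2$-monotone behaviors) makes the rest routine.
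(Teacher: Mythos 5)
Your framework is the right one and matches the paper's: apply Yao's principle to a small distribution over $n=m=2$ instances, use Theorem~\ref{thm:char:known} to replace ``deterministic truthful'' by ``$2$-monotone,'' and show that $2$-monotonicity locks any deterministic mechanism into a suboptimal allocation on part of the support. However, the proposal stops exactly where the content of the theorem begins: you never exhibit the instances, the valuations, or the case analysis, and you explicitly defer ``choosing the valuation values so that the arithmetic yields exactly $5/4$'' as an anticipated obstacle. That construction is the proof. Without it there is nothing to verify, so this is a genuine gap rather than a complete argument.

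For the record, the paper's distribution is asymmetric and needs only two instances: bidder $2$ is \emph{single}-minded for $\{a\}$ with value $1$, and bidder $1$ demands $\{a,b\}$ (value $2$ in both instances) and $\{b\}$ (value $0$ in $I$, value $2-\delta$ in $I'$), each instance with probability $1/2$, so the expected optimum is $(5-\delta)/2$. If a deterministic truthful $A$ gives $\{a,b\}$ to bidder $1$ on $I$, then $2$-monotonicity forces the same allocation on $I'$ (her bid for $\{a,b\}$ did not decrease and no demanded set has higher value), yielding expected welfare $2$ and ratio $(5-\delta)/4$; if it does not, its welfare on $I$ is at most $1$, yielding ratio $(5-\delta)/(4-\delta)$. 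Two further remarks on your plan: (i) your proposed symmetric configuration in which \emph{both} bidders demand the grand bundle and a singleton is not obviously the right family --- the clean $5/4$ falls out of the asymmetric one --- and (ii) the enumeration via Lemma~\ref{le:ths} is heavier machinery than needed, since the thresholds depend on $\mathbf{b}_{-i}$ and the paper's direct two-case argument from Definition~\ref{def:kmono} is both shorter and easier to check.
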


\begin{proof}
We present a probability distribution over instances of CAs with $n = 2$ known $2$-minded bidders, and $m = 2$ goods, for which the best deterministic truthful mechanism has expected approximation ratio greater than $5/4-\delta$, for any $\delta > 0$.
We consider two instances $I$ and $I'$ where $\universe = \{ a, b\}$, the first bidder is interested in ${\cal S}_1 = \{ \{ a, b \}, \{ b \} \}$, and the second bidder is interested in ${\cal S}_2 = \{ \{ a \} \}$. In both, the valuation of bidder $2$ is $v_2(\{ a \}) = 1$. The valuation of bidder $1$ is $v_1(\{ a, b \}) = 2$ and $v_1(\{ b \}) = 0$ in $I$, and $v'_1(\{ a, b \}) = 2$ and $v'_1(\{ b \}) = 2 - \delta$ in $I'$. Each instance occurs with probability $1/2$, and the expected maximum social welfare is $(5-\delta)/2$.
Let $A$, applied to instance $I$, allocate $\{a, b\}$ to bidder $1$ and $\emptyset$ to bidder $2$. Then, by Theorem \ref{thm:char:known}, since $A$ is a deterministic truthful mechanism, when applied to instance $I'$, it must allocate $\{a, b\}$ to bidder $1$ and $\emptyset$ to bidder $2$. 
Therefore, the expected social welfare of $A$ is $2$, and its expected approximation ratio is $(5 - \delta) / 4 > 5/4 - \delta$.
If $A$, applied to instance $I$, does not allocate $\{ a, b\}$ to bidder $1$, its expected social welfare is at most $(4-\delta)/2$, and its expected approximation ratio is $(5 - \delta) / (4 - \delta) > 5/4 - \delta$, a contradiction.
\end{proof}

%
Finally, we show a weaker lower bound of $1.09$ on the approximation ratio achievable by the larger class of randomized mechanisms that are truthful in expectation.

\begin{theorem}\label{t:truthful_in_expectation}
There are no randomized mechanisms that are truthful in expectation and have approximation ratio less or equal than $1.09$ for CAs with known $2$-minded bidders. This holds even for simple instances with $n = 2$ bidders, and $m = 2$ goods.
\end{theorem}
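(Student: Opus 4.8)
The plan is to use the same instance schema as in the proofs of Theorems \ref{t:det_lb_2} and \ref{t:universally_truthful}, i.e., $n=2$ bidders and $m=2$ goods with $\universe=\{a,b\}$, but now to exploit a linear-programming / minimax argument tailored to truthful-in-expectation mechanisms. Recall that, by Theorem \ref{thm:char:known}, a deterministic truthful mechanism satisfies $2$-monotonicity, and a truthful-in-expectation mechanism is a distribution over outcomes whose \emph{expected} allocation probabilities obey the induced monotonicity/threshold constraints of Lemma \ref{le:ths} (with expected valuations replacing valuations in \eqref{eq:truthful}). So I would first fix bidder $2$ to be single-minded on $\{a\}$ with $v_2(\{a\})=1$, and let bidder $1$ be $2$-minded on ${\cal S}_1=\{\{a,b\},\{b\}\}$ with $v_1(\{a,b\})=2$ fixed and $v_1(\{b\})=\theta$ ranging over a carefully chosen finite set of values $\theta_1<\theta_2<\cdots<\theta_r$ in $[0,2]$. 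On each such instance the only relevant decision is a probability vector over the three feasible allocations: give $\{a,b\}$ to bidder $1$ (welfare $2$); give $\{b\}$ to bidder $1$ and $\{a\}$ to bidder $2$ (welfare $\theta+1$); give $\{a\}$ to bidder $2$ only (welfare $1$), plus the dominated ``nothing'' options.

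Next I would write down, as linear constraints in these probabilities, exactly what truthfulness-in-expectation imposes \emph{across} the instances $\theta_1,\dots,\theta_r$: since bidder $1$'s type differs only in $v_1(\{b\})$ and verification prevents her from overstating the value of an awarded set, the expected value bidder $1$ obtains must be monotone in $\theta$ in the appropriate sense, and moreover when her reported $v_1(\{b\})$ crosses the (instance-independent, by Lemma \ref{le:ths}) threshold $\Theta_1^2(\bi)$ the probability mass on the $\{b\}$-allocation can only change in one direction. This yields a (finite) linear program: minimize over all feasible probability assignments the worst-case ratio $\max_\ell \big(\OPT_\ell / \Ex{\text{welfare on }\theta_\ell}\big)$, equivalently maximize a common lower bound $1/\alpha$ on each $\Ex{\text{welfare}_\ell}/\OPT_\ell$ subject to the monotonicity constraints. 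By LP duality (equivalently Yao/minimax), exhibiting a feasible dual solution — a distribution over the instances $\theta_\ell$ together with a weighting — certifies that no truthful-in-expectation mechanism beats the resulting bound; I would pick the $\theta_\ell$ and the dual weights so that this bound evaluates to something strictly larger than $1.09$ (a small number of well-spaced values of $\theta$, e.g. three or four points, should suffice, and the optimum of the small LP can be computed explicitly).

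Concretely, the forward direction I would carry out is: (i) parametrize mechanisms on these instances by the allocation probabilities; (ii) translate the truthful-in-expectation condition via Theorem \ref{thm:char:known} and Lemma \ref{le:ths} into the statement that there is a single threshold value such that the expected $\{b\}$-win probability for bidder $1$ is $0$ below it and the expected awarded value is at least the threshold above it; (iii) case-split on where this threshold lies relative to $\{\theta_\ell\}$ and, in each case, bound $\sum_\ell \lambda_\ell \Ex{\text{welfare}_\ell}$ from above by a convex combination of the $\OPT_\ell$'s scaled by $1.09$, reaching a contradiction with the assumed ratio $\le 1.09$. The main obstacle I expect is step (ii)–(iii): getting the \emph{exact} numerical constant requires choosing the interpolation points $\theta_\ell$ and the adversary's mixing weights $\lambda_\ell$ so that the threshold-location case analysis is genuinely tight, and checking that \emph{every} placement of the critical value (including the degenerate cases where bidder $1$ is never or always served, and where ties in welfare occur) still forces the ratio above $1.09$; the monotonicity/threshold structure from Lemma \ref{le:ths} is precisely what rules out a mechanism from ``cheating'' by reallocating mass adaptively across the $\theta_\ell$'s, so the crux is to leverage that structure rather than any deeper property of the model.
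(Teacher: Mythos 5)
Your high-level setup matches the paper's: the same $m=2$, $n=2$ instance family (bidder $2$ single-minded on $\{a\}$ with value $1$, bidder $1$ two-minded on $\{\{a,b\},\{b\}\}$ with varying $v_1(\{b\})$), and the key incentive constraint that underbidding on $\{b\}$ is always a verification-safe deviation, so the expected welfare of bidder $1$ cannot decrease as her declared $v_1(\{b\})$ increases. The paper, however, actually closes the argument with just \emph{two} instances: it sets $v_1(\{a,b\})=\varphi$ (the golden ratio) and $v_1(\{b\})\in\{0,1\}$, writes the two approximation constraints (lower-bounding $p$ and $1-q$, the probabilities of the two non-dominated allocations) and the single incentive constraint $p\varphi \le q\varphi + (1-q)$, and verifies by direct arithmetic that these three inequalities are jointly infeasible for $\rho\le 1.09$. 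Your proposal stops at ``pick the $\theta_\ell$ and dual weights so the bound exceeds $1.09$'' without exhibiting any instances, weights, or arithmetic; since the content of the theorem is precisely the numerical constant, this is not yet a proof. Note also that the specific value $\varphi$ for $v_1(\{a,b\})$ (rather than your fixed choice of $2$) is what makes the constant come out to $1.09$, so the parameter you froze is exactly the one that must be optimized.

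There is also a soundness problem in your step (ii). Lemma \ref{le:ths} characterizes \emph{deterministic} $k$-monotone algorithms; it does not transfer to truthful-in-expectation mechanisms, and it is not true that every such mechanism has a single threshold below which the $\{b\}$-allocation probability is $0$. A mechanism whose probability $p_b(\theta)$ of allocating $\{b\}$ increases gradually with $\theta$ can perfectly well satisfy the only constraints that \emph{all} truthful-in-expectation mechanisms must satisfy, namely the pairwise inequalities $2p_{ab}(\theta)+\theta p_b(\theta) \ge 2p_{ab}(\theta') + \theta\, p_b(\theta')$ for $\theta'<\theta$ (the non-overbidding deviations). If your LP imposes the stronger $0/1$-threshold structure, the resulting lower bound applies only to a proper subclass of truthful-in-expectation mechanisms and does not establish the theorem. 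The fix is to drop the threshold claim entirely and use only the pairwise expected-value inequalities, which is what the paper does (with a single such inequality); with that repair and an explicit choice of instances and verification of the arithmetic, your LP/duality framing would go through and could in principle even yield a slightly better constant than the two-instance bound.
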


\subsection{Priority Mechanisms}

Our lower bounds in this section apply to truthful mechanisms that use algorithms that operate according to the priority framework introduced in \cite{BNR03}.

We now briefly introduce this framework. The input of a priority mechanism is a finite subset $I$ of the class $\mathcal{I}$ of all permissible input items. For CAs, we consider two classes of input items. The first class consists of elementary bids, i.e., each item 
is a triple $(i, S, v_i(S))$, where $i$ is the bidder, $S$ is one of $i$'s demanded sets, and $v_i(S)$ is $i$'s valuation for $S$. The second class of input items consists of bidders, i.e., each item 
is a pair $(i, v_i)$, where $i$ is the bidder and $v_i$ is $i$'s valuation function.
%
The output of a priority mechanism consists of a decision for each input item processed. If elementary bids are the input items, the output consists of an accept or reject decision for each bid. If a bid $(i, S, v_i(S))$ is accepted, $S$ is allocated to $i$, and the algorithm obtains a welfare of $v_i(S)$. If bidders $(i, v_i)$ are the input items, the output consists of a (possibly empty) set $S$ allocated to bidder $i$, and the algorithm collects a welfare of $v_i(S)$.

A (possibly adaptive) priority mechanism $A$ receives as input a finite set of items $I \subseteq \mathcal{I}$, and proceeds in rounds, processing a single item in each round. While there are unprocessed items in $I$, $A$ selects a total order $\mathcal{T}$ on $\mathcal{I}$ without looking at the set of unprocessed items. It is important that $\mathcal{T}$ can be any total order on $\mathcal{I}$, and that for adaptive priority algorithms, the order may be different in each round. In each round, $A$ receives the first (according to $\mathcal{T}$) unprocessed item $x \in I$ and makes an irrevocable decision for it (e.g., if $x$ is an elementary bid, $A$ accepts or rejects it, if $x$ is a bidder, $A$ decides about the set allocated to her). Then, $x$ is removed from $I$. 


We first show that any truthful priority mechanism $A$ that processes elementary bids has an approximation ratio of at least $d$ for CAs with known bidders. 
The proof of the next result adapts the proof of \cite[Theorem~3]{BL10}. 


\begin{theorem}\label{t:bids_lb}
Let $A$ be a truthful priority mechanism with verification and no money for CAs with known $k$-minded bidders. If $A$ processes elementary bids then the approximation ratio of $A$ is greater than $(1-\delta) d$, for any $\delta > 0$.
\end{theorem}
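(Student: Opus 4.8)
The plan is to construct a family of instances with $n$ bidders and a single good-structure such that any priority mechanism processing elementary bids is forced, by its own commitment rule, into an allocation of welfare roughly $1/d$ of the optimum, while truthfulness (via $k$-monotonicity, Theorem~\ref{thm:char:known}) prevents it from recovering. Following the spirit of \cite[Theorem~3]{BL10}, I would use a ``hard'' gadget built on $d$ goods, say $\{e_1,\dots,e_d\}$, where one ``big'' bidder is single-minded on the full set $\{e_1,\dots,e_d\}$ with valuation $\approx 1$, and $d$ ``small'' bidders are each single-minded on a distinct singleton $\{e_j\}$ with valuation $\approx 1-\delta'$ (for a tiny $\delta'$ depending on $\delta$). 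The optimum is to serve the $d$ small bidders, for welfare $\approx d(1-\delta')$; a bad algorithm serves the big bidder for welfare $\approx 1$.

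The key steps, in order: First I would recall that a priority mechanism fixes a total order $\mathcal{T}$ on the universe of \emph{all} permissible input items before seeing which ones actually occur, and processes the first present item irrevocably. The adversary argument then works by an adaptive choice of instance: we reveal items along $\mathcal{T}$ and, when the mechanism commits to a decision on the first present item, we choose the remainder of the instance to punish that decision. Concretely, if the first item the mechanism encounters is the big bidder's bid $(i_0,\{e_1,\dots,e_d\},1)$ and it accepts, the adversary includes all $d$ small singleton bids, so $\OPT\approx d(1-\delta')$ but the big set blocks every small set and the mechanism is stuck at welfare $\approx 1$; if it rejects that bid, the adversary presents \emph{only} the big bid (no small bidders), so $\OPT\approx 1$ but the mechanism gets $0$ — an unbounded ratio, hence in particular worse than $(1-\delta)d$. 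If instead the first present item along $\mathcal{T}$ is one of the small singleton bids $(i_j,\{e_j\},1-\delta')$, we can accept-case/reject-case it similarly, but here I must be more careful: the natural symmetric move is to iterate, peeling off small bidders one at a time, and argue that to avoid the trap the mechanism must accept \emph{all} $d$ singletons — but then it does well, so the adversary must instead force the big bidder's set in early. The cleanest route is to let the adversary, upon seeing the mechanism accept some prefix of singletons, reveal that actually the big bidder is present and, crucially, that the only way to reach welfare near $d$ requires having rejected the big bidder on \emph{its} first appearance; since $\mathcal{T}$ is fixed, the big bid's position is fixed, and a standard case analysis on whether it precedes or follows ``enough'' singletons yields the bound.

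To turn this into the clean statement ``approximation ratio $>(1-\delta)d$'', I would (i) fix $\delta>0$, choose $\delta'$ small enough that $d(1-\delta')/1 > (1-\delta)d$ and also $d(1-\delta')/(1+(d-1)\cdot 0) $ etc.\ all exceed $(1-\delta)d$, absorbing the finitely many slack terms; (ii) observe that verification and no-money truthfulness buy the adversary nothing extra here beyond $k$-monotonicity, but also cost nothing — the construction only ever makes bidders single-minded, so every bidder is trivially $k$-monotone-constrained exactly as a $1$-minded bidder, and the argument is purely about the priority commitment, not about incentive manipulation; (iii) note the instances use $d$ goods and $d+1$ bidders, matching the regime of the greedy upper bound $\min\{m,d+1\}$, so the lower bound is essentially tight. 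The main obstacle I anticipate is the \textbf{case where the mechanism's order $\mathcal{T}$ interleaves the big bid among the singleton bids in an awkward position}: I must show that no matter where the big bid sits, the adversary can still force a $\approx 1$ vs.\ $\approx d$ gap. The fix is to make the adversary's instance depend on the mechanism's first \emph{commitment} rather than on the order alone: as soon as the mechanism either (a) accepts the big set, or (b) rejects a singleton, or (c) has accepted fewer than $d$ singletons by the time the big bid is reached, the adversary finalizes an instance where that commitment is fatal; the only surviving branch — accepting all $d$ singletons and thereby (implicitly) rejecting the big set — is precisely the good allocation, which the adversary simply never reaches because it finalizes earlier. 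Making this ``finalize on first commitment'' bookkeeping precise, and checking it is consistent (the revealed items always form a valid finite instance whose future along $\mathcal{T}$ matches what the mechanism has already seen), is the technical heart of the proof; the arithmetic comparing $d(1-\delta')$ to $(1-\delta)d$ is routine.
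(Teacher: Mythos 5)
Your construction does not establish the theorem, and the obstacle you flag at the end is not a bookkeeping issue but a genuine hole. The branch you dismiss --- the order $\mathcal{T}$ happens to place all $d$ singleton bids before the big bid and the mechanism accepts each of them --- is not one ``the adversary never reaches'': none of your fatal commitments (a), (b), (c) occurs in it, the mechanism ends with the optimal allocation on the full instance, and on every sub-instance consistent with what it has seen it also behaves acceptably. More fundamentally, you explicitly decline to use truthfulness (``the argument is purely about the priority commitment''), but then the bound you are after is simply false: your instances contain only single-minded bidders, and there are monotone --- hence, by Proposition~\ref{prop:1-minded-equivalence}, truthful without money with verification --- fixed-order priority algorithms on elementary bids, e.g.\ greedy by $v_i(S)/\sqrt{|S|}$, which are $O(\sqrt{m})$-approximate and in particular optimal on your family (each singleton scores $1-\delta'$ while the big set scores $1/\sqrt{d}$, so all singletons are processed and accepted first). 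Hence no adversary argument confined to single-minded instances can yield a lower bound of $(1-\delta)d$ once $d$ exceeds roughly $\sqrt{m}$.

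The paper's proof closes exactly this gap by making the bidders multi-minded and invoking $k$-monotonicity. Every bidder demands the $d$-element set $L$ at value $1+\delta$ \emph{and} every proper subset $S\subset L$ at value $1$. If the first processed bid is $(i,L,1+\delta)$, accepting it gives welfare $1+\delta$ against $\OPT=d$, and rejecting it is fatal on the sub-instance containing only that bid. If the first processed bid is some $(i,S,1)$ and it is accepted, then by prefix-indistinguishability it is also accepted first on the sub-instance $\{(i,S,1),(i,L,1+\delta)\}$; now lower $i$'s declared value for $S$ to $1/d$: the approximation guarantee forces $A$ to allocate $L$ to $i$ in the perturbed instance, and the two declarations of bidder $i$ then violate $k$-monotonicity (Definition~\ref{def:kmono}), contradicting truthfulness via Theorem~\ref{thm:char:known}. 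That incentive step is what your proposal is missing, and it cannot be replaced by a purely adversarial argument over single-minded instances.
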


We note that with a minor change in the proof, the lower bound of Theorem~\ref{t:bids_lb} applies to the special case of $2$-minded bidders. Thus, taking into account instances with $d = m$, we obtain the following result.

\begin{corollary}\label{cor:bids_lb}
Let $A$ be a truthful priority mechanism with verification and no money for CAs with known $2$-minded bidders. If $A$ processes elementary bids then the approximation ratio of $A$ is greater than $(1-\delta) m$, for any $\delta > 0$.
\end{corollary}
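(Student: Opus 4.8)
\textbf{Proof plan for Corollary~\ref{cor:bids_lb}.}
The plan is to revisit the construction behind Theorem~\ref{t:bids_lb} and observe that it can be instantiated so that every bidder is (at most) $2$-minded, while still forcing the priority mechanism to pay the full factor~$d$; then one picks a family of instances in which all demanded sets have size $d=m$, so that the lower bound of $(1-\delta)d$ becomes $(1-\delta)m$. Concretely, I would recall the instances used in Theorem~\ref{t:bids_lb}: a ``hard'' bid on a large set of size $d$ that the adversary can make the mechanism commit to (because of its priority ordering, chosen before the set of present items is revealed), against a collection of small competing bids on the elements of that set whose combined welfare is roughly $d$ times larger. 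The key point is that in that construction each bidder already only needs a bounded number of demanded sets; I would check that two sets per bidder suffice — e.g., give the ``big'' bidder the large set plus one small fallback set, and give each of the competing bidders a singleton together with one dummy alternative — so that the whole argument goes through verbatim in the $2$-minded setting. This is the ``minor change in the proof'' alluded to in the paragraph preceding the corollary.

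The steps, in order, would be: (i) restate the adversarial item-set and the two (or few) candidate orderings from the proof of Theorem~\ref{t:bids_lb}, now with each bidder carrying exactly two demanded sets; (ii) use the fact that a priority mechanism must fix its total order $\mathcal{T}$ on $\mathcal{I}$ without seeing which items are actually present, so the adversary can choose, after seeing the first processed item, which of two instances to present; (iii) invoke Theorem~\ref{thm:char:known} ($2$-monotonicity) exactly as in Theorem~\ref{t:bids_lb} to argue that whatever irrevocable decision $A$ makes on that first item is binding and forces a bad outcome on one of the two instances; (iv) compute the resulting ratio, which is $(1-\delta)d$ with $d$ the common set size; (v) finally, specialize the construction so that the large set is the whole ground set, i.e. $d = m$, giving approximation ratio greater than $(1-\delta)m$.

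The main obstacle I expect is step~(i): verifying that collapsing to $2$-minded bidders does not weaken the adversary. In the elementary-bid priority model the mechanism sees bids, not bidders, so a priori one might worry that the mechanism exploits many bids per bidder; but since the demand constraint only helps the mechanism by \emph{restricting} feasible allocations, and the lower-bound instances are designed so that the conflict is already over the ground set (capacity constraints), two demanded sets per bidder is enough to reproduce exactly the same accept/reject dilemma. I would also need to make sure the verification/truthfulness part still bites — i.e., that the two instances I play against each other differ only in a way consistent with \eqref{eq:ver:vals} on the winning set, so that $2$-monotonicity (Theorem~\ref{thm:char:known}) indeed applies; this is the same check already performed in Theorem~\ref{t:bids_lb} and carries over with the set sizes set to $m$.
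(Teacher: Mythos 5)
Your overall route is exactly the paper's: the corollary is obtained by a minor modification of the proof of Theorem~\ref{t:bids_lb} so that every bidder is $2$-minded, followed by specializing to $d=m$. Steps (ii)--(v) of your plan are fine. However, the concrete instantiation you float in step (i) --- ``give the big bidder the large set plus one small fallback set, and give each of the competing bidders a singleton together with one dummy alternative'' --- would break Case~2 of that proof. There, the first item processed is a small bid $(i,S,1)$ which $A$ accepts; the contradiction is reached by restricting to the sub-instance containing \emph{both} $(i,S,1)$ and $(i,L,1+\delta)$ \emph{from the same bidder $i$}, then lowering $i$'s value for $S$ to $1/d$ so that the approximation guarantee forces $A$ to award $L$ to $i$, violating $k$-monotonicity (Theorem~\ref{thm:char:known}). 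If the competing bidders do not demand $L$ at all, this truthfulness step has nothing to bite on: accepting the singleton in the reduced instance is then near-optimal, and no contradiction arises. So collapsing to two sets per bidder must be done symmetrically, not by stripping $L$ from the competitors.

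The correct $2$-minded instance is: fix $L\subseteq\universe$ with $|L|=d$ and take $d$ bidders, where bidder $i$ demands $L$ with value $1+\delta$ and the distinct singleton $\{a_i\}\subset L$ with value $1$. The optimum is still $d$ (each bidder gets her singleton), Case~1 is unchanged, and in Case~2 the first small bid $(i,\{a_i\},1)$ belongs to a bidder who also demands $L$, so the sub-instance and the underbidding-to-$1/d$ argument go through verbatim. Setting $L=\universe$, i.e.\ $d=m$, then gives the ratio greater than $(1-\delta)m$. With that repair your argument coincides with the paper's.
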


We next show that any truthful priority mechanism $A$ that processes bidders has an approximation ratio of at least $m/2$ for CAs with known $k$-minded bidders. We note that such priority mechanism $A$ is potentially more powerful than a priority mechanism that processes elementary bids, since when $A$ decides about the set allocated to each bidder $i$, it has full information about $i$'s valuation function. The proof of the following result adapts 
the proof of \cite[Theorem~4]{BL10} to our setting.

\begin{theorem}\label{t:bidders_lb}
Let $A$ be a truthful priority mechanism with verification and no money for CAs with known $2$-minded bidders. If $A$ processes bidders then the approximation ratio of $A$ is greater than $(1-\delta) m/2$, for any $\delta > 0$.
\end{theorem}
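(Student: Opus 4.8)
The plan is to adapt the construction from the proof of Theorem~\ref{t:bidders_lb}'s progenitor, \cite[Theorem~4]{BL10}, exploiting that a priority mechanism processing bidders must commit to an allocation for the first bidder it sees \emph{before} knowing the rest of the instance, and that, by Theorem~\ref{thm:char:known}, truthfulness forces a monotone response when a bidder increases her valuation on a winning set. First I would fix a large ground set $\universe$ with $|\universe| = m$ and a parameter $\delta > 0$, and describe a family of instances over $n$ bidders, each $2$-minded, in which every bidder demands the whole ground set $\universe$ (valuation roughly $1$) as one of her two sets, and some small ``private'' set as the other. The adversary's strategy is to watch which item the priority mechanism $\mathcal{T}$ ranks first; whichever bidder $i$ that is, the adversary tailors the remaining bidders so that, in the ``bad'' branch, the social welfare is dominated by giving $\universe$ (or close to it) to a single high-value bidder who is \emph{not} $i$, while the mechanism has already been forced into a poor decision on $i$.

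Concretely, the key steps, in order, would be: (i) set up the two candidate instances $I$ and $I'$ that agree on bidder $i$'s declaration but differ on a later bidder, chosen so that in $I$ the optimum allocates many small disjoint sets (total welfare $\approx m/2$ or a bit less, using $2$-element sets) while in $I'$ the optimum allocates $\universe$ to one bidder with valuation $\approx 1$ but scaled up so it dwarfs everything; (ii) argue that since $A$ is a priority mechanism processing bidders and $i$ is first in the order $\mathcal{T}$, the decision $A_i(\cdot)$ on bidder $i$ is identical in $I$ and $I'$ because $A$ cannot see the other items when it orders and processes $i$; (iii) case-split on that decision: if $A$ allocates a ``large'' set to $i$, then in $I$ it blocks all the small sets and gets welfare $O(1)$ against $\OPT = \Theta(m)$, giving ratio $\ge (1-\delta)m/2$; if $A$ allocates a ``small'' set (or $\emptyset$) to $i$, then switch to $I'$, where $\OPT$ is the single huge valuation on $\universe$ and $A$'s choice for $i$ plus whatever it can salvage from the remaining bidders is a bounded fraction of it; (iv) invoke Theorem~\ref{thm:char:known} only where needed — namely to rule out the mechanism ``escaping'' by reacting to a valuation change — and to restrict attention to the case $d = m$ so that the two demanded sets can be $\universe$ and a size-two set, matching the known-$2$-minded hypothesis. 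Finally I would tune the numeric gaps so the ratio is $(1-\delta)m/2$ for arbitrary $\delta > 0$.

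The main obstacle I anticipate is handling the bidders that come \emph{after} $i$ in the order $\mathcal{T}$: a mechanism processing bidders is strictly more powerful than one processing elementary bids, because when it reaches a later bidder $i'$ it knows $i'$'s full valuation function and could, in principle, recover some welfare by cleverly packing the remaining small sets around whatever it gave $i$. To defeat this I would make the ``small'' sets mutually overlapping in a way that, once $i$'s committed set is fixed, at most one more can be packed — e.g.\ arrange all small sets to share a common element, or use a sunflower-type structure — so that the combinatorial packing bound is genuinely $O(1)$ regardless of the later adaptive choices; alternatively, following \cite{BL10}, use sufficiently many ``decoy'' bidders so that the adversary can always present, after bidder $i$, a configuration whose optimum is realized only by an allocation incompatible with $A$'s commitment on $i$. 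A secondary technical point is that the adversary must choose the later instance \emph{after} seeing $\mathcal{T}$ but the instance is still a fixed input to the (randomized-order-free) priority mechanism — this is exactly the standard priority-framework adversary argument and needs only to be stated carefully, noting that $\mathcal{T}$ is chosen ``without looking at the set of unprocessed items,'' so the adversary's power is well-defined.
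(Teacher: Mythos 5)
Your proposal correctly identifies the source (the adaptation of \cite[Theorem~4]{BL10}) and the right intuition (irrevocable commitments on early bidders versus a later ``huge'' bidder), but the core of your plan --- a single two-instance case split on the first processed bidder $i$ --- does not close the argument. The problem is your step (iii): if $A$ allocates $\emptyset$ to bidder $i$, it is safe in \emph{both} of your instances. In $I$ it can still serve the remaining small disjoint sets and collect $\Theta(m)$ welfare, and in $I'$ it can still hand $\universe$ to the later high-value bidder, so your claim that ``$A$'s choice for $i$ plus whatever it can salvage from the remaining bidders is a bounded fraction'' of $\OPT$ is false in the $\emptyset$ sub-case. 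This is exactly why the paper's proof cannot stop after one round: it needs (a) a forcing lemma (the analogue of \cite[Lemma~5]{BL10}) showing that whenever one bidder has a double-minded valuation of order $m^2$ for a singleton or for $\universe$ and all others are unit single-minded bidders, truthfulness ($k$-monotonicity via Theorem~\ref{thm:char:known}) plus the approximation guarantee force $A$ to allocate all of $\universe$ to that bidder and $\emptyset$ to everyone else; (b) an induction (Proposition~\ref{prop:half_sequence}) showing that, because any of the still-unprocessed bidders might turn out to be such a huge bidder, $A$ must process and allocate $\emptyset$ to each of the first $m/2-1$ bidders in every restricted superinstance; and (c) a final instance in which those $m/2-1$ bidders demanded distinct goods worth $1$ each while all remaining bidders collide on a single good $a_m$, so $A$ collects at most $1$ against $\OPT = m/2$. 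Your proposal contains neither the induction nor the final collision instance; the ``decoy bidders'' remark gestures toward them but is not worked out.

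A secondary issue: your proposed fix for the post-$i$ bidders --- making all small sets share a common element or form a sunflower so that ``at most one more can be packed'' --- would also cap the \emph{optimal} welfare of instance $I$ at $O(1)$ plus one large set, destroying the $\Theta(m)$ gap you need in that branch. The paper instead keeps the small demanded sets pairwise disjoint for the first $m/2-1$ bidders (so the optimum there really is $\Omega(m)$) and achieves the packing obstruction only at the very end, by making the last $m/2+1$ bidders all single-minded for the same good. Finally, note where truthfulness actually enters: not to prevent a generic ``escape,'' but specifically to upgrade the allocation to the huge bidder from the singleton $\{a_k\}$ (valuation $m^2+\delta$) to the full set $\universe$ (valuation $m^2+2\delta$), which is what forces $A$ to keep the entire universe unallocated while the threat of a huge bidder remains.
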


\subsection{Discussion}\label{sec:discussion}
A step that seems necessary for an approximation ratio of $O(\sqrt{m})$ for CAs is that the algorithm somehow compares the social welfare and chooses the best of the following two extreme solutions: a solution that only consists of the most valuable set demanded by some bidder, and a solution consisting of many small sets with a large total valuation. Otherwise, the algorithm cannot achieve an approximation ratio of $o(m)$ even for the simple case where bidder $1$ is double-minded for $\universe = \{ a_1, \ldots, a_m \}$ with valuation $x \in \{ 1+\eps, m^2 \}$ and for the good $a_1$ with valuation $1$, and each bidder $i$, $2 \leq i \leq m$, is single-minded for the good $a_i$ with valuation $1$. In fact, this is one of the restrictions of priority algorithms exploited in the proofs of the lower bounds of $\Omega(m)$ above.

On the other hand, comparing the social welfare of these two extreme solutions seems also sufficient for an $O(\sqrt{m})$-approximation, in the sense that taking the best of (i) the most valuable set demanded by some bidder, and (ii) the solution of Algorithm~\ref{greedyalg}, if we only allocate sets of cardinality at most $\sqrt{m}$, gives an $O(\sqrt{m})$-approximation (see Theorem~\ref{t:simple_randomized} for the analysis, see also e.g., \cite[Section~6]{BKV05} for another example of an $O(\sqrt{m})$-approximation algorithm based on a similar comparison).

For CAs without money, it seems virtually impossible to let a deterministic mechanism truthfully implement a comparison between the social welfare of those extreme solutions. This is because the only way for a deterministic mechanism to make sure that the bidder with the maximum valuation does not lie about it is to allocate her most valuable set to her, so that verification applies to this particular bid (see also how Algorithm~\ref{f:MPU-algo-modified} learns about $v_{\max}$). But this leads to an approximation ratio of $\Omega(m)$. In fact, this seems the main obstacle towards a deterministic truthful $O(\sqrt{m})$-approximate mechanism for CAs with $k$-minded bidders. So, we conjecture that there is a strong lower bound of $\Omega(m)$ on the approximation ratio of deterministic truthful mechanisms.


\bibliographystyle{plain}
\bibliography{cawithver}

\appendix

\section{Postponed proofs}
\subsection{Proof of Lemma \ref{le:ths}}
\begin{proof}
Let us first show that the existence of the thresholds for an algorithm $A$ implies that $A$ is $k$-monotone. Fix $i$, $\bi$ and $t_i$; let $a$ be a declaration in $D_i$ such that $a(S_i^j)>\Theta_i^j(\bi)$ for some $j \in [k]$ and $a(S_i^\ell)<\Theta_i^\ell(\bi)$ for $\ell < j$. Then $A_i(a,\bi)=S_i^j$. Now take $b \in D_i$ such that $b(S_i^j)\geq a(S_i^j)$. No matter what the declaration $b(S_i^h)$ is, for $h \neq j$, by definition of thresholds we have that $A_i(b,\bi)=S_i^l$, $l \leq j$.

For the opposite direction, fix $i$, $\bi$ and $t_i$; let $a$ be a declaration in $D_i$ such that $A_i(a,\bi)=S_i^j$. We prove by induction on $j$ that there exists a threshold $\Theta_i^j(\bi)$ as in the statement. For the base case consider $j=1$. It is straightforward to see that the threshold $\Theta_i^1(\bi)$ exists \cite{LOS}. Now assume that there are thresholds $\Theta_i^1(\bi), \ldots, \Theta_i^{j-1}(\bi)$ and take $b \in D_i$ to be such that $b(S_i^j)\geq a(S_i^j)$ and $b(S_i^h)<\Theta_i^{h}(\bi)$ for $h<j$. By definition of $\Theta_i^1(\bi), \ldots, \Theta_i^{j-1}(\bi)$ and that of $k$-monotonicity, $\Set(A_i(b,\bi)|t_i)=S_i^j$. We can then conclude that there exists a threshold $\Theta_i^j(\bi)$ as well.
\end{proof}

\subsection{Proof of Lemma \ref{lemma-MPU-approx1}}
\begin{proof}
We first prove the first bound. Because the algorithm sells only sets above their thresholds, $v_i(S_i) \ge \sum_{e \in S_i} p_e^i$.
Hence,
\[
v(S) \ge \sum_{i=1}^n \sum_{e \in S_i} p_e^i = \sum_{i=1}^n \sum_{e \in S_i}  p_0 r^{\ell_e^i}
= p_0 \sum_{e \in U} \sum_{k=0}^{\ell_e^*-1} r^k = p_0 \sum_{e \in U} \frac{r^{\ell_e^*}-1}{r-1}
\enspace ,
\]
which gives the bound.

For the second bound of the lemma, consider an optimal feasible allocation $T = (T_1,\ldots,T_n)$. The set choice in line \ref{l:MPU:setchoice} in the algorithm of \cite{KV12} is done by asking the demand oracle.\footnote{Given goods' prices, the demand oracle of a given bidder with valuation function $v_i$ outputs the set $S$ maximizing the difference between the set's valuation $v_i(S)$ and the sum of the prices of its goods.} We observe here that the set choice in this line in Algorithm~\ref{f:MPU-algo} is sufficient for showing this claim. Namely, if $v_i(T_i) \ge \sum_{e \in T_i} p_e^i$, then by the choice of the algorithm $v_i(S_i) \geq v_i(T_i)$, and so we have $v_i(S_i) \ge v_i(T_i) - \sum_{e \in T_i} p_e^i$. If on the other hand $v_i(T_i) < \sum_{e \in T_i} p_e^i$, then $v_i(S_i) \ge v_i(T_i) - \sum_{e \in T_i} p_e^i$ also holds.

Because $p^*_{e} \geq p^i_{e}$, for every $i$ and $e$, this implies $v_i(S_i) \geq v_i(T_i) - \sum_{e \in T_i} p^*_{e}$. Summing over all bidders gives
$
   v(S) = \sum_{i=1}^n v_i(S_i) \geq \sum_{i=1}^n v_i(T_i) - \sum_{i=1}^n \sum_{e \in T_i} p^*_{e} \geq v(T) - b \sum_{e \in U} p^*_{e},
$
where the latter equation uses the fact that $T$ is feasible so that each good is given to at most $b$ sets.
\end{proof}

\subsection{Proof of Theorem \ref{thm:UB:supply:mod:rand}}
\begin{proof}
The produced solution $R$ is feasible, because we ask bidders in line \ref{l:MPU-algo-mod-rand:setchoice} only for sets containing available goods.

For proving the approximation ratio we observe that the only difference between the randomized rounding algorithm from \cite{KV12} and Algorithm \ref{f:MPU-algo-modified-rand} is that our algorithm serves first the max bidder and allocates her best set with probability $1$ rather than with probability $q \leq 1$, the remaining bidders are served as in the former algorithm. We restate Lemma 5 and 4 from \cite{KV12} here. The first of these lemmas is used to prove the second and the second implies our approximation guarantee.

\begin{lemma}\label{lemma-5}
Let us consider CAs with demanded sets of size at most $d \ge 1$ and let $q^{-1} = 2 {\rm e} d^{1/b} \log(4bm)$. For any $1 \le i \le n$ and any set $T \subseteq \universe$ with $|T| \leq d$, we have $\Ex{v_i(T \cap \universe _i)} \ge \frac{1}{2} v_i(T)$.
\end{lemma}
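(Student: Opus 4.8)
The plan is to fix a bidder $i$ and a set $T$ with $|T| \le d$, and to lower bound $\Ex{v_i(T \cap \universe_i)}$ by analyzing the probability that a ``large'' piece of $T$ survives the randomized rounding. The key observation is that a good $e \in T$ is removed from $\universe_i$ only if at least $b$ copies of $e$ have been allocated among the bidders processed before $i$; since each allocation happens independently with probability $q$, and the number of times the price of $e$ is increased before bidder $i$ is at most $s = \log(4bm)$ (this is exactly the overselling bound from \cite[Lemma~1]{KV12} with $r = 2^{1/b}$), the number of copies of $e$ actually removed is a sum of at most $sb$ independent Bernoulli$(q)$ random variables. Hence $\Pr[e \notin \universe_i] \le \Pr[\mathrm{Bin}(sb, q) \ge b]$, and by a Chernoff/Markov-type bound together with the choice $q^{-1} = 2\mathrm{e}\, d^{1/b}\log(4bm) = 2\mathrm{e}\, d^{1/b} s$, one gets $\Pr[e \notin \universe_i] \le \big(\tfrac{\mathrm{e} s b q}{b}\big)^{b} = \big(\tfrac{\mathrm{e} s q}{1}\big)^{b}\!\cdot\! b^{-b}\cdot b^{b}$; plugging in $q$ gives a bound of the form $\le (2d)^{-1} \le \tfrac{1}{2d}$. (The exact constant juggling is routine; the point is the exponent $b$ fighting $d^{1/b}$.)

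Next I would pass from a per-good survival bound to a statement about $v_i(T \cap \universe_i)$. Because valuations are monotone, removing a single good $e$ from $T$ can only decrease $v_i$, but not in a controlled additive way in general; so instead I would use the standard trick: for each $e \in T$, let $Z_e$ be the indicator that $e$ was removed, so $\Ex{Z_e} \le \tfrac{1}{2d}$ and $\Ex{\sum_{e\in T} Z_e} \le \tfrac{|T|}{2d} \le \tfrac12$. By Markov's inequality, with probability at least $\tfrac12$ we have $\sum_{e \in T} Z_e < 1$, i.e. \emph{no} good of $T$ is removed, so $T \cap \universe_i = T$ and $v_i(T \cap \universe_i) = v_i(T)$. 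Therefore $\Ex{v_i(T \cap \universe_i)} \ge \Pr[T \subseteq \universe_i] \cdot v_i(T) \ge \tfrac12 v_i(T)$, which is exactly the claim. An alternative, slightly cleaner route is: since $v_i$ is monotone, $\Ex{v_i(T\cap \universe_i)} \ge \Ex{v_i(T) \cdot \mathbf{1}[T \subseteq \universe_i]}$, and a union bound gives $\Pr[T \not\subseteq \universe_i] \le \sum_{e\in T}\Pr[e \notin \universe_i] \le |T|/(2d) \le 1/2$.

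The main obstacle is getting the tail bound on $\mathrm{Bin}(sb, q) \ge b$ to come out as $\tfrac{1}{2d}$ \emph{with the exponent~$b$ doing the work against $d^{1/b}$}: one needs $\Pr[\mathrm{Bin}(sb,q)\ge b] \le \binom{sb}{b} q^{b} \le (\mathrm{e} s q)^{b} = (\mathrm{e} s / (2\mathrm{e}\, d^{1/b} s))^{b} = (2d^{1/b})^{-b} = (2^b d)^{-1} \le \tfrac{1}{2d}$, using $\binom{sb}{b} \le (\mathrm{e} s b / b)^b = (\mathrm{e} s)^b$. One must also be careful that this holds \emph{conditioned on} the (adaptively chosen) price trajectory, which is fine because the overselling bound $s = \log(4bm)$ on the number of price-doublings of any good holds deterministically, independent of the coin flips, and the removal events for $e$ before bidder $i$ are independent Bernoulli$(q)$ given which bidders were offered $e$. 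Finally, note the max bidder $j$ is allocated her set with probability $1$ rather than $q$; this only affects one good-copy per good in her set, which can be absorbed into the constants (or handled by noting $s$ already accounts for one extra doubling), so the bound is unaffected up to constants. Everything else — the sum over bidders, linearity of expectation in Lemma~4 — is routine and deferred.
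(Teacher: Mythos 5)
Your argument is correct and is essentially the proof the paper relies on: the paper does not reprove this lemma but simply restates it from [KV12] with the remark that the proof ``remains unchanged,'' and that proof is exactly your combination of the deterministic bound of $sb$ price-increases per good, the stochastic domination by $\mathrm{Bin}(sb,q)$, the tail estimate $\binom{sb}{b}q^b\le(esq)^b=(2^bd)^{-1}\le(2d)^{-1}$, and a union bound over the at most $d$ goods of $T$ followed by monotonicity of $v_i$. Your closing caveats (conditioning on the adaptive price trajectory, and the max bidder being rounded with probability $1$) are the right things to worry about; the paper defers the max-bidder issue to the proof of the subsequent approximation lemma rather than to this one.
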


\begin{lemma}~\label{lemma-MPU-approx3}
Suppose the probability $q^{-1} = 2 {\rm e} d^{1/b} \log(4bm)$ is used in Algorithm~\ref{f:MPU-algo-modified-rand}.
Then $\Ex{v(S)} \ge \frac{1}{8}  OPT$ and  $\Ex{v(R)} \ge \frac{q}{8} OPT$.
\end{lemma}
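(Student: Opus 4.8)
The plan is to follow the analysis of the randomized rounding algorithm of \cite{KV12} (the proof of their Lemma~4, which is our Lemma~\ref{lemma-MPU-approx3}), adapting it to the only structural change in Algorithm~\ref{f:MPU-algo-modified-rand}, namely that the max bidder $j$ is served first and allocated her set $S_j$ deterministically, i.e.\ with probability $1$ instead of $q$. Throughout, fix an optimal allocation $T=(T_1,\dots,T_n)$ with $v(T)=\OPT$; without loss of generality $T_i\in{\cal S}_i\cup\{\emptyset\}$ for every $i$, so $|T_i|\le d$ whenever $T_i\neq\emptyset$. Let $p_e^*$ and $\ell_e^*$ denote, respectively, the (random) final price of good $e$ and the number of times its price is updated, so that $p_e^*=p_0 r^{\ell_e^*}$; recall $r=2^{1/b}$.

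First I would carry over the two estimates of Lemma~\ref{lemma-MPU-approx1}. The first, $v(S)\ge\frac{1}{r-1}\bigl(\sum_{e\in\universe}p_e^*-mp_0\bigr)$, holds for every realisation of the coins, since its proof uses only that line~\ref{l:MPU-algo-mod-rand:setchoice} allocates sets above their current threshold, together with the telescoping of the geometric price increases; it therefore holds in expectation, and in particular is insensitive both to the availability restriction $S\subseteq\universe_i$ and to serving $j$ first. For the second estimate I would show, for every bidder $i$, that $v_i(S_i)\ge v_i(T_i\cap\universe_i)-\sum_{e\in T_i}p_e^i$: letting $S'\in{\cal S}_i\cup\{\emptyset\}$ define $v_i(T_i\cap\universe_i)$, we have $S'\subseteq T_i\cap\universe_i$, so $S'$ is available, and either $S'$ clears its threshold and is thus a candidate in line~\ref{l:MPU-algo-mod-rand:setchoice}, giving $v_i(S_i)\ge v_i(S')=v_i(T_i\cap\universe_i)$, or it does not, giving $v_i(S_i)\ge 0\ge v_i(S')-\sum_{e\in S'}p_e^i$; since $0\le p_e^i\le p_e^*$ and $S'\subseteq T_i$, the claimed inequality follows in both cases (and it is trivial for $i=j$, where $\universe_j=\universe$). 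Taking expectations and applying Lemma~\ref{lemma-5} (legitimate because $|T_i|\le d$) yields $\Ex{v_i(S_i)}\ge\frac12 v_i(T_i)-\Ex{\sum_{e\in T_i}p_e^*}$; summing over $i$ and using that $T$ is feasible (each good lies in at most $b$ of the $T_i$) gives $\Ex{v(S)}\ge\frac12\OPT-b\,\Ex{\sum_{e\in\universe}p_e^*}$.

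Then I would combine these two expectation bounds exactly as in the proof of Theorem~\ref{theorem:improved-Bartal}: multiplying the first by $b(r-1)$ and substituting the second gives $\bigl(b(r-1)+1\bigr)\Ex{v(S)}\ge\frac12\OPT-bmp_0$. Since $(1+1/b)^b\ge 2$ for $b\ge1$ we have $b(r-1)\le1$, and $bmp_0=\mu/4=(1+\eps)v_{\max}/4\le(1+\eps)\OPT/4$ because $v_{\max}\le\OPT$; hence $\Ex{v(S)}\ge\frac{1-\eps}{8}\OPT$, i.e.\ $\frac18\OPT$ up to the fixed constant $\eps$ (which only costs a factor $1/(1-\eps)$ and is absorbed in the $O(\cdot)$ of the approximation ratio). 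For the second inequality of the lemma, observe that $R_j=S_j$ with probability $1$, whereas for $i\neq j$ the coin deciding $R_i=S_i$ versus $R_i=\emptyset$ is independent of $S_i$ (which is determined by the coins of the bidders processed before $i$), so $\Ex{v_i(R_i)}=q\,\Ex{v_i(S_i)}$; therefore $\Ex{v(R)}=\Ex{v_j(S_j)}+q\sum_{i\neq j}\Ex{v_i(S_i)}\ge q\,\Ex{v(S)}\ge\frac{q}{8}\OPT$, using $q\le1$.

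The main obstacle is the second estimate: one must connect the set $S_i$ actually picked by the algorithm — which is forced to be demanded, above threshold, and built only from currently available goods — to the optimal set $T_i$, and then turn this per-bidder, deterministic-flavoured bound into an expectation via Lemma~\ref{lemma-5}. Lemma~\ref{lemma-5} is the genuinely technical ingredient: its proof combines a union bound over the $\le d$ goods of $T_i$ with a binomial-tail estimate $\binom{bs}{b}q^b\le 2^{-b}/d$ for the event that a given good of $T_i$ has been exhausted before bidder $i$ (using that each good appears in at most $bs=b\log(4bm)$ of the sets $S_{i'}$, is consumed at each such step independently with probability $q$, and $q^{-1}=2{\rm e}\,d^{1/b}\log(4bm)$). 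We may, however, invoke Lemma~\ref{lemma-5} as a stated fact; given it, the rest is the bookkeeping combination above.
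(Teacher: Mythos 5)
Your proposal is correct and follows essentially the same route as the paper, which simply imports Lemma 4 of \cite{KV12} and notes the two modifications you spell out (the deterministically served max bidder only helps, and the argmax choice in line~\ref{l:MPU-algo-mod-rand:setchoice} still yields the per-bidder bound $v_i(S_i)\ge v_i(T_i\cap \universe_i)-\sum_{e\in T_i}p_e^i$ needed to combine with Lemma~\ref{lemma-5}). The $(1-\eps)/8$ constant you obtain, versus the stated $1/8$, is exactly the harmless slack from setting $\mu=(1+\eps)v_{\max}$, and you correctly note it is absorbed in the $O(\cdot)$ of Theorem~\ref{thm:UB:supply:mod:rand}.
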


The proof of the former lemma in \cite{KV12} remains unchanged in our case. The only change that is needed in the proof from \cite{KV12} of the latter lemma is firstly to observe that the way of handling the max bidder actually helps the approximation ratio. 
Secondly, as shown in the proof of
the second bound of Lemma \ref{lemma-MPU-approx1},
Algorithm~\ref{f:MPU-algo-modified-rand} chooses in line \ref{l:MPU-algo-mod-rand:setchoice} set $S_i := \mbox{argmax } \{v_i(S): S \in {\cal S}_i \mbox{ such that } S \subseteq \universe _i \mbox{ and } v_i(S) \geq \sum_{e \in S} p^i_e\}$, which implies that $v_i(S_i) \ge v_i(T_i) - \sum_{e \in T_i} p_e^i$, and this is what is needed in the proof.

The universal truthfulness of Algorithm~\ref{f:MPU-algo-modified-rand} is proved in the main body of the paper.
\end{proof}

\subsection{Proof of Theorem \ref{thm:UB:supply:mod:rand:any-m}}
\begin{proof} We only give a sketch here. We use Algorithm \ref{f:MPU-algo-modified-rand} as a subroutine and use a standard randomization on the top of this algorithm. Namely, the mechanism flips a fair coin to choose one out of two algorithms. If the coin shows head, then Algorithm \ref{f:MPU-algo-modified-rand} is executed with parameter $q$ set analogously to the case of sets of size at most $d = \lfloor m^{\frac{b}{b+1}}\rfloor$, that is, with $q^{-1} = 2 {\rm e} d^{1/b} \log(4bm)$. If the coin shows tail, then the mechanism only considers sets of full cardinality~$m$. This setting corresponds to an auction where each bidder wants to buy only a single super item (corresponding to $\universe$) which is available in $b$ copies. The $b$ copies of the super item are sold by calling Algorithm \ref{f:MPU-algo-modified-rand} with $m = 1$ (single item) and $d=1$ (sets of cardinality 1). This mechanism can easily be shown to be universally truthful and a simple analysis shows the claimed approximation ratio, see, e.g., \cite{KV12}.
\end{proof}

\subsection{Proof of Proposition \ref{prop:nopay4greedy}}
\begin{proof}
We consider $3$ instances $I$, $I'$, $I''$, with known bidders and $\universe = \{ a, b \}$. In all of them, bidder $1$ is interested in ${\cal S}_1 = \{ \{ a \}, \{ b \} \}$, and bidder $2$ is interested in ${\cal S}_2 = \{ \{ a \} \}$ and has $v_2(\{ a \}) = 1$. Let $\delta > 0$ be some small constant less than $1/2$. In $I$, bidder $1$ has $v_1(\{a\}) = 1-\delta$ and $v_1(\{b\}) = 0$. In $I'$, bidder $1$ has $v'_1(\{a\}) = 1+\delta$ and $v'_1(\{b\}) = 1$. In $I''$, bidder $1$ has $v''_1(\{a\}) = 1-\delta$ and $v''_1(\{b\}) = 1-2\delta$. Therefore, the allocation of Algorithm~\ref{greedyalg} is $\{ b \}$ to bidder $1$ and $\{ a \}$ to bidder $2$ for instance $I$, $\{ a \}$ to bidder $1$ and $\emptyset$ to bidder $2$ for instance $I'$, and $\{ b \}$ to bidder $1$ and $\{ a \}$ to bidder $2$ for instance $I''$. Now we consider the vertices corresponding to $v_1$, $v_1'$, and $v_1''$ of the declaration graph for Algorithm~\ref{greedyalg} and find an edge $(v_1, v_1')$ of weight $-(1-\delta)$, an edge $(v_1', v_1'')$ of weight $\delta$, and an edge $(v_1'', v_1)$ of weight $0$. Therefore, the declaration graph contains a negative cycle $v_1 \rightarrow v_1' \rightarrow v_1'' \rightarrow v_1$, which implies that the valuation-greedy mechanism cannot be truthfully implemented with money.
\end{proof}

\subsection{Proof of Theorem \ref{t:k-exponential}}
\begin{proof}
Since all bidders are single-minded in each subinstance $I_\ell$, the allocation corresponding to the maximum social welfare $\OPT_\ell$ using a fixed tie-breaking rule is, by Theorem \ref{thm:char:unknown}, truthful, for each $\ell \in [k]$. Therefore, $\RandExp$ is universally truthful. As for the approximation ratio, the expected social welfare of $\RandExp$ is $\sum_{\ell=1}^k \OPT_\ell / k$. Since the maximum social welfare of $I$ is at most $\sum_{\ell=1}^k \OPT_\ell$, $\RandExp$ has an approximation ratio of $k$.
\end{proof}

\subsection{Proof of Theorem \ref{t:simple_randomized}}
\begin{proof}
As for the truthfulness of $\RandPoly$, the deterministic allocation of the set of maximum valuation with a fixed tie-breaking rule is, by Theorem \ref{thm:char:unknown}, truthful. Furthermore, the formation of the subinstance $I_s$ and the application of Algorithm \ref{greedyalg} to it can be regarded as a run of the algorithm on $I$ where any elementary bid $(i, S, v_i(S))$ with $|S| > \sqrt{m}$ is considered unfeasible and immediately rejected. By using the arguments used in Theorem~\ref{thm:greedy:ic}, we can prove that this $\sqrt{m}$-cardinality-sensitive variant of Algorithm \ref{greedyalg} to $I$ is truthful. Therefore, $\RandPoly$ is universally truthful.

As for the approximation ratio, we let $\OPT$ denote the optimal social welfare of $I$, $\OPT_s$ denote the optimal social welfare of the subinstance $I_s$, and $\OPT_l$ denote the optimal social welfare of the subinstance $I_l = I \setminus I_s$. We observe that $\OPT \leq \OPT_s + \OPT_l$. Since $I_l$ contains only elementary bids $(i, S, v_i(S))$ with $|S| > \sqrt{m}$ and $v_i(S) \leq v_{\max}$, $\OPT_l \leq \sqrt{m} v_{\max}$. Moreover, by Theorem~\ref{t:greedy-approx}, the allocation computed by the application of valuation-greedy to $I_s$ is a $(\sqrt{m}+1)$-approximation of $\OPT_s$. Hence, the expected social welfare of $\RandPoly$ is at least:
\[
 \frac{1}{2}\left( v_{\max} + \frac{\OPT_s}{\sqrt{m}+1}\right) \geq \frac{\OPT_s + \OPT_l}{2(\sqrt{m}+1)}
\]
Therefore, the approximation ratio of $\RandPoly$ is $2(\sqrt{m}+1)$.
\end{proof}

\subsection{Proof of Theorem~\ref{t:det_lb_2}}

\begin{proof}
For sake of contradiction, let us assume that there is a deterministic truthful mechanism $A$ with an approximation ratio of $2 - \delta$, for some $\delta > 0$. We consider two instances where $\universe = \{ a, b\}$, and both bidders are in ${\cal S}_1 = {\cal S}_2 = \{ \{ a \}, \{ b \} \}$. In the first instance, $v_1(\{ a \})
= 1+\delta$ and $v_1(\{ b \}) = 1$, and $v_2(\{ a \}) = 1+\delta$ and $v_2(\{ b \}) = 1$. Since $A$ is a $(2-\delta)$-approximation algorithm, $A(v_1, v_2)$ must allocate both sets $\{ a \}$ and $\{ b \}$. Without loss of generality, we assume that $A(v_1, v_2)$ allocates $\{ a \}$ to bidder $1$ and $\{ b \}$ to bidder $2$. Moreover, by Theorem \ref{thm:char:known}, if $v'_2(\{ a \}) = 1+\delta$ and $v'_2(\{ b \}) = 0$, then $A(v_1, v'_2)$ must allocate $\{ a \}$ to bidder $1$ and $\{ b \}$ to bidder $2$. Therefore, $A$ has an approximation ratio of at least $(2+\delta) / (1+\delta)$, which is larger than $2-\delta$, for all $\delta > 0$. \end{proof}

\subsection{Proof of Theorem \ref{t:truthful_in_expectation}}
\begin{proof}
For sake of contradiction, we assume that there is a randomized truthful-in-expectation mechanism $A$ with approximation ratio at most $\rho = 1.09$. 

As in the proofs of Theorems~\ref{t:det_lb_2} and \ref{t:universally_truthful}, we consider instances where $\universe = \{ a, b\}$, the first bidder is interested in ${\cal S}_1 = \{ \{ a, b \}, \{ b \} \}$, and the second bidder is interested in ${\cal S}_2 = \{ \{ a \} \}$. We consider two instances $I$ and $I'$. In both of them, the valuation of bidder $2$ is $v_2(\{ a \}) = 1$. The valuation of bidder $1$ is $v_1(\{ a, b \}) = \varphi$, where $\varphi = (1+\sqrt{5})/2$ is the golden ratio, and $v_1(\{ b \}) = 0$ in $I$, and $v'_1(\{ a, b \}) = \varphi$ and $v'_1(\{ b \}) = 1$ in $I'$.

We assume that $A$ can have only two different solutions for instances $I$ and $I'$. More specifically, either $A$ allocates $\{ a, b \}$ to bidder $1$ and $\emptyset$ to bidder $2$, which happens with probability $p$ for instance $I$ and $q$ for instance $I'$, or $A$ allocates $\{ b \}$ to bidder $1$ and $\{ a \}$ to bidder $2$, which happens with probability $1-p$ for instance $I$ and $1-q$ for instance $I'$. Note that this assumption is without loss of generality, since all the other feasible solutions have worst social welfare than the two considered.

Using that the approximation ratio of $A$ is at most $\rho$, we obtain that:
\begin{align}
\frac{\varphi}{p \varphi + 1 - p} \leq \rho & \Rightarrow
p \geq \frac{\varphi - \rho}{\rho(\varphi - 1)} & \label{eq:inst1}\\
\frac{2}{q \varphi + 2 (1-q)} \leq \rho & \Rightarrow
1 - q \geq \frac{2 - \rho \varphi}{\rho(2 - \varphi)} & \label{eq:inst2}
\end{align}
where (\ref{eq:inst1}) follows from the approximation ratio of $A$ for instance $I$, and (\ref{eq:inst2}) follows from the approximation ratio of $A$ for instance $I'$.

Moreover, since $A$ is truthful in expectation, the expected welfare of bidder $1$ from $A$'s allocation for instance $I$, which is $p \varphi$, does not exceed her expected welfare from $A$'s allocation for instance $I'$, which is $q \varphi + (1 - q)$. Otherwise, bidder $1$ could underbid on $\{ b \}$ by declaring $v_1$, and get an expected welfare of $p \varphi$. Therefore, we obtain that:
\begin{align}
p \varphi \leq q \varphi + 1 - q & \Rightarrow
q \geq \frac{p \varphi - 1}{\varphi - 1} & \label{eq:trf}
\end{align}

Combining (\ref{eq:inst1}), (\ref{eq:inst2}), and (\ref{eq:trf}), we conclude that $\rho$ satisfies that:
\[ \frac{2 - \rho \varphi}{\rho (2 - \varphi)} + \frac{\varphi\frac{\varphi - \rho}{\rho(\varphi - 1)} - 1}{\varphi -1} \leq 1 \]
This is a contradiction, because the inequality above does not hold if $\rho \in [1, 1.09]$. Thus, we conclude that any randomized mechanism $A$ for CAs that is truthful-in-expectation, has an approximation ratio worse than $1.09$.
\end{proof}

\subsection{Proof of Theorem \ref{t:bids_lb}}
\begin{proof}
For sake of contradiction, let us assume that for some given $d$, $2 \leq d \leq m$, there is a truthful priority mechanism $A$ for CAs with known $k$-minded bidders that achieves an approximation ratio of $(1-\delta)d$, for some constant $\delta > 0$.

Let $L$ be any subset of $\universe$ of cardinality $d$. As an input to $A$, we consider an instance $I_1$ that for each bidder $i$, contains elementary bids $(i, L, 1+\delta)$ and $(i, S, 1)$, for all $\emptyset \neq S \subset L$. As a priority mechanism, $A$ selects a bid from $I_1$ and considers it first. In the following, we distinguish between the case where the first bid is $(i, L, 1+\delta)$, for some bidder $i$, and the case where the first bid is $(i, S, 1)$, for some bidder $i$ and some set $S$, and show how to arrive at contradiction in both.

\noindent \underline{Case 1.} Let us assume that the first bid is $(i, L, 1+\delta)$, for some bidder $i$. Then, if $A$ accepts $(i, L, 1+\delta)$, it obtains a social welfare of $1+\delta$, while the optimal social welfare is $d$, which contradicts the hypothesis that the approximation ratio of $A$ is $(1-\delta)d$.
If $A$ rejects $(i, L, 1+\delta)$, we consider $A$'s approximation ratio for an instance $I_2 \subseteq I_1$ that includes only the elementary bid $(i, L, 1+\delta)$. Since $A$ cannot distinguish between $I_1$ and $I_2$, it rejects $(i, L, 1+\delta)$ when considering $I_2$, which leads to an unbounded approximation ratio for $I_2$.

\noindent \underline{Case 2.} Let us assume that the first bid is $(i, S, 1)$, for some bidder $i$ and some set $\emptyset \neq S \subset L$. Then, if $A$ accepts $(i, S, 1)$, we consider $A$'s approximation ratio for an instance $I_3 \subseteq I_1$ that includes the elementary bids $(i, S, 1)$ and $(i, L, 1+\delta)$. Since $A$ cannot distinguish between $I_1$ and $I_3$, it again selects bid $(i, S, 1)$ first and accepts it, when it considers $I_3$. But then consider the instance $I_3'$ in which $i$ changes $(i, S, 1)$ into $(i, S, 1/d)$. 
Since $A$ has an approximation ratio of $(1-\delta)d$, it must allocate $L$ to bidder $i$ in $I_3'$. But this contradicts the hypothesis that $A$ is truthful since the two bids of bidder $i$ in $I_3$ and $I_3'$ do not satisfy $k$-monotonicity (cf. Definition \ref{def:kmono}).

If $A$ rejects $(i, S, 1)$, we consider $A$'s approximation ratio for an instance $I_4 \subseteq I_1$ that includes only the elementary bid $(i, S, 1)$. Since $A$ cannot distinguish between $I_1$ and $I_4$, it rejects $(i, S, 1)$ when considering $I_4$, which leads to an unbounded approximation ratio for $I_4$.
\end{proof}

\subsection{Proof of Theorem \ref{t:bidders_lb}}
\begin{proof}
For sake of contradiction, let us assume that there is a truthful priority mechanism $A$ for CAs with known $2$-minded bidders that processes bidders and achieves an approximation ratio of $(1-\delta)m/2$, for some constant $\delta > 0$.

We consider a universe $\universe = \{ a_1, \ldots, a_m \}$ and $m$ bidders. For each $i$, $1 \leq i \leq m$, we let $g_i$ be a single-minded valuation where the demanded set is $\{ a_i \}$ with valuation $1$. More specifically, for each $S \subseteq \universe$, $g_i(S) = 1$, if $a_i \in S$, and $g_i(S) = 0$, otherwise. Moreover, for each $i$, $1 \leq i \leq m$, we let $f_i$ be a double-minded valuation where the demanded set is either $\{ a_i \}$ with valuation $m^2+\delta$, or $\universe$ with valuation $m^2+2\delta$. More specifically, $f_i(\universe) = m^2+2\delta$, and for each $S \subset \universe$, $f_i(S) = m^2+\delta$, if $a_i \in S$, and $f_i(S) = 0$, otherwise. In the following, we only consider \emph{restricted instances} of CAs where every bidder has a valuation of type either $g$ or $f$.

We first show that for any bidder $i$ and for all instances where all bidders $j \neq i$ have single-minded valuations of type $g$ and $i$ has a valuation of type $f$, $A$ allocates $\universe$ to $i$ and $\emptyset$ to any bidder $j \neq i$ (this claim is the equivalent of \cite[Lemma~5]{BL10} in our setting). We let $f_k$, for some $1 \leq k \leq m$, be the valuation of bidder $i$. Since the optimal social welfare is at least $m^2+2\delta$, $A$ allocates either $\universe$ or a set $S \supseteq \{ a_k \}$ to bidder $i$. Otherwise, the social welfare of $A$ would be at most $m - 1$, which contradicts the hypothesis that the approximation ratio of $A$ is $(1-\delta)m/2$. However, the fact that $A$ is truthful, implies, by Theorem \ref{thm:char:known}, that $A$  must assign $\universe$ to $i$ on input $g$ valuations from bidders other than $i$ and $f_k$ valuation from $i$. Indeed, assume that it is not the case and consider a new instance in which bidder $i$ declares a single-minded valuation where the demanded set is $\universe$ with valuation $m^2+2\delta$. Because of the approximation guarantee of $A$, on this new instance, $A$ must grant $\universe$ to $i$. The two declarations would then contradict $k$-monotonicity (cf. Definition \ref{def:kmono}).
Therefore, $A$ allocates $\universe$ to bidder $i$ and $\emptyset$ to any bidder $j \neq i$.

Using this claim, we can prove the following proposition which is identical to \cite[Lemma~6]{BL10}. The proof is by induction on $i$, and is omitted because it is essentially identical to the proof in \cite
{BL10}. In fact, that proof 
uses only standard properties of priority algorithms, the assumption that the approximation ratio of $A$ is $(1-\delta)m/2$, and \cite[Lemma~5]{BL10}, which, in our case, is replaced by the claim above.

\begin{proposition}\label{prop:half_sequence}
Let $A$ be any truthful priority mechanism which for restricted instances with $m$ goods, achieves an approximation ratio of $(1-\delta)m/2$, for some constant $\delta > 0$. Then, there exists a labeling of the bidders and the goods such that the following holds for all $i \in \{ 0, 1, \ldots, m/2 - 1 \}$. Let instance $I_i = \{ (j, g_j) : 1 \leq j \leq i \}$. Then, for any restricted instance $I \supseteq I_i$, $A$ considers all the bidders in $I_i$ before all other bidders in $I$, and allocates $\emptyset$ to each bidder in $I_i$.
\end{proposition}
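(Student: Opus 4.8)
The plan is to prove the proposition by induction on $i$, closely following the proof of \cite[Lemma~6]{BL10} but invoking the claim established above in place of \cite[Lemma~5]{BL10}. The base case $i=0$ is vacuous since $I_0=\emptyset$. For the inductive step I would assume the statement for some $i$ with $i+1\le m/2-1$, so that in every restricted instance $I\supseteq I_i$ the mechanism $A$ processes the bidders $1,\dots,i$ in its first $i$ rounds, each receiving $\emptyset$.

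The first thing I would establish is that, because $A$ is a priority mechanism, these first $i$ rounds are in fact \emph{identical} for every restricted $I\supseteq I_i$, so that the priority order $\mathcal{T}$ used in round $i+1$ and the history preceding that round are well defined and instance-independent. Indeed, the order used in round $1$ is chosen without looking at the unprocessed input; if some item of a bidder not in $\{1,\dots,i\}$ preceded, in this order, the first item $(j,g_j)$ with $j\le i$, one could build a restricted instance $\supseteq I_i$ in which round $1$ processes a bidder outside $\{1,\dots,i\}$, contradicting the inductive hypothesis. Hence round $1$ processes the same bidder, with the same decision $\emptyset$, in all such instances; iterating the argument over rounds $2,\dots,i$ fixes the entire prefix.

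Next I would identify the bidder to be labelled $i+1$. Let $x$ be the first item of $\mathcal{T}$ of the form $(b,g_r)$ or $(b,f_r)$ with $b\notin\{1,\dots,i\}$; every item of $\mathcal{T}$ before $x$ is either an already-processed item of a bidder in $\{1,\dots,i\}$ or an item absent from every restricted instance $\supseteq I_i$, so $x$ is exactly the item processed in round $i+1$ of any restricted instance that contains it. The crucial step is to rule out that $x$ carries an $f$-valuation. Suppose $x=(b,f_r)$. In the restricted instance $J$ in which $b$ has valuation $f_r$ and every other bidder is single-minded of type $g$, the claim above forces $A$ to allocate $\universe$ to $b$, so the decision $A$ makes for $b$ when it processes $x$ in round $i+1$ is $\universe$. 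In the restricted instance $I^f$ in which $b$ still has $f_r$ but all other bidders outside $\{1,\dots,i\}$ have $f$-valuations tied to pairwise distinct goods, $A$ again processes $x$ in round $i+1$ with the \emph{same} history as in $J$, hence makes the same decision $\universe$ for $b$; but then all goods are consumed by $b$ while the optimum of $I^f$ assigns each of the $m-i$ bidders outside $\{1,\dots,i\}$ its own singleton of value $m^2+\delta$, forcing $A$'s approximation ratio above $(1-\delta)m/2$ --- a contradiction. Therefore $x=(b,g_r)$; I then set bidder $i+1:=b$ and good $a_{i+1}:=a_r$, so that $I_{i+1}=I_i\cup\{(i+1,g_{i+1})\}$.

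Finally I would verify the two assertions for $i+1$. For any restricted $I\supseteq I_{i+1}$, the inductive hypothesis handles rounds $1,\dots,i$, and by the choice of $x$ the item $(i+1,g_{i+1})$ is the one processed in round $i+1$, so $1,\dots,i+1$ are all considered before every other bidder. The decision $D$ that $A$ makes for $i+1$ is instance-independent (the history is fixed); if $D$ contained a good $a_t$, I would take a restricted instance $\supseteq I_{i+1}$ in which another bidder outside $\{1,\dots,i+1\}$ has an $f$-valuation tied to $a_t$ (such a bidder exists since $i+1\le m/2-1$), so that this bidder receives value $0$, $A$'s total welfare is at most $m$, and the optimum is at least $m^2+\delta$, again forcing the ratio above $(1-\delta)m/2$. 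Hence $D=\emptyset$, which closes the induction. The step I expect to be the main obstacle is the exclusion of an $f$-valuation at $x$: it is the only place where the earlier claim is used, and it rests on exhibiting two restricted instances that agree on everything the priority mechanism observes up to round $i+1$ yet have very different optima.
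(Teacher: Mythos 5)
Your reconstruction follows the route the paper intends (the paper itself omits the argument and defers to the induction of \cite[Lemma~6]{BL10}, with their Lemma~5 replaced by the claim about $f$-type bidders established just before the proposition): the instance-independence of the first $i$ rounds, the exclusion of an $f$-valuation at the head of the order via the two instances $J$ and $I^f$, and the forcing of the decision $\emptyset$ via a single $f$-bidder tied to a good in $D$ are all correct, and these are exactly the places where the approximation hypothesis and the earlier claim are needed.

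There is, however, one case your induction does not close: the first item $x=(b,g_r)$ of $\mathcal{T}$ belonging to a bidder outside $\{1,\dots,i\}$ may demand a good $a_r$ that has already been labelled, i.e.\ $a_r\in\{a_1,\dots,a_i\}$. Nothing prevents the algorithm from ordering such an item first (your $\emptyset$-forcing argument still applies to it, so no contradiction arises from its existence), and your step ``set good $a_{i+1}:=a_r$'' then clashes with the injectivity of the labelling. This is not cosmetic: the final contradiction in the proof of Theorem~\ref{t:bidders_lb} needs the optimum of $I'\cup\{(j,g_m):\, m/2\le j\le m\}$ to equal $m/2$, which requires $a_1,\dots,a_{m/2-1},a_m$ to be pairwise distinct; if the prefix keeps re-using goods, the optimum of that final instance can be as small as a constant and no contradiction with the ratio $(1-\delta)m/2$ follows. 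Nor can you simply skip such items and take the first fresh-good or $f$-type item instead, because the proposition quantifies over \emph{all} restricted instances $I\supseteq I_{i+1}$, and an instance containing a skipped item would process that item, not bidder $i+1$, in round $i+1$. The repair is to strengthen the inductive statement with a growing set $B_i$ of forbidden items, consisting only of $g$-bids for already-labelled goods: at step $i+1$ scan $\mathcal{T}$ for the first non-forbidden restricted item of an unprocessed bidder, rule out the $f$-case exactly as you do, place any intervening used-good $g$-items into $B_{i+1}$, and restrict the conclusion to instances disjoint from $B_i$. The final instance in the theorem's proof is unaffected, since all its non-prefix bids demand the fresh good $a_m$ and hence avoid every $B_i$.
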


Using Proposition~\ref{prop:half_sequence}, we can now complete the proof of the lemma. Let $I' = \{ (j, g_j) : 1 \leq j \leq m/2 -1\}$ be the instance $I_i$ defined in Proposition~\ref{prop:half_sequence} for $i = m/2-1$, and let $I = I' \union \{ (j, g_{m}) : m/2 \leq j \leq m\}$. We note that the optimal social welfare for $I$ is $m/2$, and that $I$ is a restricted instance such that $I' \subseteq I$, as required by Proposition~\ref{prop:half_sequence}. Therefore, mechanism $A$ considers bidders $1, \ldots, m/2 - 1$ first and allocates $\emptyset$ to each of them. Since bidders $m/2, \ldots, m$ are all single-minded for good $a_m$, the social welfare of $A$ for instance $I$ is at most $1$, which contradicts the hypothesis that the approximation ratio of $A$ is $(1-\delta)m/2$.
\end{proof}

\end{document}